\numberwithin{equation}{section}
\definecolor{LightCyan}{rgb}{0.8, 0.9, 1}
\newlength{\Oldarrayrulewidth}
\newcommand{\cellcolorcontents}[2]{%
  \begingroup
  \setlength{\fboxsep}{0pt}%
  \colorbox{#1}{\strut #2}%
  \endgroup
}
\newcolumntype{C}[1]{>{\collectcell\cellcolorcontents{LightCyan}}c<{#1}} %
\definecolor{LightCyan}{rgb}{0.8, 0.9, 1}
\newcolumntype{b}{>{\columncolor{LightCyan}\hspace{0pt}}c}
\definecolor{cadetgrey}{rgb}{0.57, 0.64, 0.69}
\newcolumntype{g}{>{\columncolor{cadetgrey}\hspace{0pt}}c}
\newcommand{\bea}{\begin{eqnarray}}
\newcommand{\eea}{\end{eqnarray}}
\def\({\left(}
\def\){\right)}
\def\[{\left[}
\def\]{\right]}
\definecolor{lightyellow}{rgb}{1.0, 0.95, 0.7}
\definecolor{Blue}{rgb}{0, 0, 0.8}
\definecolor{blue}{rgb}{0,0,1}
\definecolor{darkgreen}{rgb}{0,0.40,0}
\definecolor{firebrick}{rgb}{0.698,0.133,0.133}
\definecolor{colorA}{rgb}{1,0,0}
\definecolor{colorB}{rgb}{0,0.3,1}
\definecolor{colorC}{rgb}{0.9,0.8,0.2}
\definecolor{colorD}{rgb}{0,0.65,0}
\definecolor{lesslightgray}{rgb}{0.5,0.5,0.5}
\definecolor{light-gray}{gray}{0.95}
\let\tilde\widetilde
\let\hat\widehat
\newcommand{\diag}{\mathop{\rm{diag}}}
\newcommand{\argmax}{\mathop{\mathrm{argmax}}}
\newcommand{\sT}{ \mathsf{T} }
\def\R{\mathbb{R}}
\let\cite\citep 
\def\th@remark{%
  \thm@headfont{\bfseries}%
  \normalfont %
  \thm@preskip\parskip %
  \thm@postskip\parskip %
}
\theoremstyle{definition}
\newtheorem{theorem}{Theorem}[section]
\newtheorem{lemma}{Lemma}[section]
\newtheorem{corollary}{Corollary}[theorem]
\theoremstyle{definition}
\newtheorem{definition}{Definition}[section]
\theoremstyle{remark}
\newtheorem{remark}{Remark}[section]
\newtheorem{assumption}{Assumption}[section]
\newtheorem{problem}{Problem}
\crefname{theorem}{Theorem}{Theorems}
\crefname{proposition}{Proposition}{Propositions}
\crefname{lemma}{Lemma}{Lemmas}
\crefname{corollary}{Corollary}{Corollaries}
\crefname{definition}{Definition}{Definitions}
\crefname{assumption}{Assumption}{Assumptions}
\crefname{remark}{Remark}{Remarks}
\crefname{problem}{Problem}{Problems}
\crefname{property}{Property}{property}
\numberwithin{equation}{section}
\numberwithin{theorem}{section}
\numberwithin{proposition}{section}
\numberwithin{definition}{section}
\numberwithin{lemma}{section}
\numberwithin{assumption}{section}
\numberwithin{remark}{section}
\newcommand*{\annot}[1]{\tag*{\footnotesize{\textcolor{black!50}{\big(#1\big)}}}}
\let\save@mathaccent\mathaccent
\newcommand*\if@single[3]{%
    \setbox0\hbox{${\mathaccent"0362{#1}}^H$}%
    \setbox2\hbox{${\mathaccent"0362{\kern0pt#1}}^H$}%
    \ifdim\ht0=\ht2 #3\else #2\fi
}
\newcommand*\rel@kern[1]{\kern#1\dimexpr\macc@kerna}
\newcommand*\widebar[1]{\@ifnextchar^{{\wide@bar{#1}{0}}}{\wide@bar{#1}{1}}}
\newcommand*\wide@bar[2]{\if@single{#1}{\wide@bar@{#1}{#2}{1}}{\wide@bar@{#1}{#2}{2}}}
\newcommand*\wide@bar@[3]{%
    \begingroup
    \def\mathaccent##1##2{%
        \let\mathaccent\save@mathaccent
        \if#32 \let\macc@nucleus\first@char \fi
        \setbox\z@\hbox{$\macc@style{\macc@nucleus}_{}$}%
        \setbox\tw@\hbox{$\macc@style{\macc@nucleus}{}_{}$}%
        \dimen@\wd\tw@
        \advance\dimen@-\wd\z@
        \divide\dimen@ 3
        \@tempdima\wd\tw@
        \advance\@tempdima-\scriptspace
        \divide\@tempdima 10
        \advance\dimen@-\@tempdima
        \ifdim\dimen@>\z@ \dimen@0pt\fi
        \rel@kern{0.6}\kern-\dimen@
        \if#31
        \overline{\rel@kern{-0.6}\kern\dimen@\macc@nucleus\rel@kern{0.4}\kern\dimen@}%
        \advance\dimen@0.4\dimexpr\macc@kerna
        \let\final@kern#2%
        \ifdim\dimen@<\z@ \let\final@kern1\fi
        \if\final@kern1 \kern-\dimen@\fi
        \else
        \overline{\rel@kern{-0.6}\kern\dimen@#1}%
        \fi
    }%
    \macc@depth\@ne
    \let\math@bgroup\@empty \let\math@egroup\macc@set@skewchar
    \mathsurround\z@ \frozen@everymath{\mathgroup\macc@group\relax}%
    \macc@set@skewchar\relax
    \let\mathaccentV\macc@nested@a
    \if#31
    \macc@nested@a\relax111{#1}%
    \else
    \def\gobble@till@marker##1\endmarker{}%
    \futurelet\first@char\gobble@till@marker#1\endmarker
    \ifcat\noexpand\first@char A\else
    \def\first@char{}%
    \fi
    \macc@nested@a\relax111{\first@char}%
    \fi
    \endgroup
    }
\let\bar\widebar
\newcommand*{\redefinesymbolwitharg}[1]{%
  \expandafter\let\csname ltx#1\expandafter\endcsname\csname #1\endcsname
  \@namedef{#1}{\@ifnextchar{^}{\@nameuse{#1@}}{\@nameuse{#1@}^{}}}%
  \expandafter\def\csname #1@\endcsname^##1##2{%
     \csname ltx#1\endcsname\ifx!##1!\else^{##1}\fi\mathopen{}\mathclose\bgroup\left(##2\aftergroup\egroup\right)
     }%
}
\newcommand*{\email}[1]{\footnote{\href{mailto:#1}{\texttt{#1}}}}
\titlespacing{\section}{0pt}{*1}{*-1}
\titlespacing{\subsection}{0pt}{*1}{*-1}
\titlespacing{\subsubsection}{0pt}{*1}{*-1}
\titlespacing{\paragraph}{0pt}{*1}{1em} %
\setlist[itemize]{
  topsep=0.08em,
  partopsep=0.08em,
  itemsep=0.08em
}
\setlist[enumerate]{
  topsep=0.08em,
  partopsep=0.08em,
  itemsep=0.08em
}
\begin{document}

\begin{titlepage}

\begin{flushright}
Last Update: \today
\end{flushright}

\vskip 2.5em
\begin{center}

{
\LARGE \bfseries %
\begin{spacing}{1.15} %
Latent Variable Estimation in Bayesian Black-Litterman Models
\end{spacing}
}

\vskip 1em
Thomas Y.L. Lin$^{\dagger}$\email{b12202026@ntu.edu.tw}
\quad
Jerry Yao-Chieh Hu$^{\ddag}$\email{jhu@u.northwestern.edu}
\quad
Paul W. Chiou$^{\flat}$\email{w.chiou@northeastern.edu}
\quad
Peter Lin$^{\natural\S}$\footnote{\href{mailto:peter.lin@jhu.edu}{\texttt{peter.lin@jhu.edu}}}

\vskip 1em

{\small
\begin{tabular}{ll}
$^\dagger\;$Department of Physics, National Taiwan University, Taipei 106319, Taiwan\\
$^\ddag\;$Department of Computer Science, Northwestern University, Evanston, IL 60208, USA\\
\hphantom{$^\ddag\;$}Center for Foundation Models and Generative AI, Northwestern University, Evanston, IL 60208, USA\\
$^\flat\;$D'Amore-McKim School of Business, Northeastern University, Boston, MA 02115, USA\\
$^\natural\;$Whiting School of Engineering, Johns Hopkins University, Baltimore, MD 21218, USA\\
$^\S\;$Gamma Paradigm Capital, New York 10019, NY, USA
\end{tabular}}

\def\thefootnote{}
\footnotetext{Preliminary version accepted at International Conference on Machine Learning (ICML) 2025.}

\end{center}

\noindent
We revisit the Bayesian Black–Litterman (BL) portfolio model and remove its reliance on subjective investor views. 
Classical BL requires an investor “view”: a forecast vector $q$ and its uncertainty matrix $\Omega$ that describe how much a chosen portfolio should outperform the market.
Our key idea is to treat $(q,\Omega)$ as latent variables and learn them from market data within a single Bayesian network.
Consequently, 
the resulting posterior estimation admits closed-form expression, enabling fast inference and stable portfolio weights.
Building on these, we propose two mechanisms to capture how features interact with returns: shared-latent parametrization and feature-influenced views; both recover classical BL and Markowitz portfolios as special cases.
Empirically, on 30-year Dow-Jones and 20-year sector-ETF data, we improve Sharpe ratios by 50\% and cut turnover by 55\% relative to Markowitz and the index baselines.
This work turns BL into a fully data-driven, view-free, and coherent Bayesian framework for portfolio optimization.

\vfill
\textbf{Keywords:} Black-Litterman Models, Portfolio Optimization, Bayesian Networks, Graphical Models, Hierarchical Models, Latent Variable Models, Uncertainty Quantification

\end{titlepage}

{
\setlength{\parskip}{0em}
\setcounter{tocdepth}{2}
\tableofcontents
}
\setcounter{footnote}{0}
\thispagestyle{empty}

\clearpage
\setcounter{page}{1}

\section{Introduction}
\label{sec:intro}
We propose a Bayesian reformulation of the Black-Litterman model for portfolio optimization.
Our motivation comes from the early works of the model \cite{black1992global,lee2000theory,salomons2007black,idzorek2007step} where human experts are required to specify the value of investor views and corresponding uncertainties $(q, \Omega)$.
For example, the $i$-th views ``the 2nd asset will outperform the 1st asset by $9\pm3$\%'' is encoded as $\{P_i=[-1,1,...,0]; q_i=0.09; \Omega_{ii}=0.03^2\}$.
Across decades, this heuristic framework attracts many research \cite{beach2007application,palomba2008multivariate,duqi2014black,silva2017more,deng2018generalized,kara2019hybrid,kolm2021factor} working on this estimation.
Among them, one common approach is to take asset features and generate $(q,\Omega)$ by external models.
However, relying on external estimators in these methods leads to incoherent parameter learning and error propagation across the separate models.

In this work, we offer a new approach. 
Our reformulation recasts the Black-Litterman model as a Bayesian network to integrate the features. 
Under different scenarios, we identify two potential effects caused by the features and accordingly present this network as specific models.
Under the scenarios without subjective investor views, this network treats $q$ and $\Omega$ as latent --- rather than externally estimated --- parameters, and thereby estimates posterior distribution over both asset returns $r$ and their parameters $\theta$ directly from data, i.e., features.
In summary, our approach provides a unification of feature integration and parameter inference within a single framework, ensuring coherent estimations and mitigating error propagation.

\textbf{Contributions.}
Our contributions include:
\begin{itemize}
\item \textbf{Eliminating Subjective Human Input.} 
We introduce a Bayesian network formulation of the Black-Litterman model, treating $(q, \Omega)$ as latent variables. This enables direct estimation from feature data, bypassing heuristic human inputs and potential bias while subsuming the classical Black-Litterman model as a special case.

\item \textbf{Unified Feature Integration and Parameter Inference.} 
Unlike prior works, our approach avoids error propagation from external estimators by unifying feature integration and parameter inference into a single framework.

\item \textbf{Empirical Outperformance.}
{
Our model achieves a 49.8\% mean improvement in Sharpe ratios over the Markowitz model (0.66–0.87 vs. 0.35–0.62) and market indices (S\&P 500, DJIA) on 20-year and 30-year datasets, respectively.
It achieves a 55.1\% reduction in turnover rates while showing robustness to hyperparameters.
}

\end{itemize}

\textbf{Organization.}
\cref{sec:preliminary} includes preliminaries. 
\cref{sec:method} introduces our models and their theoretical analysis.
\cref{sec:exp} presents empirical studies to backup our work.
\cref{sec:practitioner} offers a practical guide for our models.
We defer conclusions and related works to \cref{sec:conclusion,sec:related_works}.

\section{Preliminaries}
\label{sec:preliminary}
Consider $m$ assets, and let $r\in\R^m$ be the returns of the $m$ assets. 
Consider $k$ sets of specified portfolio weights on the $m$ assets, and encode each weight into each row of a portfolio weight matrix $P \in \R^{k\times m}$.
Let $q \in \R^k$ be the investor views on the $k$ specified portfolio returns and encode the variance of each view into diagonal elements of a diagonal uncertainty matrix $\Omega \in \R^{k \times k}$. 
Larger $\Omega_{ii}$ implies greater uncertainty in $(P \mathbb{E}[r])_i$ and $\Omega_{ii} = 0$ implies absolute certainty.

\label{MVOF}
\citet{markowitz1952portfolio} introduces the theory of portfolio optimization, suggesting a suitable portfolio weight is the optimal trade-off between the mean and variance of the portfolio.
To be concrete, we provide a formal definition below.
\begin{definition}[Unconstrained Risk-Adjusted Mean-Variance Optimization]
\label{def:UMVO}
Let $r \in \mathbb{R}^m$ be the returns of the $m$ assets and $\tilde{r}$ denote the unobserved (or future) asset returns. 
The goal of the optimization is to find $w \in \R^m$ maximizing the objective function:
\begin{align*}
\max_{w} \left\{ w^T \mathbb{E}[\tilde{r}] - \frac{\delta}{2}w^T \mathrm{Cov}[\tilde{r}] w \right\},
\end{align*}
where $\delta \in [0,\infty]$ is a risk-adjusted coefficient.
\end{definition}

One major challenge of this framework is the reliance on estimating $\tilde{r}$:
\begin{problem}[Predictive Estimation of Unobserved Asset Returns]
\label{prob:Pred}
Let $r \in \R^m$ represent the returns of the $m$ assets and $\tilde{r}$ denote the unobserved (or future) asset returns.
Precisely, given observed data $D$, the goal is to estimate unobserved asset returns $\tilde{r} \sim p(r|D)$.
\end{problem}

In this work, we refer to methods addressing such estimation challenge (\cref{prob:Pred}) as {portfolio models} or simply {portfolios}. Following the predictive estimation of $\tilde{r}$, we apply the mean-variance optimization framework to obtain a decision vector $w$, referred to as {portfolio weights}.

A basic approach, termed the {traditional Markowitz model} \cite{markowitz1952portfolio}, involves predicting the expected returns and the covariance matrix of asset returns directly from historical data using the sample mean and sample covariance. This method relies on the assumption that historical estimates are accurate representations of future parameters. However, in practice, estimation errors in the expected returns and covariance matrix lead to extreme and highly sensitive portfolio weights \cite{michaud1989markowitz, demiguel2009optimal}. To mitigate this issue, the {Black-Litterman model} integrates the market equilibrium with investor views by {Black-Litterman formula}, thereby producing more stable and diversified portfolio weights \cite{black1992global}. The following context elaborates on this model in detail.

\textbf{Black-Litterman.}
Black-Litterman (BL) model outputs a posterior of the asset returns mean $\mathbb{E}[r]$, termed Black-Litterman formula, by Bayes' theorems, taking investor views and market equilibrium price as input. Upon this, the model offers a predictive estimate $\tilde{r}$ on asset returns:

\begin{theorem}[Black-Litterman (BL) Formula and Predictive Estimation, Theorem~1 of  \cite{satchell2007demystification}]
\label{thm:classical_BL}
Let $r \in \mathbb{R}^m$ be the vector of asset returns with covariance $\Sigma \coloneqq \mathrm{Cov}[r]$.  
Let $P \in \mathbb{R}^{k \times m}$ be the portfolio weight matrix for $k$ specified portfolios, and $(q, \Omega) \in \mathbb{R}^k \times \mathbb{R}^{k \times k}$ represent investor views and their uncertainty. 
Let $\Pi \in \mathbb{R}^m$ represent the market equilibrium price and $\tau > 0$ be a scaling factor.
Assume a prior $P \,\mathbb{E}[r] \sim {N}(q, \Omega)$ and a likelihood $\Pi \mid \mathbb{E}[r] \sim {N}\bigl(\mathbb{E}[r], \tau\,\Sigma\bigr)$, then the posterior mean of $r$ given $\Pi$ is
\begin{align*}
    \mathbb{E}[r \mid \Pi]
    &\sim
    {N}\Bigl(
      G_{\tau}^{-1} \bigl[(\tau\Sigma)^{-1}\Pi + P^\top\Omega^{-1} \,q\bigr],
      G_{\tau}^{-1}
    \Bigr),
\end{align*}
where $G_{\tau} \coloneqq (\tau\,\Sigma)^{-1} + P^\top\,\Omega^{-1}\,P$.
Moreover, the predictive distribution $\tilde{r} \coloneqq r|\Pi$ is 
\begin{align*}
    \tilde{r}  &\sim
    {N}\Bigl(
      G_{\tau}^{-1} \bigl[(\tau\Sigma)^{-1}\Pi + P^\top\Omega^{-1} q\bigr],
      \Sigma + G_{\tau}^{-1}
    \Bigr).
\end{align*}
\end{theorem}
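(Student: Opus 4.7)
The plan is to derive both statements via Gaussian conjugacy. First I would combine the prior $P\,\mathbb{E}[r]\sim N(q,\Omega)$ with the likelihood $\Pi\mid\mathbb{E}[r]\sim N(\mathbb{E}[r],\tau\Sigma)$ to obtain the posterior of $\mathbb{E}[r]\mid\Pi$ in closed form; then I would marginalize the sampling model $r\mid\mathbb{E}[r]\sim N(\mathbb{E}[r],\Sigma)$ against that posterior to read off the predictive law of $\tilde r\coloneqq r\mid\Pi$.

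For the posterior, I would work in information (natural-parameter) form. Dropping additive constants,
\begin{align*}
-2\log p(\mathbb{E}[r]\mid\Pi) = (P\,\mathbb{E}[r]-q)^\top\Omega^{-1}(P\,\mathbb{E}[r]-q) + (\Pi-\mathbb{E}[r])^\top(\tau\Sigma)^{-1}(\Pi-\mathbb{E}[r]).
\end{align*}
Expanding and collecting the quadratic-in-$\mathbb{E}[r]$ term yields the posterior precision $G_\tau=(\tau\Sigma)^{-1}+P^\top\Omega^{-1}P$, while the linear term yields the natural parameter $(\tau\Sigma)^{-1}\Pi+P^\top\Omega^{-1}q$. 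Completing the square then produces the posterior mean $G_\tau^{-1}\bigl[(\tau\Sigma)^{-1}\Pi+P^\top\Omega^{-1}q\bigr]$ and covariance $G_\tau^{-1}$, which is the first claim.

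For the predictive distribution I would use the decomposition $r=\mathbb{E}[r]+\varepsilon$ with $\varepsilon\sim N(0,\Sigma)$ independent of $\mathbb{E}[r]$. Since $\varepsilon$ is independent of $\Pi$ as well, the whole hierarchy is jointly Gaussian and so is $r\mid\Pi$; it therefore suffices to compute its first two moments via the laws of total expectation and total variance:
\begin{align*}
\mathbb{E}[r\mid\Pi]=\mathbb{E}\bigl[\mathbb{E}[r]\bigm|\Pi\bigr],\qquad \mathrm{Cov}(r\mid\Pi)=\Sigma+\mathrm{Cov}\bigl(\mathbb{E}[r]\bigm|\Pi\bigr).
\end{align*}
Substituting the posterior moments from the previous step gives exactly the stated predictive mean and covariance $\Sigma+G_\tau^{-1}$.

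The main subtlety is that the prior $P\,\mathbb{E}[r]\sim N(q,\Omega)$ only pins down $\mathbb{E}[r]$ along $\mathrm{Row}(P)$ and is improper whenever $k<m$: $P^\top\Omega^{-1}P$ is then rank-deficient and does not by itself define a proper density on $\mathbb{R}^m$. The remedy is exactly why I would work in information form: the full-rank likelihood precision $(\tau\Sigma)^{-1}$ guarantees that $G_\tau$ is invertible, so that the completed square defines a bona fide Gaussian posterior. Beyond this bookkeeping, both claims reduce to routine Gaussian algebra.
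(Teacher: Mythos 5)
Your proposal is correct, and it is worth noting that the paper never proves this theorem directly---it is imported verbatim as Theorem~1 of \citet{satchell2007demystification}---so the fair comparison is with the paper's proof of the structurally identical \cref{lemma:BLB_prediction} in \cref{pf:lemma_2.1_BLB}. Your first step (information form, collect the quadratic to read off the precision $G_\tau$ and natural parameter $(\tau\Sigma)^{-1}\Pi + P^\top\Omega^{-1}q$, complete the square) is exactly the paper's computation with $(\theta_0,\Sigma_0)$ replaced by $(\Pi,\tau\Sigma)$. The second step diverges in technique: the paper evaluates the marginalization $\int N(\tilde r;\theta,\Sigma)\,N\bigl(\theta;\mu,G^{-1}\bigr)\,d\theta$ explicitly via its Gaussian-product identity (\cref{lemma:integral_product_gaussians}), whereas you decompose $r=\mathbb{E}[r]+\varepsilon$ with $\varepsilon\sim N(0,\Sigma)$ independent and invoke total expectation and total variance; since $r\mid\Pi$ is then a sum of two independent Gaussians, Gaussianity follows by convolution with no integral computed, so the two routes are equivalent and yours is marginally more structural. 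Two things you do that the paper does not: first, you correctly make explicit the sampling assumption $r\mid\mathbb{E}[r]\sim N(\mathbb{E}[r],\Sigma)$ together with conditional independence of $r$ and $\Pi$ given $\mathbb{E}[r]$, which the theorem statement leaves tacit but which is indispensable for the predictive covariance $\Sigma+G_\tau^{-1}$; second, your observation that for $k<m$ the prior $P\,\mathbb{E}[r]\sim N(q,\Omega)$ is improper on $\mathbb{R}^m$ (rank-deficient $P^\top\Omega^{-1}P$) and that only the full-rank likelihood precision $(\tau\Sigma)^{-1}$ renders $G_\tau$ invertible is a genuine subtlety the paper glosses over, and your information-form bookkeeping handles it correctly.
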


The Black-Litterman formula (\cref{thm:classical_BL}) is a well-known result of the Black-Litterman model. 
However, the derivation lacks explanations of the assumptions used. 
Most early works, including the original paper \cite{black1992global}, provide heuristic derivation, while many \cite{lee2000theory,salomons2007black,idzorek2007step} share different underlying assumptions. 
This inconsistency leads to confusion for both the analysis of the model and a rigorous interpretation with Bayesian statistics.
To solve the issues, the Black-Litterman-Bayes (BLB) model~\cite{kolm2017bayesian} provides a reformulation of the Black-Litterman model.

\textbf{Black-Litterman-Bayes (BLB).}
\citet{kolm2017bayesian} introduce the Black-Litterman-Bayes model to perform Bayesian inference on $\theta$, treating the market equilibrium as prior and the investor views as likelihood:
\begin{definition}[BLB Model $(\theta,r,q,\Omega)$, Modified from Definition 1 of \cite{kolm2017bayesian}]
\label{def:BLB}
Let $r\in\R^m$ be the returns of the $m$ assets, parametrized by $\theta$, with $r \sim p(r | \theta)$. 
Let $q \in \R^k$ represent the views on the returns of the $k$ specified portfolio and $\Omega \in \R^{k \times k}$ be the uncertainty matrix.
The Black-Litterman-Bayes (BLB) model is a portfolio model composed of three fundamental density functions:
\begin{enumerate}
    \item {Parametrized Asset Returns}: $p(r | \theta)$, the distribution of asset returns given the parameter.
    \item {Prior}: $\pi(\theta)$, representing market equilibrium.
    \item {Likelihood}: $L(\theta | q, \Omega) 
    \coloneqq p(q, \Omega | \theta)$, capturing the relationship between the parameter and the investor views.
\end{enumerate}
\end{definition}

\cref{supplement:prior_likelihood_BLB} details modelings of the prior and likelihood. 

To recap, the {Black-Litterman-Bayes model} is a {portfolio model} aiming to address the estimation challenge of asset returns (\cref{prob:Pred}). 
It solves the problem by using Bayesian inference to obtain the posterior of the model parameter $p(\theta | q,\Omega)$, and subsequently produce the predictive estimation of unobserved asset returns $\tilde{r}$.
Following the estimation, we apply the mean-variance optimization framework (\cref{def:UMVO}) to obtain the {portfolio weights} $w_{\rm BLB}$.

The intuition of the model is to simulate the dynamics of how the market adapts to new information after observing it. Here, the market equilibrium represents the initial state, and the investor views approximates the unobserved information.
When there is no investor views, the model remains in the initial state of market equilibrium.

Here we show the Black-Litterman formula (\cref{thm:classical_BL}) is the posterior estimation on $\theta$ of the Black-Litterman-Bayes model under \cref{assumption:market_cap_prior,assumption:noisy_views}:
\begin{lemma}[Estimations by BLB Model]
    \label{lemma:BLB_prediction}
    Let the market capitalization weight on $m$ assets be $w_{\rm cap} \in \R^m$ and $\delta \in [0,\infty]$ be a risk-adjusted coefficient.
    Let $P \in \R^{k\times m}$ be the portfolio weight matrix for $k$ specified portfolios.
    Given a BLB model $(\theta,r,q,\Omega)$ (\cref{def:BLB}), assume
    \begin{align}
    &r\sim N(\theta, \Sigma), \label{BLB_r_assump}\\
    &\theta \sim N(\theta_0, \Sigma_0), \label{BLB_theta_assump} \\
    &P \theta = q + \epsilon, \quad \epsilon \sim  N(0, \Omega),     \label{BLB_qtheta_assump}
    \end{align}
    where $\Sigma, \Sigma_0 \in \R^{m \times m}$ are given intrinsic and prior covariance.
    The {posterior mean} is
    \begin{align}
    \label{eqn:BLB_estimation}
    p(\theta | q, \Omega) = N\!\left( \theta; G^{-1} \!\left( \Sigma_0^{-1} \theta_0 + P^\sT \Omega^{-1} q \right), G^{-1} \right),
    \end{align}
    where $G \coloneqq \Sigma_0^{-1} + P^\sT \Omega^{-1} P$.
    The {predictive estimation of unobserved asset returns} $\tilde{r} \coloneqq r | q, \Omega$ is
    \begin{align}
    \label{eqn:BLB_prediction}
    \tilde{r}
    \!\sim\! N \!\left( \tilde{r}; G^{-1} \! \left[ \delta (\Sigma_0^{-1}\Sigma \!+\! I) w_{\rm cap}  + P^\sT\Omega^{-1} q \right]\!, \Sigma +G^{-1}\right). \! 
    \end{align}
\end{lemma}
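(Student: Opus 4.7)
The plan is: (i) apply Gaussian--Gaussian conjugate updating to obtain the posterior of $\theta$ from the prior~\eqref{BLB_theta_assump} and the view-likelihood~\eqref{BLB_qtheta_assump}, (ii) marginalize $r \mid \theta$ from~\eqref{BLB_r_assump} over this posterior to obtain the predictive distribution of $\tilde{r}$, and (iii) invoke the market-equilibrium prior (\cref{assumption:market_cap_prior} in \cref{supplement:prior_likelihood_BLB}) to rewrite $\theta_0$ in terms of $w_{\rm cap}$.

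For step (i), I would recast \eqref{BLB_qtheta_assump} as the likelihood $q \mid \theta \sim N(P\theta,\Omega)$. Multiplying this by the Gaussian prior $N(\theta_0,\Sigma_0)$ and collecting the quadratic and linear terms in $\theta$ inside the exponent identifies the posterior precision as $\Sigma_0^{-1} + P^\sT \Omega^{-1} P = G$ and, after completing the square, the posterior mean as $G^{-1}(\Sigma_0^{-1}\theta_0 + P^\sT \Omega^{-1} q)$. This gives \eqref{eqn:BLB_estimation} directly.

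For step (ii), since $r \mid \theta \sim N(\theta,\Sigma)$, I would decompose $r = \theta + \eta$ with $\eta \sim N(0,\Sigma)$ independent of $\theta$, and take expectation under the posterior of $\theta$ obtained in (i). By the law of total variance (equivalently, by Gaussian marginalization), $\tilde{r} \mid q,\Omega$ is Gaussian with mean equal to the posterior mean $G^{-1}(\Sigma_0^{-1}\theta_0 + P^\sT \Omega^{-1} q)$ and covariance $\Sigma + G^{-1}$; this yields the covariance appearing in~\eqref{eqn:BLB_prediction}.

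For step (iii), to match the predictive mean $G^{-1}[\delta(\Sigma_0^{-1}\Sigma + I) w_{\rm cap} + P^\sT\Omega^{-1}q]$ in~\eqref{eqn:BLB_prediction}, I would substitute the market-equilibrium specification $\theta_0 = \delta(\Sigma + \Sigma_0)w_{\rm cap}$, which comes from reverse mean--variance optimization using the predictive total variance $\Sigma + \Sigma_0$. This gives $\Sigma_0^{-1}\theta_0 = \delta(\Sigma_0^{-1}\Sigma + I)w_{\rm cap}$, closing the proof. The main obstacle, though not algebraically deep, is this final substitution: the lemma tacitly folds the reverse-optimization identity into the predictive mean, so one must use the correct total-variance ($\Sigma + \Sigma_0$) convention rather than the bare-$\Sigma$ convention familiar from classical Markowitz reverse optimization.
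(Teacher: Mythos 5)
Your proposal is correct and follows essentially the same route as the paper: complete the square on the prior--likelihood product to get the posterior $N\bigl(G^{-1}(\Sigma_0^{-1}\theta_0 + P^\sT\Omega^{-1}q),\, G^{-1}\bigr)$, marginalize $\theta$ out of $r\mid\theta \sim N(\theta,\Sigma)$ to get predictive covariance $\Sigma + G^{-1}$, and then substitute the reverse-optimization identity $\theta_0 = \delta(\Sigma+\Sigma_0)w_{\rm cap}$ (\cref{lemma:Rev_Opt}) so that $\Sigma_0^{-1}\theta_0 = \delta(\Sigma_0^{-1}\Sigma + I)w_{\rm cap}$. The only cosmetic difference is that you phrase the marginalization as the decomposition $r = \theta + \eta$ with the law of total variance, whereas the paper evaluates the convolution integral explicitly via \cref{lemma:integral_product_gaussians}; these are the same computation, and your closing remark about using the total-variance convention $\Sigma+\Sigma_0$ in the reverse optimization matches the paper's \cref{lemma:Rev_Opt} exactly.
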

\begin{proof}
See \cref{pf:lemma_2.1_BLB} for a detailed proof.
\end{proof}
\cref{lemma:BLB_prediction} provides a solution to \cref{prob:Pred}. 
With the predictive estimation of asset returns $\tilde{r}$, the mean-variance optimization framework (\cref{def:UMVO}) determines the portfolio weights $w_{\rm BLB}$.
However, it relies on the subjective investor views and corresponding uncertainty $(q,\Omega)$\footnote{Besides providing the Bayesian inference formulation of the Black-Litterman model (\cref{def:BLB}), the original work \cite{kolm2017bayesian} and its follow-up work \cite{kolm2021factor} consider external data, specifically factors in Arbitrage Pricing Theory (APT) model. 
Yet, these works focus on how their Black-Litterman-Bayes (BLB) approach applies to the APT model and do not address the issues of subjective investor views.}.

To address this issue, in this work, we propose a Bayesian reformulation of the Black-Litterman model without the need for subjective $(q, \Omega)$ from humans.

\section{Methodology}
\label{sec:method}
In this work, we recast the Black-Litterman model as Bayesian networks for principled estimation of both investor views and asset returns, eliminating the need for subjective inputs. 
This Bayesian formulation serves as a conceptual baseline for subsequent portfolio model specifications.

In \cref{sec:BL_network}, we introduce the Bayesian Black-Litterman network, which underpins the Black-Litterman-Bayes model \cite{kolm2017bayesian}. 
Building on this, \cref{sec:FI_BL_network} extends the network to incorporate external features, yielding the feature-integrated Black-Litterman network.

We then examine two scenarios:
\begin{itemize}
    \item In \cref{sec:BL model with Features and Views}, where investor views are observed, we illustrate the corresponding network (\cref{fig:mix_effects}) and define the \underline{M}ixed-effect \underline{B}lack-\underline{L}itterman (M-BL) model (\cref{def:M-BL}). 

    \item In \cref{sec:BL model with Features}, where no subjective views are given, we present two alternative probabilistic graphical models (\cref{fig:seperate_effects}) and define the \underline{S}hared-\underline{L}atent-\underline{P}arametrization \underline{B}lack-\underline{L}itterman (SLP-BL) and \underline{F}eature-\underline{I}nfluenced-\underline{V}iews \underline{B}lack-\underline{L}itterman (FIV-BL) models (\cref{def:SLP-BL,def:FIV-BL}).
\end{itemize}

\subsection{Bayesian Black-Litterman Network}
\label{sec:BL_network}
We introduce a Black-Litterman network $(\theta, r, q, \Omega)$ with the two causal relationships.
First, the asset returns $r$ are realizations of the process governed by its parameter $\theta$. Second, the investor views $q$ are formed based on $\theta$ with an associated error term $\epsilon \sim N(0,\Omega)$.
We visualize this conceptual network in \cref{fig:triangle_bn}.

\begin{figure}[!ht]
    \centering
    \begin{tikzpicture}
        \node[latent] (theta) {$\theta$};   
        \node[latent, below left=1.5cm and 1cm of theta] (r) {$r$};
        \node[latent, below right=1.5cm and 1cm of theta] (q_Omega) {$q,\Omega$}; 

        \edge {theta} {r}; 
        \edge {theta} {q_Omega}; 
    \end{tikzpicture}
    \caption{Black-Litterman network $(\theta, r, q, \Omega)$.}
    \label{fig:triangle_bn}
\end{figure}
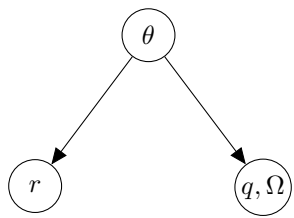

\subsection{Bayesian Feature-Integrated Black-Litterman Network}
\label{sec:FI_BL_network}
Building upon the Black-Litterman Network, we introduce a feature-integrated Black-Litterman network as a Black-Litterman network that integrates features $F$ and their effects.
Specifically, features $F$ exert two causal effects:
\begin{itemize}
    \item \textbf{Effect 1}: Features $F$ are extracted from the parameter $\theta$.
    \item \textbf{Effect 2}: Features $F$ influence the formation of views $q$.
\end{itemize}
We quantitatively specify Effect 1 and 2 in \cref{sec:BL model with Features and Views,sec:BL model with Features}.

\subsection{General Scenario: Features with Observed Views}\label{sec:BL model with Features and Views}

In this section, we discuss the general scenario where views are observed (\cref{prob:FVHPred}) by specifying the two causal effects introduced in \cref{sec:FI_BL_network}.
Incorporating these effects into the network, 
we showcase it in \cref{fig:mix_effects} and define the \underline{M}ixed-effect \underline{B}lack-\underline{L}itterman (M-BL)  model (\cref{def:M-BL}) based on it. 
Then, we estimate posterior distribution over both asset returns $r$ and their parameter $\theta$ (\cref{corol:M_BL_prediction_with_FV,thm:M-BL_estimation_with_FV}).

Consider the following problem:
\begin{problem}[Feature-and-Views Hybrid Predictive Estimation]
\label{prob:FVHPred}
Let $r \in \R^m$ represent the returns of the $m$ assets and $\tilde{r}$ denote the unobserved (or future) asset returns.
Let $q \in \R^k$ represent the views on returns of the $k$ portfolios.
Let $f_i \in \R^{d}$ represent the features on the $i$-th asset and $F \in \R^{m\times dm}$ be a block-diagonal matrix defined as
\begin{align*}
F \coloneqq \diag(f_1^\sT, f_2^\sT, \dots, f_m^\sT).
\end{align*}
Given $D\coloneqq (q, \Omega, F,\Omega^F)$ where $(q,\Omega,F)$ are estimated from observations of asset returns, views, and features $\{(r_{l}, q_{l}, F_{l})\}_{l=1}^{n}$ and $\Omega^F$ is the homoscedastic error matrix corresponding to the observations $\{F_{l}\}_{l=1}^{n}$,
the goal is to estimate unobserved asset returns $\tilde{r} \sim p(r|D)$.
\end{problem}

We aim to solve \cref{prob:FVHPred} by feature-integrated Black-Litterman network, incorporating a mix of two causal effects of features in \cref{sec:FI_BL_network}. 
Specifically, 
\begin{itemize}
    \item \textbf{Effect 1}: Features $F$ are extracted from the parameter $\theta$. Consequently, the features $F$, along with their error term $\epsilon^F \sim N(0,\Omega^F)$, share the common parameter $\theta$ with asset returns $r$ and investor views $q$.
    \item \textbf{Effect 2}: Features $F$ influence the formation of views $q$. Consequently, the features $F$ and the parameter $\theta$ jointly determine the views $q$ with an uncertain $\epsilon \sim N(0,\Omega)$.
\end{itemize}

We visualize the network in \cref{fig:mix_effects} under this general scenario.
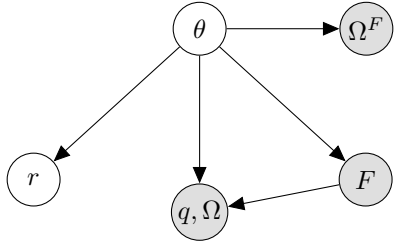
\begin{figure}[ht!]
    \centering
        \begin{tikzpicture}
        \node[latent] (theta) {$\theta$};   
        \node[latent, below left = 1.5cm and 1.7cm of theta] (r) {$r$}; 
        \node[obs, below = 1.7cm of theta] (q_Omega) {$q,\Omega$}; 
        \node[obs, below right = 1.5cm and 1.7cm of theta] (F) {$F$}; 
        \node[obs, right = 1.5cm of theta] (OmegaF) {$\Omega^F$}; 
        
        \edge {theta} {r}; 
        \edge {theta} {q_Omega}; 
        \edge {theta} {F}; 
        \edge {theta} {OmegaF};
        \edge {F}{q_Omega};
    \end{tikzpicture}
    \caption{Feature-Integrated BL network with features and observed views $(\theta, r, q, \Omega, F, \Omega^F)$.}
    \label{fig:mix_effects}
\end{figure}

\newpage
To capture \hyperref[item:eff1]{Effect 1}, we define a $\theta\leftrightarrow F$ relationship:

\begin{definition}[$\theta\leftrightarrow F$ Linear Model]
\label{def:thetaF_linear_model}
Given features $F \in \R^{m \times dm}$, let $r\in\R^m$ be the returns of the $m$ assets, parametrized by $\theta$, with $r \sim p(r | \theta)$.
Define regression intercept vector $\alpha^F \in \R^m$, regression coefficient vector $\beta^F \in \R^{dm}$, random error $\epsilon^F \in \R^m$, and error matrix $\Omega^F \in \R^{m \times m}$ such that:
\begin{align*}
    \theta =  \alpha^F + F\beta^F + \epsilon^F, \quad \epsilon^F \sim N(0, \Omega^F).
\end{align*}
\end{definition}

To capture \hyperref[item:eff2]{Effect 2}, we define a $q\leftrightarrow F\leftrightarrow \theta$ relationship:
\begin{definition}[$q\leftrightarrow F\leftrightarrow \theta$ Linear Model]
\label{def:qFtheta_linear_model}
Given features $F \in \R^{m \times dm}$ and portfolio weight matrix for $k$ specified portfolios $P \in \R^{k \times m}$, let $r\in\R^m$ be the returns of the $m$ assets, parametrized by $\theta$, with $r \sim p(r | \theta)$ and $q \in \R^k$ be the views on returns of the $k$ portfolios.
Define regression intercept vector $\alpha \in \R^m$, regression coefficient vector $\beta \in \R^{dm}$, scale constant $\gamma \in \R$, random error $\epsilon \in \R^m$, and uncertainty matrix $\Omega \in \R^{m \times m}$ such that:
\begin{align}
\label{eqn:qFtheta_model}
    q + \epsilon = P (\alpha + F\beta + \gamma \theta), \quad \epsilon \sim N(0, \Omega).
\end{align}
\end{definition}

\begin{remark}[Rationale]
    \eqref{eqn:qFtheta_model} extends the classical noisy views model $q+\epsilon= P\theta$ \cite{black1992global} to incorporate the features $F$, where the LHS remains the $k$-dimensional noisy views and the RHS generalizes the classical $P\theta$ term by introducing a feature-driven term $\alpha+F\beta$, and a scaled parameter $\gamma \theta$.
    It recovers the classical model when $\alpha=0$, $\beta=0$, and $\gamma=1$.
\end{remark}
By mixing two effects of the features characterized by the two linear models (\cref{def:thetaF_linear_model,def:qFtheta_linear_model}), we showcase the feature-integrated Black-Litterman network as the following \underline{M}ixed-effect \underline{B}lack-\underline{L}itterman (M-BL) model:
\begin{definition}[\underline{M}ixed-effect \underline{B}lack-\underline{L}itterman (M-BL) Model $(\theta,r,q,\Omega,F,\Omega^F)$]
\label{def:M-BL}
Let $r\in\R^m$ be the returns of the $m$ assets, parametrized by $\theta$, with $r \sim p(r | \theta)$. 
Let $q \in \R^k$ represent the views on the returns of the $k$ specified portfolio and $\Omega \in \R^{k \times k}$ be the uncertainty matrix.
Let $F \in \R^{m \times dm}$ be the features of the $m$ assets and error matrix $\Omega^F \in \R^{m \times m}$
The M-BL model is a portfolio model composed of four fundamental density functions:
\begin{enumerate}
    \item {Parametrized Asset Returns}: $p(r | \theta)$, the distribution of asset returns given the parameter.
    \item {Prior}: $\pi(\theta)$, representing market equilibrium.
    \item {Likelihood of Features}: $L(\theta|F,\Omega^F) \coloneqq p(F,\Omega^F|\theta)$, the $\theta\leftrightarrow F$ relationship (\cref{def:thetaF_linear_model}).
    \item {Observation Likelihood}: $L(\theta, F|q, \Omega)$ $\coloneqq p(q, \Omega|\theta, F)$, the $q\leftrightarrow F\leftrightarrow \theta$ relationship (\cref{def:qFtheta_linear_model}).
\end{enumerate}
\end{definition}
We show the posterior estimation on $\theta$ of the M-BL model:
\begin{theorem}[Parameter Estimation of the M-BL Model]
\label{thm:M-BL_estimation_with_FV}
Given a M-BL model $(\theta,r,q,\Omega,F,\Omega^F)$ (\cref{def:M-BL}) and regression parameters $(\alpha^F, \beta^F \alpha,\beta,\gamma) \in \R^m \times \R^{dm} \times \R^m \times \R^{dm} \times \R$, assume
\begin{align}
    &\theta \sim N(\theta_0, \Sigma_0), \label{eqn:MBL_theta}\\
    &\theta = \alpha^F + F \beta^F+ \epsilon^F, \quad \epsilon^F \sim  N(0, \Omega^F), \label{eqn:MBL_thetaF} \\
    &q + \epsilon = P (\alpha + F\beta + \gamma \theta), \quad \epsilon \sim N(0, \Omega). \label{eqn:MBL_qFtheta}
\end{align}
Define $G^M \coloneqq \Sigma_0^{-1} + (\Omega^F)^{-1} + \gamma^2 P^\sT \Omega^{-1} P$. The {posterior mean} is ~
\begin{align*}
    p(\theta | q, \Omega, F, \Omega^F) 
    =N(\theta; \mu_{\theta \mid q,\Omega,F,\Omega^F}, (G^M)^{-1}), 
\end{align*}
where $\mu_{\theta \mid q,\Omega,F,\Omega^F} = (G^M)^{-1} \left[ \Sigma_0^{-1} \theta_0 + (\Omega^F)^{-1} (\alpha^F + F\beta^F) + \gamma P^\sT \Omega^{-1} (q - P\alpha - PF\beta)  \right]$.
\end{theorem}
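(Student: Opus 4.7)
The plan is to treat this as a standard Bayesian update where three Gaussian ingredients all carry information about $\theta$, and then combine them by completing the square. Specifically, I would view \eqref{eqn:MBL_theta} as the prior on $\theta$, \eqref{eqn:MBL_thetaF} as a Gaussian ``feature likelihood'' $p(F,\Omega^F\mid\theta)$ which — once we treat $F$ as observed — is equivalent to the statement $\theta\mid F \sim N(\alpha^F + F\beta^F,\Omega^F)$, and \eqref{eqn:MBL_qFtheta} as the ``view likelihood'' giving $q\mid\theta,F \sim N(P\alpha+PF\beta+\gamma P\theta,\Omega)$. The posterior is then proportional to the product of these three Gaussian kernels in $\theta$.

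First, I would write the negative log-posterior (up to constants independent of $\theta$) as
\begin{align*}
\tfrac12(\theta-\theta_0)^\sT\Sigma_0^{-1}(\theta-\theta_0)
&+\tfrac12(\theta-\alpha^F-F\beta^F)^\sT(\Omega^F)^{-1}(\theta-\alpha^F-F\beta^F)\\
&+\tfrac12(q-P\alpha-PF\beta-\gamma P\theta)^\sT\Omega^{-1}(q-P\alpha-PF\beta-\gamma P\theta).
\end{align*}
Next, I would collect the quadratic-in-$\theta$ terms to read off the posterior precision. The first two pieces contribute $\Sigma_0^{-1}+(\Omega^F)^{-1}$, while the third contributes $\gamma^2 P^\sT\Omega^{-1}P$ because $\theta$ enters linearly through $\gamma P\theta$. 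Summing gives exactly $G^M$.

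Then I would collect the linear-in-$\theta$ terms. The prior contributes $-\theta^\sT\Sigma_0^{-1}\theta_0$, the feature piece contributes $-\theta^\sT(\Omega^F)^{-1}(\alpha^F+F\beta^F)$, and the view piece contributes $-\gamma\,\theta^\sT P^\sT\Omega^{-1}(q-P\alpha-PF\beta)$ after expanding and exploiting symmetry of the quadratic. Matching against the canonical Gaussian form $\tfrac12(\theta-\mu)^\sT G^M(\theta-\mu)$, the posterior mean must satisfy
\begin{align*}
G^M\mu = \Sigma_0^{-1}\theta_0+(\Omega^F)^{-1}(\alpha^F+F\beta^F)+\gamma P^\sT\Omega^{-1}(q-P\alpha-PF\beta),
\end{align*}
which yields the stated expression for $\mu_{\theta\mid q,\Omega,F,\Omega^F}$. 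Finally, since all three pieces are Gaussian in $\theta$, the posterior is Gaussian with precision $G^M$, completing the proof.

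The main obstacle is purely bookkeeping rather than conceptual: the factor $\gamma P$ inside the view likelihood means one must be careful to (i) square $\gamma P$ into $\gamma^2 P^\sT\Omega^{-1}P$ in the precision, and (ii) carry the single factor of $\gamma P^\sT$ into the linear term, so that the residual $q-P\alpha-PF\beta$ is multiplied by $\gamma P^\sT\Omega^{-1}$ rather than by $P^\sT\Omega^{-1}$ alone. Once this is handled cleanly, the result mirrors the BLB computation of \cref{lemma:BLB_prediction}, with the additional feature-likelihood contribution $(\Omega^F)^{-1}$ entering the precision and $(\Omega^F)^{-1}(\alpha^F+F\beta^F)$ entering the natural-parameter side.
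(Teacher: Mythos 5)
Your proposal is correct and follows essentially the same route as the paper's proof: factor the posterior into the prior, the feature likelihood, and the view likelihood, expand the three Gaussian quadratic forms, read off the precision $G^M=\Sigma_0^{-1}+(\Omega^F)^{-1}+\gamma^2 P^\sT\Omega^{-1}P$ and the natural parameter $\Sigma_0^{-1}\theta_0+(\Omega^F)^{-1}(\alpha^F+F\beta^F)+\gamma P^\sT\Omega^{-1}(q-P\alpha-PF\beta)$, and complete the square. Your handling of the $\gamma$ bookkeeping — $\gamma^2$ in the quadratic term, a single $\gamma$ in the linear term against the residual $q-P\alpha-PF\beta$ — matches the paper exactly.
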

\begin{proof}
See \cref{pf:thm_3.1} for a detailed proof.
\end{proof}
The posterior estimation of parameter $\theta$ enables a predictive estimation on $\tilde{r}$:
\begin{corollary}[Predictive Estimation by the M-BL Model]
\label{corol:M_BL_prediction_with_FV}
Define $G^M \coloneqq \Sigma_0^{-1} + (\Omega^F)^{-1} + P^\sT \Omega^{-1} P$.
Assume $r\sim N(\theta, \Sigma)$.
Then, under \cref{thm:M-BL_estimation_with_FV}, M-BL model gives the predictive estimation of unobserved asset returns $\tilde{r} \coloneqq r \mid q, \Omega, F,\Omega^F$ as
\begin{align*}
    &\tilde{r}
    \sim N(\theta; \mu_{\theta \mid q,\Omega,F,\Omega^F}, \Sigma + (G^M)^{-1}),
\end{align*}
where $\mu_{\theta \mid q,\Omega,F,\Omega^F} = (G^M)^{-1} \left[ \Sigma_0^{-1} \theta_0 + (\Omega^F)^{-1} (\alpha^F + F\beta^F) + \gamma P^\sT \Omega^{-1} (q - P\alpha - PF\beta)  \right]$.
\end{corollary}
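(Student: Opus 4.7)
The plan is to combine the posterior on $\theta$ from \cref{thm:M-BL_estimation_with_FV} with the sampling model $r \mid \theta \sim N(\theta, \Sigma)$ via the standard Gaussian marginalization identity: if $\theta \mid D \sim N(\mu, V)$ and $r \mid \theta \sim N(\theta, \Sigma)$ with the intrinsic noise in $r$ independent of $D$ given $\theta$, then $r \mid D \sim N(\mu, V + \Sigma)$. The graphical model in \cref{fig:mix_effects} is exactly what legitimises this: the arrow $\theta \to r$ is the only edge into $r$, so $r \perp (q,\Omega,F,\Omega^F) \mid \theta$.

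The concrete steps I would carry out are as follows. First, decompose $r = \theta + \eta$ with $\eta \sim N(0,\Sigma)$ independent of $\theta$, and note that $\eta$ is also independent of $(q,\Omega,F,\Omega^F)$ since those variables enter the joint law only through $\theta$ (by the Markov property of the network). Second, invoke \cref{thm:M-BL_estimation_with_FV} to substitute
\begin{align*}
\theta \mid q,\Omega,F,\Omega^F \sim N\!\left(\mu_{\theta\mid q,\Omega,F,\Omega^F},\,(G^M)^{-1}\right).
\end{align*}
Third, compute the conditional mean and covariance of $\tilde r = \theta + \eta$ given $D = (q,\Omega,F,\Omega^F)$ by the tower property:
\begin{align*}
\mathbb{E}[\tilde r \mid D] &= \mathbb{E}[\theta \mid D] + \mathbb{E}[\eta] = \mu_{\theta\mid q,\Omega,F,\Omega^F},\\
\mathrm{Cov}[\tilde r \mid D] &= \mathrm{Cov}[\theta \mid D] + \mathrm{Cov}[\eta] = (G^M)^{-1} + \Sigma,
\end{align*}
where the cross-covariance term vanishes by the independence established in Step 1. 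Finally, since $\tilde r$ is a sum of conditionally independent Gaussians (one constant-variance, one determined by the posterior), $\tilde r \mid D$ is itself Gaussian with the stated mean and covariance, matching the claim verbatim.

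There is no genuine obstacle here; the result is a direct predictive-distribution corollary of the posterior theorem. The only point requiring care is the conditional independence $r \perp D \mid \theta$, which I would justify in one sentence by pointing to the DAG structure in \cref{fig:mix_effects} (the node $r$ has $\theta$ as its unique parent, so conditional on $\theta$ it is independent of all other non-descendants, in particular of $q,\Omega,F,\Omega^F$). One additional minor remark worth including is that the expression for $G^M$ appearing in the corollary statement drops the $\gamma^2$ factor present in \cref{thm:M-BL_estimation_with_FV}; to stay consistent I would simply reuse the $G^M$ from the theorem, so that the predictive covariance $\Sigma + (G^M)^{-1}$ inherits the correct posterior precision without any re-derivation.
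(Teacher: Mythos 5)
Your proof is correct and is essentially the paper's own route: the paper obtains this predictive law exactly as it does for \cref{lemma:BLB_prediction}, by marginalizing $\theta$ via $\int N(\tilde{r};\theta,\Sigma)\,N(\theta;\mu_{\theta\mid q,\Omega,F,\Omega^F},(G^M)^{-1})\,d\theta = N(\tilde{r};\mu_{\theta\mid q,\Omega,F,\Omega^F},\Sigma+(G^M)^{-1})$ using \cref{lemma:integral_product_gaussians}, which is the same computation as your $r=\theta+\eta$ tower-property argument, with the conditional independence $r\perp(q,\Omega,F,\Omega^F)\mid\theta$ invoked in the same way. Your closing remark is also right: the corollary's displayed $G^M$ drops the $\gamma^2$ factor from \cref{thm:M-BL_estimation_with_FV} (consistent only when $\gamma=1$), and reusing the theorem's definition of $G^M$ is the correct fix.
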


\begin{remark}[Classical Black-Litterman Recovery]
\label{remark:BL_recovery}
M-BL model recovers classical Black-Litterman model when:
\begin{enumerate}
    \item \textit{Features becomes uninformative, i.e., uncertainty approaches infinity}: $\Omega^F \to \infty$, equivalently $(\Omega^F)^{-1} \to 0$, so the error matrix $(\Omega^F)^{-1}$ disappears from $G^M$.

    \item \textit{The $q\leftrightarrow F\leftrightarrow\theta$ linear model (\cref{def:qFtheta_linear_model}) reduces to the classical noisy views model \cite{black1992global}}: $(\alpha,\beta,\gamma) = (0,0,1)$, so the residual $(q - P\alpha - PF\beta) \to q$.
\end{enumerate}

Under these conditions, we have
\begin{align}
G^M &\to {\Sigma_0^{-1}} + {P^\sT\Omega^{-1}P} = G, \nonumber
\end{align}
and thus, we recover \eqref{eqn:BLB_prediction}:
\begin{align*}
r | q,\Omega \sim N\Bigl(G^{-1}\left[\Sigma_0^{-1}\theta_0 + P^\sT\Omega^{-1}q\right], \Sigma+G^{-1}\Bigr).
\end{align*}
\end{remark}

\begin{remark}[Ground-Truth Limit]
\label{remark:ground_truth_recovery}
\cref{corol:M_BL_prediction_with_FV} accurately and precisely predict ground-truth asset returns with:
\begin{enumerate}
    \item \textit{Perfect information}: $\Omega \to 0$, equivalently $\Omega^{-1} \to \infty$.

    \item \textit{Accurate views}: $q \to Pr^\star$ where $r^\star$ is true asset returns.

    \item \textit{The $q\leftrightarrow F\leftrightarrow\theta$ linear model (\cref{def:qFtheta_linear_model}) reduces to the classical noisy views model \cite{black1992global}}: $(\alpha,\beta,\gamma) = (0,0,1)$, so the residual $(q - P\alpha - PF\beta) \to q$.
\end{enumerate}

The M-BL posterior mean satisfies:
\begin{align*}
\lim_{\substack{\Omega \to 0 \\ q \to r^\star}} \mu_{\theta \mid q,\Omega,F,\Omega^F}
&= (G^M)^{-1} \bigg[ \underbrace{P^\sT\Omega^{-1}(Pr^\star - P\alpha - PF\beta)}_{\text{dominant term}} + \underbrace{\Sigma_0^{-1}\theta_0 + (\Omega^F)^{-1}(\alpha^F + F\beta^F)}_{\text{bounded}} \bigg] \\
&= (G^M)^{-1} [P^\sT\Omega^{-1}Pr^\star + o(\Omega^{-1})] \\
&= r^\star \quad \text{(a.s.)}
\end{align*}
where the last step follows $G^M = \Sigma_0^{-1} + (\Omega^F)^{-1} + P^\sT\Omega^{-1}P \to P^\sT\Omega^{-1}P, 
\quad (G^M)^{-1} o(\Omega^{-1}) \to 0$.
As a result, \cref{corol:M_BL_prediction_with_FV} becomes
\begin{align*}
\tilde{r} \xrightarrow[]{d} \delta_{r^\star} \quad \text{as} \quad \Omega \to 0,\ q \to Pr^\star
\end{align*}
where $\delta_{r^\star}$ denotes the Dirac measure at $r^\star$. 
\end{remark}

\cref{corol:M_BL_prediction_with_FV} estimates asset returns under the general scenario where views are observed, i.e., solves \cref{prob:FVHPred}.
It generalizes the classical Black-Litterman model, recovering its form when features are uninformative, and approaches the true returns with accurate and precise views (\cref{remark:BL_recovery,remark:ground_truth_recovery}). 
With the predictive estimation of asset returns $\tilde{r}$, the mean-variance optimization framework (\cref{def:UMVO}) determines the portfolio weights $w_{\rm M-BL}$.
\subsection{Scenario: Features with Latent Views}\label{sec:BL model with Features}
In the previous section, we solve \cref{prob:FVHPred} with features and observed views. 
However, an investor using the Black-Litterman model may not be an expert at quantifying the views. 
In this section, we discuss the scenario without the views (\cref{prob:FbPred}) by specifying two effects of the features introduced in \cref{sec:FI_BL_network} and treating $q$ and $\Omega$ as latent variables.
Incorporating the effects into the network, 
we showcase two graphical forms in \cref{fig:mix_effects} and define the \underline{S}hared-\underline{L}atent-\underline{P}arametrization \underline{B}lack-\underline{L}itterman (SLP-BL) and \underline{F}eature-\underline{I}nfluenced-\underline{V}iews \underline{B}lack-\underline{L}itterman (FIV-BL) models (\cref{def:SLP-BL,def:FIV-BL}) based on them.
Then, we estimate posterior distribution over both asset returns $r$ and their parameter $\theta$ (\cref{corol:M_BL_prediction_with_FV,thm:M-BL_estimation_with_FV}).

Consider the following problem:
\begin{problem}[Feature-Integrated Predictive Estimation]
\label{prob:FbPred}
Let $r \in \R^m$ represent the returns of the $m$ assets and $\tilde{r}$ denote the unobserved (or future) asset returns.
Let $f_i \in \R^{d}$ represent the features of the $i$-th asset and $F \in \R^{m\times dm}$ be a block-diagonal matrix defined as
\begin{align*}
F \coloneqq \diag(f_1^\sT, f_2^\sT, \dots, f_m^\sT).
\end{align*}
Given $D\coloneqq (F,\Omega^F)$ where $F$ is estimated from observations of asset returns and features $\{(r_{l}, F_{l})\}_{l=1}^{n}$ and $\Omega^F$ is the homoscedastic error matrix corresponding to the observations $\{F_{l}\}_{l=1}^{n}$,
the goal is to estimate unobserved asset returns $\tilde{r} \sim p(r|D)$.
\end{problem}

We aim to solve \cref{prob:FbPred} by feature-integrated Black-Litterman network.
We approach this by considering the two causal effects in \cref{sec:FI_BL_network} and treating the views and uncertainty matrix $(q,\Omega)$ as latent parameters.
Specifically, 
\begin{itemize}
    \item \textbf{Effect 1}: \label{item:eff1}
    Features $F$ are extracted from the parameter $\theta$. Consequently, the features $F$, along with their error term $\epsilon^F \sim N(0,\Omega^F)$, share the common parameter $\theta$ with asset returns $r$ and investor views $q$.
    \item \textbf{Effect 2}: \label{item:eff2}
    Features $F$ influence the formation of views $q$. In this scenario, the features $F$, along with their error term $\epsilon^F \sim N(0,\Omega^F)$, are related to the latent views $q$ through a separate equation from the parameter $\theta$.
\end{itemize}
However, in this section, we do not mix the two effects. 
\begin{remark}[Rationale of Differentiating Effect 1 and 2]\vspace{-.5em}
\label{remark:rationale_of_sep_effects}
We differentiate the two effects because, in the scenario without investor views, the previous M-BL model (\cref{def:M-BL}) estimates the parameter $\theta$ directly by $(F,\Omega)$, meaning Effect 1 dominates over Effect 2 when both are present.
If, in the general scenario where views are observed, Effect 2 is more significant than Effect 1, 
then, when views are latent, the ignorance of Effect 2 leads to biased estimation.
This matches the intuition: 
if we select the features not directly related to the asset (Effect 1) but highly influence the investor views (Effect 2), 
such as macroeconomic indicators like interest rates or CPI, 
then using these features to estimate asset returns directly is biased. 
To avoid this bias, we differentiate the two effects with two modeling strategies.
One handles the case where Effect 1 dominates, and the other handles the case where Effect 2 is more significant.
\end{remark}\vspace{-1em}

We showcase the feature-integrated Black-Litterman network as two configurations: one incorporating Effect 1 and another incorporating Effect 2. 
Intuitively, the first better captures generic features while the second more effectively handles the non-asset-related features.

{This implies that, in practice, if an investor takes generic features of assets (e.g. indicators derived from the time series of each asset, as shown in our experiment), configuration 1 should be used. If an investor takes features not specific to individual assets (e.g. interest rates), configuration 2 should be used. The two configurations are not contradicting, so one can take both types of features and incorporate them correspondingly.}

We visualize two configurations of the network in \cref{fig:seperate_effects} and define one model for each configuration accordingly.

\begin{figure}[htbp]
    \centering
    \begin{minipage}{0.495\textwidth}
        \centering
        \begin{tikzpicture}
            \node[latent] (theta) {$\theta$};   
            \node[latent, below left=1 and 1.5cm of theta] (r) {$r$}; 
            \node[latent, below = 1 of theta] (q_Omega) {$q,\Omega$}; 
            \node[obs, below right = 1 and 1.5cm of theta] (F_OmegaF) {$F,\Omega^F$}; 
            
            \edge {theta} {r}; 
            \edge {theta} {q_Omega}; 
            \edge {theta} {F_OmegaF}; 
        \end{tikzpicture}
        \caption*{\small Configuration 1: Shared Latent Parametrization.}
    \end{minipage}
    \begin{minipage}{0.495\textwidth}
        \centering
        \begin{tikzpicture}
            \node[latent] (theta) {$\theta$};   
            \node[latent, below left = 1 and 1.5cm of theta] (r) {$r$}; 
            \node[latent, below = 1 of theta] (Omega) {$\Omega$}; 
            \node[latent, below right = 1 and 1.5cm of theta] (q) {$q$}; 
            \node[obs, above = 0.5cm of q] (F_OmegaF) {$F,\Omega^F$}; 
            
            \edge {theta} {r}; 
            \edge {theta} {Omega}; 
            \edge {theta} {q}; 
            \edge {F_OmegaF} {q};
        \end{tikzpicture}
        \caption*{\small Configuration 2: Feature-Influenced Views.}
    \end{minipage}
    \caption{\small Feature-Integrated Black-Litterman network with features and latent Views $(\theta, r, q, \Omega, F, \Omega^F)$.}
    \label{fig:seperate_effects}
\end{figure}

\subsubsection*{Configuration 1: Shared Latent Parametrization}
\label{subsec:config_1}

To capture \hyperref[item:eff1]{Effect 1}, we follow the $\theta\leftrightarrow F$ relationship (\cref{def:thetaF_linear_model}).
By incorporating Effect 1 of the features, we showcase the feature-integrated Black-Litterman network as \underline{S}hared-\underline{L}atent-\underline{P}arametrization \underline{B}lack-\underline{L}itterman (SLP-BL) model:
\begin{definition}[SLP-BL model $(\theta,r,q,\Omega,F,\Omega^F)$]
\label{def:SLP-BL}
Let $r\in\R^m$ be the returns of the $m$ assets, parametrized by $\theta$, with $r \sim p(r | \theta)$. 
Let $q \in \R^k$ represent the views on the returns of the $k$ specified portfolio and $\Omega \in \R^{k \times k}$ be the uncertainty matrix.
Let $F \in \R^{m \times dm}$ be the features of the $m$ assets and error matrix $\Omega^F \in \R^{m \times m}$
The SLP-BL model is a portfolio model composed of four fundamental density functions:
\begin{enumerate}
    \item {Parametrized Asset Returns}: $p(r | \theta)$, the distribution of asset returns given the parameter.
    \item {Prior}: $\pi(\theta)$, representing market equilibrium.
    \item {Likelihood of Views}: $L(\theta | q, \Omega) 
    \coloneqq p(q, \Omega | \theta)$, the relationship between the parameter and the views.
    \item {Likelihood of Features}: $L(\theta|F,\Omega^F) \coloneqq p(F,\Omega^F|\theta)$, the $\theta\leftrightarrow F$ relationship (\cref{def:thetaF_linear_model}).
\end{enumerate}
\end{definition}
We show the posterior estimation on $\theta$ of the SLP-BL model:
\begin{theorem}[Parameter Estimation of the SLP-BL Model]
\label{thm:SLP-BL_estimation_with_F}
Given a SLP-BL model $(\theta,r,q,\Omega,F,\Omega^F)$ (\cref{def:SLP-BL}) and regression parameters $(\alpha^F,\beta^F) \in \R^m \times \R^{dm}$, assume
\begin{align}
    &\theta \sim N(\theta_0, \Sigma_0), \label{eqn:SLPBL_theta}\\
    &\theta = \alpha^F + F \beta^F + \epsilon^F, \quad \epsilon^F \sim  N(0, \Omega^F). \label{eqn:SLPBL_thetaF}
\end{align}
Define $G^F \coloneqq \Sigma_0^{-1} + (\Omega^F)^{-1}$. The {posterior mean} is
\begin{align*}
    &~p(\theta | F, \Omega^F) 
    = N\left( \theta; (G^F)^{-1}\left[ \Sigma_0^{-1} \theta_0 + (\Omega^F)^{-1} (\alpha^F + F\beta^F) \right], (G^F)^{-1} \right). \nonumber 
\end{align*}
\end{theorem}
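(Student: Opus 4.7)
The plan is to treat Theorem 3.2 as a standard Gaussian-conjugate Bayesian update, since both the prior on $\theta$ and the relation $\theta = \alpha^F + F\beta^F + \epsilon^F$ give Gaussian densities in $\theta$. The key is to rewrite \eqref{eqn:SLPBL_thetaF} as a likelihood for $\theta$: because $\epsilon^F = \theta - (\alpha^F + F\beta^F) \sim N(0, \Omega^F)$, the likelihood contribution of the feature observation is
\begin{align*}
L(\theta \mid F, \Omega^F) \propto \exp\!\Bigl(-\tfrac{1}{2}\bigl(\theta - \alpha^F - F\beta^F\bigr)^{\sT}(\Omega^F)^{-1}\bigl(\theta - \alpha^F - F\beta^F\bigr)\Bigr).
\end{align*}
I would first record this carefully, along with the Gaussian prior density induced by \eqref{eqn:SLPBL_theta}, so that all subsequent manipulations are algebraic.

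Next I would apply Bayes' rule to write $p(\theta \mid F, \Omega^F) \propto \pi(\theta)\,L(\theta\mid F,\Omega^F)$, expand both quadratic forms, and collect terms in $\theta$. Grouping the quadratic part yields a precision matrix $\Sigma_0^{-1} + (\Omega^F)^{-1}$, which is exactly the definition of $G^F$. The linear part combines into $\theta^{\sT}\bigl[\Sigma_0^{-1}\theta_0 + (\Omega^F)^{-1}(\alpha^F + F\beta^F)\bigr]$. Completing the square then identifies the posterior as Gaussian with covariance $(G^F)^{-1}$ and mean $(G^F)^{-1}\bigl[\Sigma_0^{-1}\theta_0 + (\Omega^F)^{-1}(\alpha^F + F\beta^F)\bigr]$, matching the claimed form.

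An alternative and cleaner route is to notice that this fits the classical Gaussian-Gaussian conjugate template: with prior $\theta \sim N(\theta_0,\Sigma_0)$ and a second Gaussian ``observation'' of $\theta$ itself with mean $\alpha^F + F\beta^F$ and covariance $\Omega^F$, the posterior precision is the sum of precisions and the posterior mean is the precision-weighted average. Citing this standard lemma (or the computation already carried out in the BLB proof referenced as \cref{pf:lemma_2.1_BLB}) would shorten the argument substantially; since the derivation there handles an even more general setting with $P$, specializing to $P=I$, $\gamma=1$, and dropping the $q$-channel gives the result immediately.

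The main subtlety — not really an obstacle but worth flagging — is conceptual rather than computational: one must justify interpreting \eqref{eqn:SLPBL_thetaF} as a likelihood $p(F,\Omega^F \mid \theta)$ in $\theta$, since the relation is written with $\theta$ on the left. This is legitimate because $F$ and $\Omega^F$ are conditioned on (observed), and the Gaussian noise $\epsilon^F$ makes the conditional density of $\theta$ given $(F,\Omega^F)$ proportional, as a function of $\theta$, to the expression above; equivalently, one can reparametrize to view $\alpha^F + F\beta^F$ as a noisy measurement of $\theta$. Once this is spelled out, the remainder is the routine conjugate-Gaussian algebra outlined above.
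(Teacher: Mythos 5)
Your proposal is correct and takes essentially the same route as the paper's proof: the paper likewise interprets \eqref{eqn:SLPBL_thetaF} as a Gaussian likelihood $L(\theta \mid F,\Omega^F)$ in $\theta$, multiplies by the prior from \eqref{eqn:SLPBL_theta}, and completes the square to obtain posterior precision $G^F = \Sigma_0^{-1} + (\Omega^F)^{-1}$ and the precision-weighted mean $(G^F)^{-1}\left[\Sigma_0^{-1}\theta_0 + (\Omega^F)^{-1}(\alpha^F + F\beta^F)\right]$. Your flagged subtlety about reading the structural equation as a likelihood, and the conjugate-template shortcut via the BLB computation with $P=I$, are both consistent with how the paper proceeds.
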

\begin{proof}
See \cref{pf:thm_3.2} for a detailed proof.
\end{proof}

The posterior estimation of parameter $\theta$ enables a predictive estimation on $\tilde{r}$:
\begin{corollary}[Predictive Estimation by the SLP-BL Model]
\label{corol:SLP_BL_prediction_with_F}
Define $G^F \coloneqq \Sigma_0^{-1} + (\Omega^F)^{-1}$.
Assume $r\sim N(\theta, \Sigma)$.
Then, under \cref{thm:SLP-BL_estimation_with_F}, SLP-BL model gives the predictive estimation of unobserved asset returns $\tilde{r} \coloneqq r | F,\Omega^F$ as
    \begin{align*}
    &\tilde{r}
    \sim N\left(G^F\left[ \Sigma_0^{-1} \theta_0 + (\Omega^F)^{-1} (\alpha^F + F\beta^F) \right], \Sigma + (G^F)^{-1} \right).
    \end{align*}
\end{corollary}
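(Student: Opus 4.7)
The plan is to derive the predictive distribution of $\tilde{r} \coloneqq r \mid F, \Omega^F$ by marginalizing the sampling model $r \mid \theta \sim N(\theta, \Sigma)$ against the posterior $p(\theta \mid F, \Omega^F)$ already supplied by \cref{thm:SLP-BL_estimation_with_F}. This is exactly the same hierarchical Gaussian collapse used to pass from \eqref{eqn:BLB_estimation} to \eqref{eqn:BLB_prediction} in the proof of \cref{lemma:BLB_prediction}, and I would simply reuse that machinery in the feature-only setting.

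First, I would record the three ingredients that are available: the prior \eqref{eqn:SLPBL_theta}, the $\theta \leftrightarrow F$ regression \eqref{eqn:SLPBL_thetaF}, and the sampling assumption $r \mid \theta \sim N(\theta, \Sigma)$. Then I would invoke \cref{thm:SLP-BL_estimation_with_F} to obtain
\begin{align*}
\theta \mid F, \Omega^F \sim N\bigl(\mu^F, (G^F)^{-1}\bigr), \qquad \mu^F \coloneqq (G^F)^{-1}\!\left[\Sigma_0^{-1}\theta_0 + (\Omega^F)^{-1}(\alpha^F + F\beta^F)\right].
\end{align*}
Next, I would write the decomposition $r = \theta + \eta$ with $\eta \mid \theta \sim N(0, \Sigma)$. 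Since conditioning on $(F, \Omega^F)$ acts only on $\theta$ and leaves $\eta$ independent of $(F, \Omega^F)$, the predictive law is the convolution of two independent Gaussians with means $\mu^F$ and $0$ and covariances $(G^F)^{-1}$ and $\Sigma$, respectively. The sum of independent Gaussians is Gaussian with added means and added covariances, which immediately yields the claimed form with mean $\mu^F$ and covariance $\Sigma + (G^F)^{-1}$.

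Alternatively, I would verify the same result by the integral identity $p(r \mid F, \Omega^F) = \int p(r \mid \theta)\, p(\theta \mid F, \Omega^F)\, d\theta$, completing the square in the exponent to read off the mean and covariance; this is a standard Gaussian computation and matches the hierarchical-collapse argument. I do not anticipate a substantive obstacle: the only mild care needed is to confirm the conditional independence $\eta \perp (F, \Omega^F) \mid \theta$, which follows because in the SLP-BL network (\cref{fig:seperate_effects}, Configuration 1) both $r$ and $(F, \Omega^F)$ are children of $\theta$ and hence $d$-separated given $\theta$. Modulo what appears to be a minor typographical issue in the stated mean (the inverse $(G^F)^{-1}$ rather than $G^F$), the proof reduces to a two-line application of \cref{thm:SLP-BL_estimation_with_F} together with Gaussian convolution, mirroring \cref{pf:lemma_2.1_BLB}.
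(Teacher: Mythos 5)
Your proposal is correct and takes essentially the same route as the paper, which obtains this corollary by the hierarchical Gaussian collapse used in \cref{pf:lemma_2.1_BLB}: marginalize $r \mid \theta \sim N(\theta,\Sigma)$ against the posterior from \cref{thm:SLP-BL_estimation_with_F} via the Gaussian product integral (\cref{lemma:integral_product_gaussians}), which keeps the posterior mean and adds $\Sigma$ to the covariance. You are also right that the displayed mean should read $(G^F)^{-1}\left[\Sigma_0^{-1}\theta_0 + (\Omega^F)^{-1}(\alpha^F + F\beta^F)\right]$ rather than $G^F\left[\cdots\right]$; this is a typo in the statement, as consistency with \cref{thm:SLP-BL_estimation_with_F} confirms.
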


\begin{remark}[Features Replace Views]
\label{remark:SLP_BL_recovery}
\cref{corol:M_BL_prediction_with_FV} is equialvent to classical Black-Litterman model if:
\begin{enumerate}
    \item \textit{Features recover investor views}: 
    $\alpha^F+F\beta^F \to P^{-1}q$
    \item \textit{Error matrix recovers uncertainty}: $(\Omega^F)^{-1} \to P^\sT\Omega^{-1} P$
\end{enumerate}

Under these conditions, we have $G^F \to {\Sigma_0^{-1}} + {P^\sT\Omega^{-1}P} = G$,
and thus recover \eqref{eqn:BLB_prediction}:
\begin{align*}
r | q,\Omega \sim N\Bigl(G^{-1}\left[\Sigma_0^{-1}\theta_0 + P^\sT\Omega^{-1}q\right], \Sigma+G^{-1}\Bigr).
\end{align*}
\end{remark}

\cref{corol:SLP_BL_prediction_with_F} estimates asset returns without the views, i.e., solves \cref{prob:FbPred}.
With the predictive estimation of asset returns $\tilde{r}$, the mean-variance optimization framework (\cref{def:UMVO}) outputs the portfolio weights $w_{\rm SLP-BL}$.
See \cref{sec:practitioner} for the selection of $\{\Sigma, \Sigma_0, \theta_0, \alpha^F, \beta^F, \Omega^F\}$.

\subsubsection*{Configuration 2: Feature-Influenced Views}
\label{subsec:config_2}
To capture \hyperref[item:eff2]{Effect 2}, we define a $q\leftrightarrow F$ relationship by a multivariate linear model with local dependency:
\begin{definition}[$q\leftrightarrow F$ Linear Model]
\label{def:qF_linear_model}
Given features $F \in \R^{m \times dm}$ and portfolio weight matrix for $k$ specified portfolios $P \in \R^{k \times m}$, let $q \in \R^k$ be the views on returns of the $k$ portfolios. Define regression intercept vector $\alpha \in \R^m$, regression coefficient vector ${\beta} \in \R^{dm}$, random error $\epsilon^F \in \R^m$, and error matrix $\Omega^F \in \R^{m \times m}$ such that:
\begin{align*}
    q = P (\alpha + F{\beta} + \epsilon^F), \quad \epsilon^F \sim N(0, \Omega^F),
\end{align*}
Furthermore, define $\beta_1, \beta_2, \dots, \beta_m \in \R^d$ as $m$ partitions of the vector ${\beta}$ such that:
\begin{align*}
    {\beta} = \left[\beta_1^\sT, \beta_2^\sT, \dots, \beta_m^\sT \right]^\sT.
\end{align*}
\end{definition}
This captures the relationship without the loss of generality:
\begin{remark}[Noisy Implied Asset Returns]
\label{remark:NIAR}
Based on the intuition that the views $q$ are formed on the returns of the $k$ specified portfolios, define the noisy implied asset returns
\begin{align*}
r^F \coloneqq \alpha + F {\beta} + \epsilon^F
\end{align*}
such that: $q = P r^F$.
Then, the dependency between each element of this implied asset returns and $d$ features becomes local:
\begin{align*}
r^F_i
= \alpha_i + F_{i,:} {\beta} + \epsilon^F_i
= \alpha_i + \beta_i^\sT f_i + \epsilon^F_i, \quad i \in [m].
\end{align*}
\end{remark}
\cref{remark:NIAR} allows estimations of regression parameters and the error matrix $(\alpha,{\beta},\Omega^F)$ based on the observations $\{(r_{l}, F_{l})\}_{l=1}^{n}$. See \cref{sec:practitioner} for details.

\phantomsection
\label{para:remain_Omega}
We now introduce the final piece in this configuration:
Regarding the uncertainty matrix $\Omega$, a simplified assumption is that when an investor forms views based on features, the error matrix $\Omega^F$ captures all the information about this uncertainty.
This would suggest omitting $\Omega$ from our model due to the replacement with $\Omega^F$.
Yet, in the general case, $\Omega$ remains necessary as it represents the intrinsic uncertainty of the views, regardless of the $(F,\Omega^F)$\footnote{This concept is similar to the existence of the intrinsic covariance $\Sigma$ regardless of the prior parameter $(\theta_0, \Sigma_0)$ of $\theta$ in \cref{lemma:Rev_Opt,lemma:BLB_prediction}.}.
Additionally, retaining $\Omega$ allows our model to remain applicable when both $(q,\Omega)$ and $F$ are observed.
To treat $\Omega$ as a latent parameter, a prior $\pi(\Omega)$ must be specified to enable Bayesian inference.

By incorporating \hyperref[item:eff2]{Effect 2} of the features characterized by the $q\leftrightarrow F$ linear models (\cref{def:qF_linear_model}) and a given prior $\pi(\Omega)$, we showcase the feature-integrated Black-Litterman network as the following \underline{F}eature-\underline{I}nfluenced-\underline{V}iews \underline{B}lack-\underline{L}itterman (FIV-BL) model:
\begin{definition}[FIV-BL model $(\theta,r,q,\Omega,F,\Omega^F)$]
\label{def:FIV-BL}
Let $r\in\R^m$ be the returns of the $m$ assets, parametrized by $\theta$, with $r \sim p(r | \theta)$. 
Let $q \in \R^k$ represent the views on the returns of the $k$ specified portfolio and $\Omega \in \R^{k \times k}$ be the uncertainty matrix.
Let $F \in \R^{m \times dm}$ be the features of the $m$ assets and error matrix $\Omega^F \in \R^{m \times m}$
The FIV-BL model is a portfolio model composed of five fundamental density functions:
\begin{enumerate}
    \item {Parametrized Asset Returns}: $p(r | \theta)$, the distribution of asset returns given the parameter.
    \item {Prior}: $\pi(\theta)$, representing market equilibrium.
    \item {Likelihood of Views}: $L(\theta | q, \Omega) 
    \coloneqq p(q, \Omega | \theta)$, the relationship between the parameter and the views.
    \item {Views given Features}: $p(q|F,\Omega^F)$, the $q\leftrightarrow F$ relationship (\cref{def:qF_linear_model}).
    \item {Prior on Uncertainty Matrix}: $\pi(\Omega)$, representing intrinsic uncertainty of the views.
\end{enumerate}
\end{definition}
The FIV-BL model marginalizing out latent parameters $(q,\Omega)$ to estimate the posterior of $\theta$:
\begin{theorem}[Parameter Estimation of the FIV-BL Model]
\label{thm:FIV-BL_estimation_with_F}
Let $P \in \R^{k\times m}$ be the portfolio weight matrix for $k$ specified portfolios.
Given a FIV-BL model $(\theta,r,q,\Omega,F,\Omega^F)$ (\cref{def:FIV-BL}), regression parameters $(\alpha,{\beta}) \in \R^m \times \R^{dm}$, and a prior $\pi(\Omega)$, assume
\begin{align}
    &\theta \sim N(\theta_0, \Sigma_0) \label{eqn:FIVBL_theta}\\
    &P \theta = q + \epsilon, \quad \epsilon \sim  N(0,\Omega), \label{eqn:FIVBL_qtheta}\\
    &q = P(\alpha + F {\beta} + \epsilon^F), \quad \epsilon^F \sim  N(0, \Omega^F),\label{eqn:FIVBL_qF}
\end{align}
where $\theta_0 \in \R^m$ and $\Sigma_0 \in \R^{m \times m}$ are given prior mean and covariance, and $(\epsilon, \epsilon^F)$ are mutually independent. 
Define $G \coloneqq \Sigma_{0}^{-1} + P^{\sT}\Omega^{-1}P$. 
The {posterior mean distribution} is:
\begin{align}
\label{eqn:BLF_estimation_GMM}
p(\theta | F, \Omega^F) 
&= \int N \left(\theta; \mu_{\theta | \Omega, F, \Omega^F},\Sigma_{\theta | \Omega, F, \Omega^F}\right) \pi(\Omega) d\Omega, \\
\text{where } 
&\begin{cases}
\mu_{\theta | \Omega, F, \Omega^F} = G^{-1}\left(\Sigma_{0}^{-1}\theta_{0} + P^{\sT}\Omega^{-1} P(\alpha + F{\beta})  \right), \nonumber \\
\Sigma_{\theta | \Omega, F, \Omega^F} 
= G^{-1} + G^{-1}P^{\sT}\Omega^{-1}\bigl(P\Omega^FP^{\sT}\bigr)\Omega^{-1}PG^{-1}.  \nonumber
\end{cases}
\end{align}
\end{theorem}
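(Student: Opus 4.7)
The plan is a two-stage marginalization: first, with $\Omega$ held fixed, marginalize out the latent view $q$; then integrate over $\Omega$ against its prior $\pi(\Omega)$ to obtain the stated Gaussian-mixture representation. For the inner stage I would reinterpret (3.24) as the conditional likelihood $q \mid \theta, \Omega \sim N(P\theta, \Omega)$, and (3.25) as the feature-driven marginal $q \mid F, \Omega^F \sim N(P(\alpha+F\beta),\, P\Omega^F P^\sT)$. Combining the prior $\theta \sim N(\theta_0, \Sigma_0)$ with (3.24) and applying the Black-Litterman-Bayes update of \cref{lemma:BLB_prediction} (with $\theta_0,\Sigma_0$ in the role of prior parameters and $q,\Omega$ in the role of observed views) yields
\[
\theta \mid q, \Omega \sim N\!\left(G^{-1}[\Sigma_0^{-1}\theta_0 + P^\sT\Omega^{-1}q],\; G^{-1}\right),\qquad G = \Sigma_0^{-1} + P^\sT\Omega^{-1}P.
\]

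Next I would marginalize over $q$ using iterated moments. Because $E[\theta \mid q, \Omega]$ is affine in $q$, $\mathrm{Cov}[\theta \mid q, \Omega] = G^{-1}$ is free of $q$, and $q \mid F,\Omega^F$ is Gaussian, the marginal $p(\theta \mid \Omega, F, \Omega^F)$ is again Gaussian. Its mean follows from the tower property and reproduces $\mu_{\theta \mid \Omega, F, \Omega^F} = G^{-1}[\Sigma_0^{-1}\theta_0 + P^\sT\Omega^{-1}P(\alpha+F\beta)]$. Its covariance follows from the law of total covariance,
\begin{align*}
\mathrm{Cov}[\theta \mid \Omega, F, \Omega^F]
&= E_q\!\left[\mathrm{Cov}[\theta \mid q,\Omega]\right] + \mathrm{Cov}_q\!\left(E[\theta \mid q, \Omega]\right) \\
&= G^{-1} + G^{-1}P^\sT\Omega^{-1}\bigl(P\Omega^F P^\sT\bigr)\Omega^{-1}P\,G^{-1},
\end{align*}
which is precisely $\Sigma_{\theta \mid \Omega, F, \Omega^F}$. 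The outer stage is then a direct integration against $\pi(\Omega)$, producing the Gaussian mixture (3.20).

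The main obstacle is interpretive rather than computational: on first reading, (3.24) and (3.25) appear to impose two competing structural equations on the same latent $q$. The correct reading---mirroring the Configuration~2 DAG---is that (3.24) supplies the $\theta \to q$ measurement likelihood while (3.25) supplies the $F,\Omega^F \to q$ prior, so that $q$ is a latent convergent node through which feature information propagates into the posterior of $\theta$. Once this interpretation is fixed, all remaining work is routine Gaussian algebra: one Bayesian conjugate update followed by one application of total covariance, with the outer integral against $\pi(\Omega)$ simply passing through. A minor care point is that $G$ itself depends on $\Omega$ and must therefore stay inside the outer integral; checking this is straightforward from the definitions.
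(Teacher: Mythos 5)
Your proposal is correct and follows essentially the same route as the paper's proof: both use the factorization $p(\theta,q \mid \Omega, F, \Omega^F) = p(\theta \mid q, \Omega)\, p(q \mid F, \Omega^F)$ (justified by the conditional independence of $\theta$ and $(F,\Omega^F)$ given $(q,\Omega)$, i.e., your collider reading), obtain $p(\theta \mid q,\Omega)$ from the BLB update of \cref{lemma:BLB_prediction}, compute the inner Gaussian's moments via the laws of total expectation and total variance, and then integrate against $\pi(\Omega)$ with the $\Omega$-dependent $G$ correctly kept inside the mixture integral. No gaps.
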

\begin{proof}
See \cref{pf:thm_3.3} for a detailed proof.
\end{proof}

\begin{remark}[Posterior Collapse Under Perfect Views]
\label{remark:perfect_info_limit}
If we omit the intrinsic uncertainty matrix by $\Omega \to 0$, or equivalently $\Omega^{-1} \to 0$, we have
\begin{align*}
    G &\to {P^\sT\Omega^{-1}P}, \\
    \mu_{\theta | \Omega, F, \Omega^F} &\to G^{-1}\left(P^{\sT}\Omega^{-1} P(\alpha + F{\beta})  \right), \\
    \Sigma_{\theta | \Omega, F, \Omega^F} &\to G^{-1} + G^{-1}P^{\sT}\Omega^{-1}\bigl(P\Omega^FP^{\sT}\bigr)\Omega^{-1}PG^{-1}.
\end{align*}
Thus, the posterior collapses to
\begin{align*}  
    \theta | F, \Omega^F = \theta | \Omega, F, \Omega^F \sim N(\alpha + F{\beta}, \Omega^F),
\end{align*}
effectively recovering the $\theta\leftrightarrow F$ relationship $\theta = \alpha + F\beta + \epsilon^F$ (\cref{def:thetaF_linear_model}) except losing the prior information on $\theta$.
Furthermore, reintroducing this prior leads to \cref{thm:SLP-BL_estimation_with_F}.

\end{remark}

The integral \eqref{eqn:BLF_estimation_GMM} is a form of Infinite Gaussian Mixture model (IGMM) \cite{rasmussen1999infinite}.
In general, there is no further closed‐form solution for it unless $\Omega$ is restricted to a special conjugate family or effectively collapses to a point mass (i.e., $\Omega$ is known and fixed). In non‐conjugate settings, the expression remains a continuous mixture of Gaussian distributions, and must be evaluated or approximated numerically (e.g. via Monte Carlo or approximation methods \cite{newman1999monte,kruschke2010bayesian,wainwright2008graphical,blei2017variational}).

Since there is no trivial conjugate prior $\pi(\Omega)$ for the likelihood $\theta | \Omega, F, \Omega^F$, here we offer an approximation method.
We first substitute $\Omega$ with $\Sigma_{\theta | \Omega, F, \Omega^F}$. 
Then, we approximate the mean of the likelihood $\mu_{\theta | \Omega, F, \Omega^F}$ as a constant. 
Finally, we assign a conjugate prior to $\Sigma_{\theta | \Omega, F, \Omega^F}$ as an Inverse-Wishart (IW) distribution. 
This allows us to obtain a tractable joint distribution $p(\theta, F, \Omega^F)$ --- specifically a Normal-Inverse-Wishart (NIW) distribution.
As a result, the posterior mean distribution $p(\theta | F, \Omega^F)$ follows a student-t distribution after marginalizing out $\Sigma_{\theta | \Omega, F, \Omega^F}$:

\begin{corollary}[Conjugate Prior]
\label{lemma:conjugate-prior-example}
Consider a FIV-BL model $(\theta, r, q, \Omega, F, \Omega^F)$ (\cref{def:FIV-BL}) with constants $(P,\Sigma,\theta_0,\Sigma_0,\alpha,\beta, \Omega_0) 
  \in 
  \mathbb{R}^{k\times m} \times \mathbb{R}^{m\times m} \times \mathbb{R}^m \times \mathbb{R}^{m\times m} \times \mathbb{R}^m \times \mathbb{R}^{dm} \times \mathbb{R}^{k\times k}$.
Define $G \coloneqq \Sigma_{0}^{-1} + P^\sT\Omega^{-1}P$. 
Assume $\theta | \Omega,F,\Omega^F \sim N(\mu^\prime,\Sigma^\prime)$, 
\begin{align*}
\text{where }
\begin{cases}
\mu^\prime 
  \coloneqq 
  G^{-1}\Bigl(\Sigma_{0}^{-1}\theta_{0} + P^\sT\Omega_0^{-1}P[\alpha + F\beta]\Bigr),  \\
  \Sigma^\prime 
  \coloneqq G^{-1} + G^{-1}P^\sT\Omega^{-1}\bigl(P\Omega^FP^\sT\bigr)\Omega^{-1}PG^{-1}. 
\end{cases}
\end{align*}
Assume $\Sigma^\prime$ have an Inverse-Wishart prior:
\begin{align}
  \pi(\Sigma^\prime) 
  = 
  IW(\Sigma^\prime;\Psi^\prime,\nu^\prime), \quad (\Psi^\prime,\nu^\prime)\in \R^{m}\times \R
  \label{eqn:IW_prior}
\end{align}
Then the marginal posterior of $\theta$ given $(F,\Omega^F)$ follows a multivariate-$t$ distribution:
\begin{align*}
  p(\theta \mid F,\Omega^F) 
  &\sim 
  t_{\nu^\prime}\Bigl(\theta;\mu^\prime,\tfrac{\Psi^\prime}{\nu^\prime - m + 1}\Bigr).
\end{align*}
\end{corollary}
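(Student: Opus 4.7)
The plan is to treat this as a textbook Normal--Inverse-Wishart conjugacy calculation: form the joint density, absorb the Gaussian quadratic form into the Inverse-Wishart kernel, integrate out $\Sigma^\prime$, and then read off the multivariate-$t$ distribution via the matrix determinant lemma.

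First, under the corollary's approximation treating $\mu^\prime$ as constant in $\Sigma^\prime$, I would write the joint density
\begin{align*}
p(\theta,\Sigma^\prime \mid F,\Omega^F)
\;\propto\;
|\Sigma^\prime|^{-1/2}\,\exp\!\Bigl(-\tfrac{1}{2}(\theta-\mu^\prime)^\sT(\Sigma^\prime)^{-1}(\theta-\mu^\prime)\Bigr)
\;\cdot\;
|\Sigma^\prime|^{-(\nu^\prime+m+1)/2}\,\exp\!\Bigl(-\tfrac{1}{2}\operatorname{tr}\!\bigl(\Psi^\prime(\Sigma^\prime)^{-1}\bigr)\Bigr),
\end{align*}
and then use the scalar-trace identity $x^\sT A x = \operatorname{tr}(A\,x x^\sT)$ to combine both exponentials into a single trace, yielding an integrand proportional to $|\Sigma^\prime|^{-(\nu^\prime+m+2)/2}\exp\!\bigl(-\tfrac{1}{2}\operatorname{tr}[(\Psi^\prime+(\theta-\mu^\prime)(\theta-\mu^\prime)^\sT)(\Sigma^\prime)^{-1}]\bigr)$.

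Second, I would recognize this integrand as an unnormalized $IW\bigl(\Psi^\prime+(\theta-\mu^\prime)(\theta-\mu^\prime)^\sT,\;\nu^\prime+1\bigr)$ density in $\Sigma^\prime$. Marginalizing $\Sigma^\prime$ by matching to the Inverse-Wishart normalizing constant then produces a factor (up to $\theta$-independent constants) equal to
\begin{align*}
\bigl|\Psi^\prime+(\theta-\mu^\prime)(\theta-\mu^\prime)^\sT\bigr|^{-(\nu^\prime+1)/2}.
\end{align*}
Applying the matrix determinant lemma $|\Psi^\prime + uu^\sT| = |\Psi^\prime|\,(1 + u^\sT(\Psi^\prime)^{-1}u)$ with $u = \theta-\mu^\prime$ reduces the marginal to $\bigl(1+(\theta-\mu^\prime)^\sT(\Psi^\prime)^{-1}(\theta-\mu^\prime)\bigr)^{-(\nu^\prime+1)/2}$, which I would then match against the standard multivariate-$t$ kernel to identify location $\mu^\prime$, scale $\Psi^\prime/(\nu^\prime-m+1)$, and the stated degrees of freedom.

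The main obstacle is bookkeeping rather than insight: tracking the $|\Sigma^\prime|$ exponents carefully (the $-1/2$ from the Gaussian plus $-(\nu^\prime+m+1)/2$ from the Inverse-Wishart), ensuring the trace manipulation preserves all constants, and then reconciling the resulting kernel with the paper's chosen parametrization of $t_{\nu^\prime}(\theta;\mu^\prime,\Psi^\prime/(\nu^\prime-m+1))$, since Inverse-Wishart and multivariate-$t$ conventions differ across the literature. If the conventions line up as implied by the statement, no further work is needed; otherwise the degrees-of-freedom parameter must be adjusted to match the standard Normal--Inverse-Wishart-to-$t$ identity.
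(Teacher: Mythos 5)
Your proposal is correct and follows essentially the same route as the paper: the paper's proof simply packages your computation into two auxiliary lemmas, first recognizing the joint $p(\theta,\Sigma^\prime \mid F,\Omega^F)$ as a Normal--Inverse--Wishart density and then marginalizing $\Sigma^\prime$ exactly as you do --- combining the exponents into a single trace with $S = \Psi^\prime + (\theta-\mu^\prime)(\theta-\mu^\prime)^\sT$, integrating via the Inverse--Wishart normalizing constant to obtain $|S|^{-(\nu^\prime+1)/2}$, and applying the matrix determinant lemma to match the $t_{\nu^\prime-m+1}$ kernel with scale $\Psi^\prime/(\nu^\prime-m+1)$. Your convention check at the end resolves exactly as you anticipate (the exponent identification $\nu+m=\nu^\prime+1$ gives the stated parametrization), so no adjustment is needed.
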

\begin{proof}
See \cref{pf:lemma_3.1_conjugate_prior} for a detailed proof.
\end{proof}

Since $t$-distribution lacks a conjugate prior, we omit intrinsic covariance $\Sigma$ in estimating unobserved asset returns $\tilde{r}$: 
\begin{corollary}[Approximated Predictive Estimation by the FIV-BL Model]
\label{corol:FIV_BL_prediction_with_F}
Assume $r\sim N(\theta, \Sigma)$. 
Under \cref{thm:FIV-BL_estimation_with_F}, 
considering $\theta|\Omega,F,\Omega^F \sim N\left(\mu^\prime,\Sigma^\prime\right)$, assume: 
\begin{align*}
    &\Sigma^\prime \sim IW(\Psi^\prime, \nu^\prime), \\
    &\mu^\prime \text{ is a constant }G^{-1}\left(\Sigma_{0}^{-1}\theta_{0} + P^{\sT}\Omega^{-1}_0 P(\alpha + F{\beta})  \right). 
\end{align*}
Then, as $\Sigma \to 0$, FIV-BL model gives the predictive estimation $\tilde{r} \coloneqq r | F,\Omega^F$ as
\begin{align*}
    \tilde{r}
    &\sim t_{\nu^\prime}\left(\tilde{r}; \mu^\prime, \frac{\Psi^\prime}{\nu^\prime - m + 1}\right).
\end{align*}

\end{corollary}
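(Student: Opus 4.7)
The plan is to derive the predictive density by marginalizing $\theta$ out of the joint model and then passing to the degenerate limit $\Sigma\to 0$ to eliminate the incompatible Gaussian likelihood.

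First, I would write the predictive density as
\[
p(\tilde r \mid F,\Omega^F) \;=\; \int p(r\mid\theta)\,p(\theta\mid F,\Omega^F)\,d\theta,
\]
where the likelihood $p(r\mid\theta)=N(r;\theta,\Sigma)$ comes from the stated assumption $r\sim N(\theta,\Sigma)$, and the marginal posterior $p(\theta\mid F,\Omega^F)$ is supplied directly by \cref{lemma:conjugate-prior-example}, which establishes that after placing the Inverse-Wishart prior $IW(\Psi^\prime,\nu^\prime)$ on $\Sigma^\prime$ and marginalizing it out, one obtains the multivariate-$t$ density $t_{\nu^\prime}\bigl(\theta;\mu^\prime,\Psi^\prime/(\nu^\prime-m+1)\bigr)$. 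This reduces the corollary to evaluating the convolution of a Gaussian likelihood in $r$ with a multivariate-$t$ posterior on its location $\theta$.

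Second, I would flag the main obstacle: this convolution does not admit a closed-form multivariate-$t$ expression, because the Student-$t$ family has no conjugate Gaussian prior for its location parameter. This is exactly the non-tractability the paper flags immediately before the statement. The resolution is to pass to the degenerate regime $\Sigma\to 0$, under which $N(r;\theta,\Sigma)$ converges weakly to the Dirac mass $\delta_\theta$ concentrated at $\theta$. Substituting this limit into the marginalization integral collapses it to pointwise evaluation of the $t$-density at $r$, giving
\[
p(\tilde r\mid F,\Omega^F) \;\longrightarrow\; t_{\nu^\prime}\!\left(\tilde r;\mu^\prime,\frac{\Psi^\prime}{\nu^\prime-m+1}\right),
\]
which is the claimed formula.

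The hard part is therefore conceptual rather than computational: one must acknowledge that the statement is an approximation whose validity is conditional on $\Sigma$ being negligible relative to the posterior scale $\Psi^\prime/(\nu^\prime-m+1)$. A careful write-up would record this caveat, identify the non-existence of a conjugate-normal prior for the $t$-distribution's location as the precise source of the obstruction, and justify the limit either via weak convergence of Gaussians to the Dirac measure or, equivalently, via pointwise convergence of the predictive mean to $\mu^\prime$ and of the predictive scale $\Sigma+\Psi^\prime/(\nu^\prime-m+1)$ to $\Psi^\prime/(\nu^\prime-m+1)$, mirroring the structurally identical collapse already used in \cref{remark:perfect_info_limit}.
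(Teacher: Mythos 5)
Your proposal is correct and takes essentially the same route as the paper, which states this corollary without a separate proof precisely because it is the immediate consequence you describe: \cref{lemma:conjugate-prior-example} supplies the multivariate-$t$ posterior $p(\theta \mid F,\Omega^F)$, the non-existence of a conjugate prior for the $t$ location (flagged by the paper just before the statement) blocks any closed-form convolution with $N(r;\theta,\Sigma)$, and the limit $\Sigma \to 0$ collapses the Gaussian likelihood to $\delta_\theta$ so that $\tilde r$ inherits the $t_{\nu^\prime}\bigl(\mu^\prime, \Psi^\prime/(\nu^\prime - m + 1)\bigr)$ law. Your only loose aside is treating $\Sigma + \Psi^\prime/(\nu^\prime - m + 1)$ as the exact predictive scale at finite $\Sigma$ (no exact $t$ form exists there, and the $t$ covariance carries an extra $\nu^\prime/(\nu^\prime-2)$ factor), but this does not affect the argument since the Dirac-limit justification stands on its own.
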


\cref{corol:FIV_BL_prediction_with_F} also estimates asset returns without the views, i.e., solves \cref{prob:FbPred}.
With the predictive estimation of asset returns $\tilde{r}$, the mean-variance optimization framework (\cref{def:UMVO}) determines the portfolio weights $w_{\rm FIV-BL}$.
See \cref{sec:practitioner} for the selection of $\{\Sigma, \Sigma_0, \theta_0, P, \alpha, \beta, \Omega^F, \Psi^\prime, \nu^\prime, \Omega_0\}$.

\section{Proof-of-Concept Experiments}
\label{sec:exp}
Depart from the classic Black-Litterman model that relies on subjective investor views, our model estimates the posterior distribution over asset returns directly from the feature data. 
To demonstrate this concept, we consider the setting without subjective investor views.
Specifically, we focus on integrating asset-specific features, as discussed in \cref{remark:rationale_of_sep_effects}, and choose the SLP-BL model (\cref{def:SLP-BL}) accordingly.
We show the model works under this setting and consistently outperforms the benchmarks.

\textbf{Dataset I: SPDR Sector ETFs.}
We collect adjusted daily closing prices and volume for 11 Sector ETFs (\cref{table:sp_components}) from April 13, 2004 to February 22, 2024 (20 years).
To avoid selection bias, the portfolio selection list is updated in sync with the introduction of new sectors.

\textbf{Dataset II: Dow Jones Index.}
{We collect adjusted daily closing prices and volume for 41 stocks (\cref{table:dji_components}) that have been part of the Dow Jones index from January 5, 1994 to February 22, 2024 (30 years).}
To avoid selection bias, the portfolio selection list is updated in sync with the index.

\textbf{Backtest Task.}
{We backtest our SLP-BL model for each dataset period.}
On each monthly rebalance day, the model outputs a portfolio weight $w_{\rm SLP-BL}$ that maximizes the Sharpe ratio (\cref{def:MVOS}), a standardized mean-variance optimization framework (\cref{def:UMVO}).
In the model, the prior is set as traditional Markowitz model and the features are selected based on nine generic indicators (\cref{table:indicators}) derived from asset-specific data.
We follow \cref{sec:practitioner} for the choice of $\{\Sigma, \Sigma_0, \theta_0, \alpha^F, \beta^F, \Omega^F\}$ except that, to avoid the issues of mismatch scale, we use price and indicators data to derive the regression parameters.

\textbf{Benchmarks and Evaluation.}
The benchmarks of our portfolio model are set as (i) market index (e.g. S\&P500 and DJIA) (ii) equal-weighted portfolio model (iii) traditional Markowitz model.
We evaluate the models with the following metrics: Cumulative Return, Compound Annual Growth Rate (CAGR), Sharpe Ratio, Maximum Drawdown, and Volatility.
We present the results for five pairs of traditional Markowitz model and our Black-Litterman model with varying rolling window lengths of historical returns:
50 days, 80 days, 100 days, 120 days, and 150 days.

\textbf{Results.}
\label{exp_results}
The SLP-BL model consistently outperforms both traditional Markowitz model and market indices across two datasets 
{(\cref{table:spdr_results_transposed,table:djia_results_transposed,table:avg_turnover_rate,fig:djia_cumulative_returns_plot,fig:spy_cumulative_returns_plot,fig:mv_turnover_rate_plot,fig:bl_turnover_rate_plot}).
This is attributed to the more stable portfolio weights based on Bayesian framework, as shown by \cref{fig:mv_spy_asset_allocation_plot,fig:spy_asset_allocation_plot}.
}

\begin{table}[!ht]
\centering
\caption{Performance on SPDR Sector ETFs Dataset.}
\begin{tabular}{@{}lccccc@{}}
\toprule
 & \makecell{Cumulative\\ Return (\%) $\uparrow$} & \makecell{CAGR\\ (\%) $\uparrow$} & \makecell{Sharpe\\ Ratio $\uparrow$} & \makecell{Max\\ Drawdown (-\%) $\downarrow$} & \makecell{Volatility\\ (\%/ann.) $\downarrow$} \\ 
\midrule
EQW & 450.74 & 6.11 & 0.61 & 44.90 & 16.40 \\
S\&P500 & 545.77 & 6.69 & 0.59 & 55.19 & 19.03 \\
\midrule
MV (50d) & 134.12 & 3.00 & 0.35 & 53.11 & 15.99 \\
BL (50d) & 541.99\cellcolor{LightCyan}& 6.67\cellcolor{LightCyan}& 0.66\cellcolor{LightCyan}& 46.56\cellcolor{LightCyan}& 16.24\cellcolor{LightCyan}\\
\midrule
MV (80d) & 291.21 & 4.85 & 0.50 & 38.58 & 16.33 \\
BL (80d) & 609.66\cellcolor{LightCyan}& 7.04\cellcolor{LightCyan}& 0.69\cellcolor{LightCyan}& 46.78\cellcolor{LightCyan}& 16.12\cellcolor{LightCyan}\\
\midrule
MV (100d) & 411.83 & 5.84 & 0.57 & 36.37 & 16.91 \\
BL (100d) & 602.75\cellcolor{LightCyan}& 7.01\cellcolor{LightCyan}& 0.70\cellcolor{LightCyan}& 46.05\cellcolor{LightCyan}& 15.91\cellcolor{LightCyan}\\
\midrule
MV (120d) & 412.87 & 5.84 & 0.57 & 36.10 & 17.12 \\
BL (120d) & 587.50\cellcolor{LightCyan}& 6.93\cellcolor{LightCyan}& 0.70\cellcolor{LightCyan}& 46.11\cellcolor{LightCyan}& 15.74\cellcolor{LightCyan}\\
\midrule
MV (150d) & 249.11 & 4.44 & 0.45 & 47.49 & 17.37 \\
BL (150d) & 556.13\cellcolor{LightCyan}& 6.75\cellcolor{LightCyan}& 0.68\cellcolor{LightCyan}& 44.54\cellcolor{LightCyan}& 15.91\cellcolor{LightCyan}\\
\bottomrule
\end{tabular}
\label{table:spdr_results_transposed}
\end{table}

\begin{table}[!ht]
\centering
\caption{Performance on Dow Jones Index Dataset.}
\begin{tabular}{@{}lccccc@{}}
\toprule
 & \makecell{Cumulative\\ Return (\%) $\uparrow$} & \makecell{CAGR\\ (\%) $\uparrow$} & \makecell{Sharpe\\ Ratio $\uparrow$} & \makecell{Max\\ Drawdown (-\%) $\downarrow$} & \makecell{Volatility\\ (\%/ann.) $\downarrow$} \\ 
\midrule
{EQW} 
& 4,606.66 
& 9.22
& 0.75 
& 58.90 
& 19.54 \\
{DJIA} 
& 932.51 
& 5.49 
& 0.52 
& 53.78 
& 17.97 \\
\midrule
{MV (50d)} 
& 774.08 
& 5.09
& 0.45 
& 63.35 
& 20.83 \\
{BL (50d)} 
& 3,980.23\cellcolor{LightCyan}
& 8.86\cellcolor{LightCyan}
& 0.78\cellcolor{LightCyan}
& 42.42\cellcolor{LightCyan}
& 17.84\cellcolor{LightCyan}\\
\midrule
{MV (80d)} 
& 1,081.47 
& 5.82 
& 0.51 
& 53.98 
& 20.26 \\
{BL (80d)} 
& 4,603.82\cellcolor{LightCyan}
& 9.22\cellcolor{LightCyan}
& 0.84\cellcolor{LightCyan}
& 39.95\cellcolor{LightCyan}
& 16.81\cellcolor{LightCyan}\\
\midrule
{MV (100d)} 
& 1,529.60 
& 6.60
& 0.55 
& 56.06 
& 20.59 \\
{BL (100d)}  
& 4,557.03\cellcolor{LightCyan} 
& 9.19\cellcolor{LightCyan} 
& 0.85\cellcolor{LightCyan} 
& 39.92\cellcolor{LightCyan} 
& 16.56\cellcolor{LightCyan} \\
\midrule
{MV (120d)} 
& 1,577.61 
& 6.67 
& 0.57 
& 46.73 
& 20.07 \\
{BL (120d)}  
& 4,819.83\cellcolor{LightCyan}
& 9.33\cellcolor{LightCyan}
& 0.87\cellcolor{LightCyan}
& 39.81\cellcolor{LightCyan}
& 16.42\cellcolor{LightCyan}\\
\midrule
{MV (150d)} 
& 2,208.84
& 7.45
& 0.62
& 41.02
& 20.03 \\
{BL (150d)}  
& 3,405.78\cellcolor{LightCyan}
& 8.49\cellcolor{LightCyan}
& 0.80\cellcolor{LightCyan}
& 40.42\cellcolor{LightCyan}
& 16.51\cellcolor{LightCyan}\\
\bottomrule
\end{tabular}
\label{table:djia_results_transposed}
\end{table}

\clearpage
\section{Discussion and Conclusion}
\label{sec:conclusion}
We propose a Bayesian reformulation of the Black-Litterman model for portfolio optimization without the need for subjective investor views.
Our key contribution is a unified Bayesian network that integrates features and infers parameters.
In the case of observed views (\cref{prob:FVHPred}), the network estimates asset returns based on a mix of two feature effects (\cref{thm:M-BL_estimation_with_FV}, \cref{corol:M_BL_prediction_with_FV}), generalizing the classical Black-Litterman model and recovering ground-truth estimation with perfect views (\cref{remark:ground_truth_recovery}). 
In the case of latent views (\cref{prob:FbPred}), we differentiate the feature effects to handle distinct features (\cref{remark:rationale_of_sep_effects}).
Accordingly, we present two models: the first provides closed-form asset return estimation (\cref{thm:SLP-BL_estimation_with_F}, \cref{corol:SLP_BL_prediction_with_F}), while the second results in a mixture model that requires numerical methods (\cref{thm:FIV-BL_estimation_with_F}, \cref{lemma:conjugate-prior-example}, \cref{corol:FIV_BL_prediction_with_F}).
Numerically, our model works without investor views and demonstrates consistent, hyperparameter-robust improvements over the Markowitz model and market indices across long-term, real-world datasets (\cref{exp_results}).

\section*{Impact Statement}
This work improves portfolio optimization by reducing subjective human inputs. It enhances transparency and promotes data-driven decision-making. The framework benefits both institutional and individual investors with more reliable and fair strategies. However, data-driven models may amplify biases, so careful evaluation is needed for fair outcomes. Overall, this work advances financial modeling and emphasizes ethical implementation.

\section*{Acknowledgments}
TL would like to thank Gamma Paradigm Research and NTU ABC Lab for support.
JH would like to thank Han Liu, Mimi Gallagher, Sara Sanchez, Dino Feng and Andrew Chen for enlightening discussions on related topics, and the Red Maple Family for support. 
The authors would like to thank the anonymous reviewers and program chairs for constructive comments.
JH is supported by the Northwestern University.
The content is solely the responsibility of the authors and does not necessarily represent the official views of the funding agencies.

\newpage
\appendix
\label{sec:append}
\part*{Appendix}
{
\setlength{\parskip}{-0em}
\startcontents[sections]
\printcontents[sections]{ }{1}{}
}

{
\setlength{\parskip}{-0em}
\startcontents[sections]
\printcontents[sections]{ }{1}{}
}

\clearpage

\section{Hyperparameter Selections}\label{sec:practitioner}
Here we provide a practical guide to derive hyperparameter set $\{\Sigma, \Sigma_0, \theta_0, \alpha^F, \beta^F, \Omega^F\}$ for \underline{S}hared-\underline{L}atent-\underline{P}arametrization \underline{B}lack-\underline{L}itterman model (SLP-BL model, \cref{def:SLP-BL}) and $\{\Sigma, \Sigma_0, \theta_0, P, \alpha, \beta, \Omega^F, \Psi^\prime, \nu^\prime, \Omega_0\}$ for \underline{F}eature-\underline{I}nfluenced-\underline{V}iews \underline{B}lack-\underline{L}itterman model (FIV-BL model, \cref{def:FIV-BL})

Among the entries of $\{\Sigma, \Sigma_0, \theta_0\}$, a practitioner first approximate $\Sigma$ using sample variance given $\{r_l\}^n_{l=1}$ and let $\Sigma_0 = \tau \Sigma$ be a matrix proportional to the covariance matrix, following \cite{salomons2007black}. Numerous papers \cite{black1992global,lee2000theory,ellison2004bayesian,idzorek2007step,salomons2007black} address the choice of the scaling factor $\tau$, mostly suggesting a constant in $(0,1]$. Given the approximated $\Sigma$ and $\Sigma_0$, one obtains $\theta_0$ by \cref{lemma:Rev_Opt}.

In this work, we take $P=I$ as a $m \times m$ identity matrix because the features $F$ are asset-specific (\cref{prob:FVHPred,prob:FbPred}).
Given the historical observations $\{(r_l, F_l)\}_{l=1}^{n}$ in (\cref{prob:FVHPred,prob:FbPred}), define:
\begin{align*}
    \bar{r} \coloneqq \frac{1}{n} \sum_{l=1}^{n} r_l, 
    \quad \tilde{r}_l \coloneqq r_l - \bar{r}, 
    \quad \bar{F} \coloneqq \frac{1}{n} \sum_{l=1}^{n} F_l,
    \quad \tilde{F}_l \coloneqq F_l - \bar{F}.
\end{align*}
The following context suggests how $\{(r_l, F_l)\}_{l=1}^{n}$ enables estimating $\{\hat{\Omega}^F, \hat{\alpha}^F, \hat{\beta}^F, \hat{\alpha}, \hat{\beta}\}$.
We estimate the error matrix $\hat{\Omega}^F$ based on kernel density estimation on every feature.
Specifically, consider a rule-of-thumb bandwidth parameter \cite{silverman2018density}:
\begin{align*}
h = \left(\frac{4}{dm + 2}\right)^{\frac{2}{dm + 4}} n^{-\frac{2}{dm + 4}}
\end{align*}  
where $m$ is the number of assets, $d$ the number of features for each asset, and $n$ the sample size.

Recall that $f_i \in \R^{d}$ represent the features on the $i$-th asset and be part of the features $F \in \R^{m\times dm}$:
\begin{align*}
F \coloneqq \diag(f_1^\sT, f_2^\sT, \dots, f_m^\sT).
\end{align*}
we scale element-wise variances of $f_i$, or $\text{Var}(f_{i,j})$ for $j$-th feature of the $i$-th asset by $h$ to construct the diagonal matrix
\begin{align*}
{\tilde{H}} = \diag( h \cdot \text{Var}(f_{1,1}), \dots, h \cdot \text{Var}(f_{m,d}) ).
\end{align*}
Then, for each asset return $r_i$, we compute the ordinary least squares (OLS) coefficients ${B}_i$ and intercept $a_i$ by predictors $f_i$\footnote{One may adjust the targetted data to avoid the issues of misaligned scale. For example, price versus momentum indicators. }. 
These coefficients are aggregated into a block-diagonal matrix ${B} \in \mathbb{R}^{m \times dm}$. 
The covariance estimate is $\hat{\Omega}^F = {B}{\tilde{H}}{B}^\sT$.

For $\{\hat{\alpha}^F, \hat{\beta}^F\}$, we conduct maximum likelihood estimation based on the $\theta-F$ model (\cref{def:thetaF_linear_model}) and $r \sim N(\theta, \Sigma)$. By \cref{lemma:regression_estimation_MLE}, we obtain
\begin{align*}
    \hat{\alpha}^F = \bar{r} - \bar{F}\hat{\beta}^F, \quad
    \hat{\beta}^F = \Bigl(\sum_{l=1}^n \tilde{F}_l^{\top} (\hat{\Omega}^F + \Sigma)^{-1} \tilde{F}_l\Bigr)^{-1} \sum_{l=1}^n \tilde{F}_l^{\top} (\hat{\Omega}^F + \Sigma)^{-1} \tilde{r}_l.
\end{align*}
For $\{\hat{\alpha}, \hat{\beta} \}$, we conduct maximum likelihood estimation in the $q-F$ model (\cref{def:qF_linear_model}), assuming that the observed returns $r_l$ are samples from the noisy implied asset returns $r^F$ defined in \cref{remark:NIAR}.
By \cref{lemma:regression_estimation_MLE}, we obtain
\begin{align*}
    \hat{\alpha} = \bar{r} - \bar{F}\hat{\beta}, \quad
    \hat{\beta} = \Bigl(\sum_{l=1}^n \tilde{F}_l^{\top} (\hat{\Omega}^F)^{-1} \tilde{F}_l\Bigr)^{-1} \sum_{l=1}^n \tilde{F}_l^{\top} (\hat{\Omega}^F)^{-1} \tilde{r}_l. 
\end{align*}
For $\{\Psi^\prime, \nu^\prime, \Omega_0\}$, 
the conjugate prior parameters $(\Psi^\prime, \nu^\prime)$ in $ \Sigma^\prime \sim IW(\Psi^\prime, \nu^\prime)$ have distinct roles: $\Psi^\prime$ encodes prior knowledge about the covariance shape and scale, acting as a "pseudo-covariance matrix" with mean $ \mathbb{E}[\Sigma^\prime] = \Psi^\prime / (\nu^\prime - m + 1)$ (if $\nu^\prime > m - 1$). 
Larger $\Psi^\prime$ implies stronger prior beliefs about higher covariances. 
The degrees of freedom $\nu^\prime$ control prior strength, functioning as an effective sample size. Smaller $\nu^\prime$ allows the data to dominate, while larger $\nu^\prime$ enforces $\Psi^\prime$.
For a weakly informative prior, the rule of thumb is to set $\Psi^\prime = I$ (minimal informative scale matrix) and $\nu^\prime = m + 2$, assuming no strong prior correlations. \citet{gelman1995bayesian,hoff2009first,murphy2012machine} discuss details on this topic.

Selecting $(\Psi^\prime,\nu^\prime)$ allows us to compute the approximated constant $\Omega_0$. 
Since $\Sigma^\prime$ is a deterministic function of $(\Omega,F,\Omega^F)$, $\Omega$ is likewise a deterministic function of $(\Sigma^\prime, F, \Omega^F)$. 
Therefore, we let $\Omega_0 = \Omega(\Sigma^\prime_0,F,\Omega^F)$ where $\Sigma^\prime_0 = {\Psi^\prime}/({\nu^\prime - m + 1})$ is the mean of the distribution $IW(\Psi^\prime,\nu^\prime)$.

\section{Related Work}
\label{sec:related_works}

\subsection{Bayesian Portfolio Optimization}
\textbf{Why Bayesian? }To address the parameter estimation risk in traditional portfolio optimization shown by \cite{markowitz1952portfolio,kalymon1971estimation}, \citet{barry1974portfolio,klein1976effect,brown1976optimal} advocate Bayesian framework upon prior information in portfolio optimization. 
Foundational works by \cite{jorion1986bayes} and \cite{black1992global} demonstrate how Bayesian shrinkage improves covariance estimation, reducing overfitting and highly sensitive weight in Markowitz-style allocations \cite{meucci2005risk,demiguel2009optimal}. 
Subsequent studies on robust Bayesian portfolio optimization include multiple approaches such as uncertainty estimation \cite{qiu2015robust,yang2015robust}, alternative prior specifications (e.g., heavy-tailed or non-conjugate priors) \cite{garlappi2007portfolio,tu2010incorporating,tu2011markowitz}, advanced sampling methods \cite{michaud2007estimation,huang2021novel}, and regularized optimization considering transaction costs \cite{olivares2018robust}.

\textbf{Issues Around the Bayesian Framework. } While these methods leverage analytical tractability to incorporate historical data or expert views \cite{ulf2006portfolio}, they often rely on restrictive assumptions (e.g., conjugate priors) or subjective expert inputs. 
Recent advancements, such as Markov chain Monte Carlo (MCMC) methods \cite{greyserman2006portfolio}, relax these constraints, enabling inference in more complex hierarchical or time-series models. 
Meanwhile, contemporary approaches increasingly emphasize data-driven techniques for deriving expert inputs, including investor views in the Black-Litterman model \cite{black1992global}.

\subsection{Data-Driven Black-Litterman Model}

\textbf{Why Data-Driven? }Across decades, the heuristic framework for deriving investor views $(q,\Omega)$ in the Black-Litterman model attracts research working on estimating investor views \cite{beach2007application,palomba2008multivariate,duqi2014black,silva2017more,deng2018generalized,kara2019hybrid,kolm2021factor,teplova2023black}.
Early efforts to estimate $(q,\Omega)$ employ historical return data within GARCH frameworks, framing view derivation as a time series prediction task \cite{beach2007application,palomba2008multivariate,duqi2014black},
but financial time-series data often exhibit high noise and insufficient signal-to-noise ratios for reliable prediction \cite{gomez2001seasonal,christensen2014predicting}.

\textbf{Advancement in Generating Views. }
Recent advances mitigate the previously mentioned weaknesses in time series forecasting by integrating econometric models and machine learning --- e.g., GARCH with neural networks \cite{bildirici2009improving} or LSTM \cite{kim2018forecasting}, support vector machines \cite{perez2003estimating,kara2019hybrid}, grey systems \cite{huang2009hybrid}, and feature programming \cite{reneau2023feature}.
To embrace richer information, recent studies incorporate external data sources such as macroeconomic indicators \cite{zhou2009beyond,cheung2013augmented}, factors \cite{geyer2016black,kolm2017bayesian,kolm2021factor}.

\textbf{Neglected Issues from a Whole Perspective. }However, while these advanced methods achieve high forecast accuracy in isolation, errors can propagate through subsequent optimization pipelines when estimators $(q,\Omega)$ are naively embedded in the Black-Litterman framework.
\citet{finkel2006solving} addresses such issues of error propagation in multi-stage pipelines.
Furthermore, in some works, estimating $q$ and $\Omega$ independently risks misaligned confidence assumptions. 
Without joint modeling, overconfidence in views (low $\Omega$) might amplify errors in $q$, and thus distort portfolio weights.
\citet{guo2017calibration,kendall2017uncertainties} discuss such confidence calibration and uncertainty estimation.

\textbf{Introduction of Bayesian Network. }Meanwhile, prior work has explored the use of Bayesian networks for the Black-Litterman model, which serve distinct purposes such as transferring the approach to factor models \cite{kolm2017bayesian, kolm2021factor}, addressing multiple expert views \cite{chen2020generalized}, or generalizing to multi-period frameworks \cite{abdelhakmi2024multi}.
Yet, these approaches still rely on human experts to specify the parameters $(q,\Omega)$.

\textbf{Our Work. }To bridge these gaps—subjective inputs, error propagation, and incoherent estimation --- we propose our Bayesian network reformulation of the Black-Litterman model. 
This framework unifies historical or external features and latent investor views $(q,\Omega)$ into a single Bayesian network, enabling inference over parameters (\cref{thm:SLP-BL_estimation_with_F,thm:FIV-BL_estimation_with_F}) and asset returns (\cref{corol:SLP_BL_prediction_with_F,corol:FIV_BL_prediction_with_F}) directly from data. 
This eliminates reliance on heuristic inputs or disjointed estimators, ensuring coherent estimation and fully data-driven portfolio optimization.

\section{Supplementary Theoretical Backgrounds}
\subsection{Prior and Likelihood of the Black-Litterman-Bayes model}
\label{supplement:prior_likelihood_BLB}
Here we show how to model the prior and likelihood in the Black-Litterman-Bayes model (\cref{def:BLB}).
To obtain the prior $\pi(\theta)$, an investor sets a market portfolio and lets the prior be the market portfolio estimation.
In the absence of views $q$, the BLB model reduces to this market portfolio, producing an estimation (exactly the prior) on asset returns and outputting a market portfolio weight $w_{\rm market}$\footnote{\citet{idzorek2007step} names it Implied Equilibrium Return Vector.} by \cref{def:UMVO}.
For example, if the investor takes the traditional Markowitz model as the market portfolio, it would produce a normal distribution of the historical asset returns as the prior $\pi(\theta)$.

A commonly used market portfolio is market capitalization-weighted portfolio\footnote{A market capitalization-weighted portfolio performs a market capitalization-weighted index (e.g., S\&P 500).}.
In this case, the market portfolio weight is the market capitalization weight $w_{\rm cap}$\footnote{The weight vector proportional to each asset's market cap.}:

\begin{assumption}[Market Capitalization Equilibrium Prior]
\label{assumption:market_cap_prior}
    In the absence of views $q$, the BLB model (\cref{def:BLB}) produces an estimation of asset returns $\tilde{r}$ such that the mean-variance optimization framework (\cref{def:UMVO}) has an optimal argument $w^\star = w_{\rm cap}$. In other words, $\tilde{r}$ satisfied:
    \begin{align}
    \label{eqn:reverse_opt}
    \argmax_{w} \left\{ w^T \mathbb{E}[\tilde{r}] - \frac{\delta}{2}w^T \operatorname{Var}[\tilde{r}] w \right\} = w_{\rm cap}, 
    \end{align}
    where $\delta \in [0,\infty]$ is a given risk-adjusted coefficient.
\end{assumption}
With \cref{assumption:market_cap_prior}, we use a reverse optimization technique to derive the prior:

\begin{lemma}[Reverse Optimization for Prior, page 139 of \cite{satchell2007demystification}]
\label{lemma:Rev_Opt}
    Let $r\in\R^m$ be $m$ asset returns, parametrized by $\theta$, with $r \sim p(r | \theta)$. 
    Let the market capitalization weight on the $m$ assets be $w_{\rm cap} \in \R^m$ and $\delta \in [0,\infty]$ be a risk-adjusted coefficient. Assume
    \begin{align*}
        r \sim N(\theta, \Sigma), \quad \theta \sim N(\theta_0, \Sigma_0), 
    \end{align*}
    where $\Sigma, \Sigma_0 \in \R^{m \times m}$ are given intrinsic and prior covariance.
    Then the prior mean is
    \begin{align}
        \theta_0 = \delta (\Sigma+\Sigma_0) w_{\rm cap}.
    \end{align}
\end{lemma}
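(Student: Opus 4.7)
The plan is to derive $\theta_0$ by enforcing the market-capitalization equilibrium condition from \cref{assumption:market_cap_prior} on the first-order optimality condition of the unconstrained mean-variance program in \cref{def:UMVO}. The starting point is to identify the correct distribution to plug into the objective: since the BLB model with no views reduces to the prior, the predictive estimate $\tilde{r}$ is the marginal of $r$ after integrating out $\theta$.

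First I would compute this marginal. From $r \mid \theta \sim N(\theta,\Sigma)$ and $\theta \sim N(\theta_0,\Sigma_0)$, the standard Gaussian marginalization yields $\tilde{r} \sim N(\theta_0,\Sigma+\Sigma_0)$, so $\mathbb{E}[\tilde{r}]=\theta_0$ and $\mathrm{Cov}[\tilde{r}]=\Sigma+\Sigma_0$. Next I would substitute these moments into the quadratic MVO objective and take the gradient in $w$: the first-order condition $\mathbb{E}[\tilde{r}]-\delta\,\mathrm{Cov}[\tilde{r}]\,w=0$ gives the unique unconstrained maximizer
\begin{align*}
w^\star \;=\; \frac{1}{\delta}\,(\Sigma+\Sigma_0)^{-1}\,\theta_0.
\end{align*}
Imposing \cref{assumption:market_cap_prior}, namely $w^\star=w_{\rm cap}$, and solving for $\theta_0$ yields $\theta_0=\delta(\Sigma+\Sigma_0)\,w_{\rm cap}$, which is the claimed identity.

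The only genuine subtlety is justifying that the predictive covariance is $\Sigma+\Sigma_0$ rather than just $\Sigma_0$: one must remember that in the BLB framework (\cref{def:BLB}) the relevant object for portfolio choice is the predictive distribution of the \emph{returns} $r$, not of the parameter $\theta$ alone, so the intrinsic covariance $\Sigma$ adds to the parameter uncertainty $\Sigma_0$. Once this is recognized, the remaining manipulations are routine Gaussian algebra and a single matrix inversion, and the uniqueness of the maximizer follows from positive-definiteness of $\Sigma+\Sigma_0$ which makes the MVO objective strictly concave in $w$. I do not anticipate any nontrivial obstacles beyond this modeling point.
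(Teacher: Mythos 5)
Your proposal is correct and is exactly the intended argument: the paper itself defers to \cite{satchell2007demystification} for this lemma, but the reverse-optimization derivation you give --- marginalizing to get $\tilde{r} \sim N(\theta_0, \Sigma + \Sigma_0)$, taking the first-order condition of the objective in \cref{def:UMVO}, and imposing $w^\star = w_{\rm cap}$ per \cref{assumption:market_cap_prior} --- is the standard proof, and your identification of $\mathrm{Cov}[\tilde{r}] = \Sigma + \Sigma_0$ (rather than $\Sigma_0$ alone) is precisely the point the paper emphasizes in its footnote on intrinsic covariance. No gaps.
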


To obtain the likelihood function $L(\theta | q)$, we assume a probabilistic relationship between parameter $\theta$ and views $q$:
\begin{assumption}[Classical Noisy Views Model, page 35 of \cite{black1992global}]
\label{assumption:noisy_views}
Let $r\in\R^m$ be the returns of the $m$ assets, parametrized by $\theta$, with $r \sim p(r | \theta)$.
Let $P \in \R^{k\times m}$ be the specified portfolio weight matrix.
Let $q \in \R^k$ represent the views on the returns of the $k$ specified portfolio and $\Omega \in \R^{k \times k}$ be the uncertainty matrix.
Assume
    \begin{align*}
    P \theta = q + \epsilon, \quad \epsilon \sim  N(0, \Omega).
    \end{align*}
\end{assumption}

Under \cref{assumption:market_cap_prior,assumption:noisy_views}, we can derive the Black-Litterman formula (\cref{thm:classical_BL}) as the posterior estimation on $\theta$ of the Black-Litterman-Bayes model (\cref{def:BLB}).

\section{Axillary Lemmas}
\subsection{Integral of the Product of Two Gaussian Distributions}
\begin{lemma}[Integral of the Product of Two Gaussian Distributions, page 266 of \cite{aroian1947probability}]
\label{lemma:integral_product_gaussians}
Let $x \in \mathbb{R}^n$, and let $N(x; \mu_1, \Sigma_1)$ and $N(x; \mu_2, \Sigma_2)$ be two multivariate Gaussian distributions with means $\mu_1, \mu_2 \in \mathbb{R}^n$ and positive definite covariance matrices $\Sigma_1, \Sigma_2 \in \mathbb{R}^{n \times n}$, respectively. Then, the integral of their product over $\mathbb{R}^n$ is given by:
\begin{align}
\int N(x; \mu_1, \Sigma_1) 
&N(x; \mu_2, \Sigma_2) dx = N(\mu_1; \mu_2, \Sigma_1 + \Sigma_2).
\end{align}
\end{lemma}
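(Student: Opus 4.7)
The plan is to reduce the claim to the completely standard identity that the product of two Gaussian densities, viewed as a function of the common argument $x$, is a rescaled Gaussian whose rescaling constant is itself a Gaussian evaluated at $\mu_1-\mu_2$. Concretely, I would write
\[
N(x;\mu_1,\Sigma_1)\,N(x;\mu_2,\Sigma_2) \;=\; c(\mu_1,\mu_2,\Sigma_1,\Sigma_2)\,N(x;\mu_\ast,\Sigma_\ast),
\]
where $\Sigma_\ast \coloneqq (\Sigma_1^{-1}+\Sigma_2^{-1})^{-1}$ and $\mu_\ast \coloneqq \Sigma_\ast(\Sigma_1^{-1}\mu_1+\Sigma_2^{-1}\mu_2)$. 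Integrating both sides in $x$ kills the second factor (it integrates to $1$), so the lemma reduces to showing $c = N(\mu_1;\mu_2,\Sigma_1+\Sigma_2)$.

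For the key algebra I would complete the square in the exponent. Expanding, the combined quadratic is
\[
-\tfrac12\bigl[(x-\mu_1)^\sT\Sigma_1^{-1}(x-\mu_1) + (x-\mu_2)^\sT\Sigma_2^{-1}(x-\mu_2)\bigr]
= -\tfrac12(x-\mu_\ast)^\sT\Sigma_\ast^{-1}(x-\mu_\ast) + R,
\]
where $R$ collects the $x$-independent leftovers. A short manipulation using the identity $\Sigma_1\Sigma_\ast^{-1}\Sigma_2 = \Sigma_1+\Sigma_2$ (equivalently $\Sigma_\ast = \Sigma_1(\Sigma_1+\Sigma_2)^{-1}\Sigma_2$) collapses $R$ to $-\tfrac12(\mu_1-\mu_2)^\sT(\Sigma_1+\Sigma_2)^{-1}(\mu_1-\mu_2)$. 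The determinant normalization factor $[(2\pi)^n\det\Sigma_1\det\Sigma_2/\det\Sigma_\ast]^{-1/2}$ simplifies to $[(2\pi)^n\det(\Sigma_1+\Sigma_2)]^{-1/2}$ via the same matrix identity (taking determinants of $\Sigma_\ast = \Sigma_1(\Sigma_1+\Sigma_2)^{-1}\Sigma_2$). Combining the exponent and normalization yields exactly $N(\mu_1;\mu_2,\Sigma_1+\Sigma_2)$.

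As a cross-check (and a cleaner, probabilistic alternative I would include as a remark), let $X\sim N(\mu_1,\Sigma_1)$ and $Y\sim N(\mu_2,\Sigma_2)$ be independent. Then
\[
\int N(x;\mu_1,\Sigma_1)\,N(x;\mu_2,\Sigma_2)\,dx = \int f_X(x)f_Y(x)\,dx = f_{X-Y}(0),
\]
where the last equality uses the convolution formula together with the symmetry $f_Y(x)=f_{-Y}(-x)$. Since $X-Y\sim N(\mu_1-\mu_2,\Sigma_1+\Sigma_2)$, we get $f_{X-Y}(0)=N(0;\mu_1-\mu_2,\Sigma_1+\Sigma_2)=N(\mu_1;\mu_2,\Sigma_1+\Sigma_2)$, matching the direct calculation.

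The only real obstacle is bookkeeping: showing that $R$ reduces to the clean quadratic in $(\mu_1-\mu_2)$ and that the determinant factor collapses to $\det(\Sigma_1+\Sigma_2)$. Both follow from the same Woodbury-type identity $\Sigma_\ast = \Sigma_1(\Sigma_1+\Sigma_2)^{-1}\Sigma_2$, so once that is in hand the proof is essentially a one-line verification. The probabilistic route above sidesteps the algebra entirely and is what I would ultimately present if brevity is preferred.
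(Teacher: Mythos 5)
Your main argument is exactly the paper's proof: both complete the square in the exponent with precision matrix $\Sigma_\ast^{-1}=\Sigma_1^{-1}+\Sigma_2^{-1}$ (the paper's $A$), integrate out the resulting Gaussian in $x$, and then collapse the leftover normalization via the determinant identity $\det\Sigma_1\det\Sigma_2/\det\Sigma_\ast=\det(\Sigma_1+\Sigma_2)$ and the exponent identity reducing the residual $R$ to $-\tfrac12(\mu_1-\mu_2)^\top(\Sigma_1+\Sigma_2)^{-1}(\mu_1-\mu_2)$ --- the paper states these as equations \eqref{eqn:gaus_prod_variance} and \eqref{eqn:gaus_prod_exp}, and your Woodbury-type identity $\Sigma_\ast=\Sigma_1(\Sigma_1+\Sigma_2)^{-1}\Sigma_2$ is the correct engine for verifying both, so your route is sound and organizationally slightly cleaner (you factor out the normalized Gaussian first, making the integration step trivial). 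Your probabilistic cross-check is a genuinely different and correct argument the paper does not use: writing $\int f_X f_Y\,dx = f_{X-Y}(0)$ for independent $X\sim N(\mu_1,\Sigma_1)$, $Y\sim N(\mu_2,\Sigma_2)$ reduces the lemma to the standard fact $X-Y\sim N(\mu_1-\mu_2,\Sigma_1+\Sigma_2)$, sidestepping all completion-of-squares and determinant bookkeeping; its only cost is that it invokes the closure of Gaussians under independent sums (itself usually proved by characteristic functions), whereas the paper's computation is fully self-contained, which matters here since the lemma is used as a low-level workhorse in the proofs of Lemma \ref{lemma:BLB_prediction} and Theorem \ref{thm:FIV-BL_estimation_with_F}.
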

\begin{proof}
The product of two Gaussian PDFs is
\begin{align}
&N(x; \mu_1, \Sigma_1)N(x; \mu_2, \Sigma_2)= \frac{1}{(2\pi)^n |\Sigma_1|^{1/2}|\Sigma_2|^{1/2}} \exp\left(-\frac{1}{2}Q\right),
\label{eqn:gaussian_prod}
\end{align}
where $Q = (x - \mu_1)^\top\Sigma_1^{-1}(x - \mu_1) + (x - \mu_2)^\top\Sigma_2^{-1}(x - \mu_2)$. 

Expanding and combining terms:
\begin{align*}
Q &= x^\top(\Sigma_1^{-1} + \Sigma_2^{-1})x - 2x^\top(\Sigma_1^{-1}\mu_1 + \Sigma_2^{-1}\mu_2) + c \nonumber \\
&= (x - A^{-1}b)^\top A(x - A^{-1}b) - b^\top A^{-1}b + c,
\end{align*}
with $A = \Sigma_1^{-1} + \Sigma_2^{-1}$, $b = \Sigma_1^{-1}\mu_1 + \Sigma_2^{-1}\mu_2$, and $c = \mu_1^\top\Sigma_1^{-1}\mu_1 + \mu_2^\top\Sigma_2^{-1}\mu_2$. 

Substituting back to \eqref{eqn:gaussian_prod}:
\begin{align*}
& ~ N(x; \mu_1, \Sigma_1)N(x; \mu_2, \Sigma_2) \\
= & ~ 
\frac{1}{(2\pi)^n |\Sigma_1|^{1/2}|\Sigma_2|^{1/2}} \exp\left(-\frac{1}{2}(x - A^{-1}b)^\top A(x - A^{-1}b) + \frac{1}{2}b^\top A^{-1}b - \frac{1}{2}c\right).
\end{align*}
Integrate over $x \in \mathbb{R}^n$:
\begin{align}
\int_{\mathbb{R}^n} N(x; \mu_1, \Sigma_1)N(x; \mu_2, \Sigma_2)dx &= \frac{(2\pi)^{n/2}|A^{-1}|^{1/2}}{(2\pi)^n |\Sigma_1|^{1/2}|\Sigma_2|^{1/2}} \exp\left(\frac{1}{2}b^\top A^{-1}b - \frac{1}{2}c\right). \label{eqn:gaus_prod_int}
\end{align}

Using $|{A}^{-1}| = \frac{|{\Sigma}_1||{\Sigma}_2|}{|{\Sigma}_1 + {\Sigma}_2|}$, we have
\begin{align}
\frac{|{A}^{-1}|^{1/2}}{|{\Sigma}_1|^{1/2}|{\Sigma}_2|^{1/2}} &= \frac{1}{|{\Sigma}_1 + {\Sigma}_2|^{1/2}}.
\label{eqn:gaus_prod_variance}
\end{align}
Simplify the exponential term in \eqref{eqn:gaus_prod_int} using the identity:
\begin{align}
\frac{1}{2}{b}^\top{A}^{-1}{b} - \frac{1}{2}c &= -\frac{1}{2}({\mu}_1 - {\mu}_2)^\top({\Sigma}_1 + {\Sigma}_2)^{-1}({\mu}_1 - {\mu}_2).
\label{eqn:gaus_prod_exp}
\end{align}

Thus, by \eqref{eqn:gaus_prod_variance} and \eqref{eqn:gaus_prod_exp}, \eqref{eqn:gaus_prod_int} becomes
\begin{align*}
\int N({x}; {\mu}_1, {\Sigma}_1)N({x}; {\mu}_2, {\Sigma}_2)d{x} &= \frac{1}{(2\pi)^{n/2}|{\Sigma}_1 + {\Sigma}_2|^{1/2}} \exp\left(-\frac{1}{2}({\mu}_1 - {\mu}_2)^\top({\Sigma}_1 + {\Sigma}_2)^{-1}({\mu}_1 - {\mu}_2)\right) \nonumber \\
&= N({\mu}_1; {\mu}_2, {\Sigma}_1 + {\Sigma}_2).
\end{align*}
This completes the proof.
\end{proof}

\subsection{Sufficient Statistic for \texorpdfstring{$\Omega$}{} in \texorpdfstring{\cref{lemma:conjugate-prior-example}}{}}

\begin{lemma}
\label{lemma:sigma_sufficiency}
Consider the hierarchical model where
\begin{align*}
\theta \mid \Omega, A, B
&\sim N\bigl(\mu(\Omega, A, B),, \Sigma(\Omega, A, B)\bigr),
\end{align*}
with $\Omega$ as a parameter matrix, and $A$ and $B$ as fixed matrices. The mean $\mu(\Omega, A, B)$ and covariance $\Sigma(\Omega, A, B)$ depend on $(\Omega, A, B)$.
Then, the conditional distribution $p(\theta | \Omega, A, B)$ can be expressed only in terms of $\Sigma(\Omega, A, B), A, B$ if and only if, for all pairs $(\Omega_1, \Omega_2)$ such that $\Sigma(\Omega_1, A, B) = \Sigma(\Omega_2, A, B)$, we also have $\mu(\Omega_1, A, B) = \mu(\Omega_2, A, B)$. That is,
\begin{align*}
&\mu(\Omega, A, B) \text{ is determined by } \Sigma(\Omega, A, B), A, B \\
&\Leftrightarrow  p(\theta \mid \Omega, A, B)
= p\bigl(\theta \mid \Sigma(\Omega, A, B), A, B\bigr).
\end{align*}
\end{lemma}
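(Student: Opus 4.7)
The plan is to prove the two directions separately, using only the fact that a multivariate Gaussian is uniquely determined by its mean vector and covariance matrix (identifiability of the normal family). The argument is essentially a restatement of the definition of ``sufficiency via functional dependence,'' so no computation is required; the work is in setting up the logical equivalence cleanly.

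\textbf{Forward direction} ($\Rightarrow$): Assume $\mu(\Omega,A,B)$ is determined by $(\Sigma(\Omega,A,B),A,B)$, i.e., there is a function $g$ with $\mu(\Omega,A,B)=g(\Sigma(\Omega,A,B),A,B)$. Then the density $N(\theta;\mu(\Omega,A,B),\Sigma(\Omega,A,B))$ can be rewritten as $N(\theta;g(\Sigma(\Omega,A,B),A,B),\Sigma(\Omega,A,B))$, which depends on $\Omega$ only through $\Sigma(\Omega,A,B)$. Hence $p(\theta\mid\Omega,A,B)=p(\theta\mid \Sigma(\Omega,A,B),A,B)$.

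\textbf{Backward direction} ($\Leftarrow$): Assume $p(\theta\mid\Omega,A,B)=p(\theta\mid\Sigma(\Omega,A,B),A,B)$. Pick any two $\Omega_1,\Omega_2$ with $\Sigma(\Omega_1,A,B)=\Sigma(\Omega_2,A,B)$. By the hypothesis, the two conditional distributions coincide:
\begin{align*}
N\bigl(\theta;\mu(\Omega_1,A,B),\Sigma(\Omega_1,A,B)\bigr)
=N\bigl(\theta;\mu(\Omega_2,A,B),\Sigma(\Omega_2,A,B)\bigr).
\end{align*}
Since a multivariate Gaussian is determined by its mean and covariance (e.g., via its characteristic function, or by matching the linear term in $\theta$ inside the exponent), equality of the covariances forces equality of the means: $\mu(\Omega_1,A,B)=\mu(\Omega_2,A,B)$. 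This is precisely the statement that $\mu(\cdot,A,B)$ factors through $\Sigma(\cdot,A,B)$, i.e., there exists $g$ with $\mu(\Omega,A,B)=g(\Sigma(\Omega,A,B),A,B)$.

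\textbf{Main obstacle.} There is no real technical obstacle here; the proof is almost a tautology once one invokes identifiability of Gaussians. The only subtle point is to be explicit that the equivalence is phrased at the level of \emph{functions of $\Omega$ for fixed $(A,B)$}, and to justify the existence of the function $g$ in the backward direction by the usual fact that a map $\Omega\mapsto\mu(\Omega,A,B)$ which is constant on fibers of $\Omega\mapsto\Sigma(\Omega,A,B)$ descends to a well-defined map on the image. I would include one short sentence invoking this factorization principle to keep the argument rigorous.
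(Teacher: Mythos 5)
Your proposal is correct and follows essentially the same two-direction argument as the paper: the forward direction rewrites the Gaussian density so it depends on $\Omega$ only through $\Sigma(\Omega,A,B)$, and the backward direction uses identifiability of the normal family (equal distributions with equal covariances force equal means) to conclude that $\mu$ is constant on the fibers of $\Omega \mapsto \Sigma(\Omega,A,B)$. Your explicit mention of the factorization principle and the characteristic-function justification makes the argument slightly more rigorous than the paper's, but the substance is identical.
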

\begin{proof}
    If 
    \begin{align*}
        \mu(\Omega_1, A, B) = \mu(\Omega_2, A, B),
    \end{align*}
    whenever $\Sigma(\Omega_1, A, B) = \Sigma(\Omega_2, A, B)$, then for a given value of $\Sigma$, the pair $(\mu,\Sigma)$ does not depend on which $\Omega$ generated $\Sigma$. Hence specifying $\Sigma(\Omega, A, B), A, B$ alone suffices to determine the normal distribution of $\theta$. Thus $p(\theta \mid \Omega, A, B) = p(\theta \mid \Sigma(\Omega, A, B), A, B)$.
    Conversely, if 
    \begin{align*}
    p(\theta \mid \Omega, A, B) = p\bigl(\theta \mid \Sigma(\Omega, A, B), A, B\bigr).
    \end{align*}
    Then any two values $\Omega_1$ and $\Omega_2$ yielding the same $\Sigma(\Omega_1, A, B)$ and $\Sigma(\Omega_2, A, B)$ must produce the same distribution for $\theta$. 
    Since $\theta$ is normally distributed, its mean must also match, i.e., $\mu(\Omega_1, A, B) = \mu(\Omega_2, A, B)$.
This completes the proof.
\end{proof}

\subsection{Integral of Normal-Inverse-Wishart (NIW) distribution}
\begin{lemma}
\label{lemma:Tv_int_NIW}
Given a joint Normal-Inverse-Wishart (NIW) distribution
\begin{align*}
p(\theta, \Sigma^\prime) = NIW(\theta, \Sigma^\prime; \mu^\prime, 1, \Psi^\prime, \nu^\prime), 
\quad \theta \mid \Sigma^\prime \sim N(\mu^\prime, \Sigma^\prime), 
\quad \Sigma^\prime \sim IW(\Psi^\prime, \nu^\prime),
\end{align*}
the marginal distribution of $\theta$ is a multivariate $t$-distribution:
\begin{align*}
\theta \sim t_{\nu^\prime}\left(\mu^\prime, \frac{\Psi^\prime}{\nu^\prime - m + 1}\right),
\end{align*}
with degrees of freedom $\nu^\prime$, location parameter $\mu^\prime$, and scale matrix $\frac{\Psi^\prime}{\nu^\prime - m + 1}$.
\end{lemma}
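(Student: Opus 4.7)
The plan is to marginalize $\Sigma^\prime$ out of the joint NIW density and identify the resulting marginal by inspection. Writing $p(\theta)=\int p(\theta\mid\Sigma^\prime)\,\pi(\Sigma^\prime)\,d\Sigma^\prime$, I would first substitute the Gaussian kernel $p(\theta\mid\Sigma^\prime)\propto |\Sigma^\prime|^{-1/2}\exp\!\bigl(-\tfrac{1}{2}(\theta-\mu^\prime)^\top(\Sigma^\prime)^{-1}(\theta-\mu^\prime)\bigr)$ and the inverse-Wishart kernel $\pi(\Sigma^\prime)\propto |\Sigma^\prime|^{-(\nu^\prime+m+1)/2}\exp\!\bigl(-\tfrac{1}{2}\operatorname{tr}(\Psi^\prime(\Sigma^\prime)^{-1})\bigr)$. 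Using the cyclic-trace identity $u^\top A^{-1}u=\operatorname{tr}\!\bigl(A^{-1}uu^\top\bigr)$ with $u=\theta-\mu^\prime$, the two exponents merge into a single trace term and the two determinant factors into $|\Sigma^\prime|^{-(\nu^\prime+m+2)/2}$.

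The resulting integrand is precisely the unnormalized density of $IW\!\bigl(\Psi^\prime+(\theta-\mu^\prime)(\theta-\mu^\prime)^\top,\ \nu^\prime+1\bigr)$ in $\Sigma^\prime$. Integrating against its known normalizing constant (which depends only on $m$, $\nu^\prime$, and the determinant of the updated scale) collapses the integral to $p(\theta)\propto \bigl|\Psi^\prime+(\theta-\mu^\prime)(\theta-\mu^\prime)^\top\bigr|^{-(\nu^\prime+1)/2}$. Applying the matrix determinant lemma $|\Psi^\prime+uu^\top|=|\Psi^\prime|\bigl(1+u^\top(\Psi^\prime)^{-1}u\bigr)$ reduces this determinant to a scalar quadratic form, giving $p(\theta)\propto \bigl(1+(\theta-\mu^\prime)^\top(\Psi^\prime)^{-1}(\theta-\mu^\prime)\bigr)^{-(\nu^\prime+1)/2}$. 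I would then rewrite the quadratic form by factoring out $1/(\nu^\prime-m+1)$, producing the multivariate-$t$ kernel with scale matrix $\Psi^\prime/(\nu^\prime-m+1)$ as claimed, and match normalizing constants using the standard ratio $\Gamma_m((\nu^\prime+1)/2)/\Gamma_m(\nu^\prime/2)$ from the inverse-Wishart partition function.

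The main obstacle I expect is not conceptual but bookkeeping: keeping track of the exponents, aligning the derived exponent $-(\nu^\prime+1)/2$ with the multivariate-$t$ convention used in the paper, and correctly absorbing the factor $\nu^\prime-m+1$ into the scale so that the final kernel matches $t_{\nu^\prime}\!\bigl(\mu^\prime,\Psi^\prime/(\nu^\prime-m+1)\bigr)$ rather than some equivalent reparametrization. Since both sides are proper probability densities with the same functional form in $\theta$, once the kernels agree the normalization follows for free, so no further multivariate gamma computations are required beyond confirming the degrees-of-freedom convention.
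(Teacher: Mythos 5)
Your proposal is correct and follows essentially the same route as the paper's proof: marginalizing $\Sigma^\prime$, merging the exponents into a single trace with the rank-one update $\Psi^\prime+(\theta-\mu^\prime)(\theta-\mu^\prime)^\top$, integrating via the inverse-Wishart normalizing constant (the paper invokes the equivalent matrix integral identity from Muirhead/Gupta), applying the matrix determinant lemma, and matching the resulting kernel $\bigl(1+(\theta-\mu^\prime)^\top(\Psi^\prime)^{-1}(\theta-\mu^\prime)\bigr)^{-(\nu^\prime+1)/2}$ to the multivariate-$t$ form. Your attention to the degrees-of-freedom bookkeeping is well placed, since the exponent forces the effective $t$ degrees of freedom $\nu^\prime-m+1$ (with $\nu^\prime+1=(\nu^\prime-m+1)+m$), which is exactly how the paper arrives at the scale $\Psi^\prime/(\nu^\prime-m+1)$ under its $t_{\nu^\prime}$ labeling convention.
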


\begin{proof}
To derive the marginal distribution of $\theta$, we integrate out $\Sigma^\prime$:
\begin{align}\label{eq:D.5}
p(\theta) = \int p(\theta \mid \Sigma^\prime) p(\Sigma^\prime) \, d\Sigma^\prime. 
\end{align}

The conditional distribution $p(\theta \mid \Sigma^\prime)$ is:
\begin{align}\label{eq:D.6}
p(\theta \mid \Sigma^\prime) = \frac{1}{(2\pi)^{m/2} |\Sigma^\prime|^{1/2}} 
\exp\left(-\frac{1}{2} (\theta - \mu^\prime)^\top (\Sigma^\prime)^{-1} (\theta - \mu^\prime)\right). 
\end{align}

The marginal distribution $p(\Sigma^\prime)$ is:
\begin{align}\label{eq:D.7}
p(\Sigma^\prime) = \frac{|\Psi^\prime|^{\nu^\prime/2}}{2^{\nu^\prime m/2} \Gamma_m(\nu^\prime/2)} 
|\Sigma^\prime|^{-(\nu^\prime + m + 1)/2} 
\exp\left(-\frac{1}{2} \mathrm{tr}(\Psi^\prime (\Sigma^\prime)^{-1})\right). 
\end{align}

Substituting \eqref{eq:D.6} and \eqref{eq:D.7} into \eqref{eq:D.5}:
\begin{align*}
p(\theta) \propto \int |\Sigma^\prime|^{-(\nu^\prime + m + 2)/2} 
\exp\left(-\frac{1}{2} \left[ (\theta - \mu^\prime)^\top (\Sigma^\prime)^{-1} (\theta - \mu^\prime) 
+ \mathrm{tr}(\Psi^\prime (\Sigma^\prime)^{-1}) \right]\right) d\Sigma^\prime.
\end{align*}

Let $S = (\theta - \mu^\prime)(\theta - \mu^\prime)^\top + \Psi^\prime$, then:
\begin{align}\label{eq:D.8}
p(\theta) \propto \int |\Sigma^\prime|^{-(\nu^\prime + m + 2)/2} 
\exp\left(-\frac{1}{2} \mathrm{tr}\left((\Sigma^\prime)^{-1} S\right)\right) d\Sigma^\prime.
\end{align}

Using the matrix integral identity (\cite[Chapter 7.2]{muirhead2009aspects} or  \cite[Chapter 1.4]{gupta2018matrix}):
\begin{align*}
\int |\Sigma|^{-(a + m + 1)/2} \exp\left(-\frac{1}{2} \mathrm{tr}(\Sigma^{-1} B)\right) \dd \Sigma
\propto |B|^{-a/2},
\end{align*}
we identify $a = \nu^\prime + 1$ and $B = S$. \eqref{eq:D.8} becomes:
\begin{align}
p(\theta) \propto |S|^{-(\nu^\prime + 1)/2}. \label{eqn:NIW_short}
\end{align}

Expand $|S|$ using the matrix determinant lemma:
\begin{align}
\label{eqn:NIW_S}
|S| &= |\Psi^\prime|\left(1 + (\theta - \mu^\prime)^\top (\Psi^\prime)^{-1} (\theta - \mu^\prime)\right).
\end{align}

Substituting \eqref{eqn:NIW_S} back to \eqref{eqn:NIW_short}, we have
\begin{align*}
p(\theta) &\propto \left(1 + \frac{(\theta - \mu^\prime)^\top (\Psi^\prime)^{-1} (\theta - \mu^\prime)}{\nu}\right)^{-(\nu + m)/2},
\end{align*}
where $\nu = \nu^\prime - m + 1$. This matches the kernel of a multivariate $t$-distribution:
\begin{align*}
\theta &\sim t_{\nu}\left(\mu^\prime, \frac{\Psi^\prime}{\nu}\right) = t_{\nu^\prime}\left(\mu^\prime, \frac{\Psi^\prime}{\nu^\prime - m + 1}\right).
\end{align*}
This completes the proof.
\end{proof}

\subsection{Regression Estimators}
\begin{lemma}[Estimation of $\alpha$, $\beta$ under Homoscedastic and Correlated Errors]
\label{lemma:regression_estimation_MLE}
Consider a multivariate linear regression model:
\begin{align}
r = \alpha + F\beta + \epsilon^F, \quad \epsilon^F \sim N(0, \Omega^F),
\end{align}
where $\Omega^F$ is a constant positive definite covariance matrix across observations (i.e. the error term $\epsilon^F$ is homoscedastic but potentially correlated). Given observations $\{(r_l, F_l)\}_{l=1}^{n}$, the maximum likelihood estimators (MLE) for $(\alpha, \beta)$ are:
\begin{align*}
    \hat{\alpha} = \bar{r} - \bar{F}\hat{\beta}, 
    \quad \hat{\beta} = \Bigl(\sum_{l=1}^n \tilde{F}_l^{\top} (\hat{\Omega}^F)^{-1} \tilde{F}_l\Bigr)^{-1} \sum_{l=1}^n \tilde{F}_l^{\top} (\hat{\Omega}^F)^{-1} \tilde{r}_l, 
\end{align*}
where
\begin{align*}
    \bar{r} = \frac{1}{n} \sum_{l=1}^{n} r_l, 
    \quad \tilde{r}_l = r_l - \bar{r}, 
    \quad \bar{F} = \frac{1}{n} \sum_{l=1}^{n} F_l, 
    \quad \tilde{F}_l = F_l - \bar{F}.
\end{align*}
\end{lemma}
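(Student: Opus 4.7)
The plan is a standard profile--likelihood (generalized least squares) derivation. Because the error $\epsilon^F$ is Gaussian with known covariance $\Omega^F$, the MLE coincides with weighted least squares, so I would work directly from the log--likelihood and take first--order conditions, first eliminating $\alpha$ by the centering identity and then reading off $\hat\beta$ from the resulting normal equations.

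\textbf{Step 1: Write the log--likelihood.} Since $r_l \mid F_l \sim N(\alpha + F_l\beta,\Omega^F)$ i.i.d., the log--likelihood (up to constants in $(\alpha,\beta)$) is
\begin{align*}
\ell(\alpha,\beta) \;=\; -\tfrac{1}{2}\sum_{l=1}^{n} (r_l-\alpha-F_l\beta)^{\top}(\Omega^F)^{-1}(r_l-\alpha-F_l\beta).
\end{align*}
Because $\Omega^F$ is positive definite, $\ell$ is concave quadratic in $(\alpha,\beta)$, so stationary points are the unique global maxima.

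\textbf{Step 2: Eliminate $\alpha$ by the FOC.} Differentiating with respect to $\alpha$ and using that $(\Omega^F)^{-1}$ is nonsingular yields
\begin{align*}
\sum_{l=1}^{n}(r_l-\alpha-F_l\beta)=0 \quad\Longrightarrow\quad \hat\alpha(\beta)=\bar r-\bar F\beta.
\end{align*}
Substituting this back into the residuals gives the centered form
$r_l-\hat\alpha(\beta)-F_l\beta=\tilde r_l-\tilde F_l\beta$, which is the key reduction: the profile likelihood in $\beta$ alone is a weighted least squares objective in the centered data.

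\textbf{Step 3: Solve the FOC for $\beta$.} Differentiating with respect to $\beta$ gives $\sum_l F_l^{\top}(\Omega^F)^{-1}(\tilde r_l-\tilde F_l\beta)=0$. Writing $F_l=\tilde F_l+\bar F$ and using $\sum_l \tilde r_l=0$ and $\sum_l\tilde F_l=0$, the $\bar F$ contribution vanishes, leaving
\begin{align*}
\sum_{l=1}^{n}\tilde F_l^{\top}(\Omega^F)^{-1}\bigl(\tilde r_l-\tilde F_l\beta\bigr)=0.
\end{align*}
Solving for $\beta$ (the Gram matrix $\sum_l\tilde F_l^{\top}(\Omega^F)^{-1}\tilde F_l$ is positive definite whenever the centered features have full column rank) yields exactly the stated $\hat\beta$, and then $\hat\alpha=\bar r-\bar F\hat\beta$ follows from Step~2.

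\textbf{Main obstacle.} The derivation is essentially routine; the only subtlety is the algebraic bookkeeping showing that after profiling out $\alpha$ the cross term in $\bar F$ cancels, so that the normal equations are expressed purely in terms of the centered quantities $\tilde F_l$ and $\tilde r_l$. A minor technical remark is that the statement uses the plug--in $\hat\Omega^F$ in the formula even though the derivation is with respect to a known $\Omega^F$; this is justified because, at fixed $\Omega^F$, the conditional MLE for $(\alpha,\beta)$ is the GLS estimator displayed above, and substituting any consistent $\hat\Omega^F$ (as the excerpt constructs via kernel density estimation) leaves the functional form unchanged.
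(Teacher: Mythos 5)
Your proposal is correct and matches the paper's own proof essentially step for step: both write the Gaussian log-likelihood, profile out $\alpha$ via its first-order condition to get $\hat\alpha=\bar r-\bar F\hat\beta$, and then use $\sum_l\tilde r_l=0$ and $\sum_l\tilde F_l=0$ to reduce the $\beta$-normal equations to the centered quantities. Your added remarks on concavity and on the plug-in $\hat\Omega^F$ are sound refinements but do not change the argument.
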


\begin{proof}
The log-likelihood function for the model is:
\begin{align}
    \log L(\alpha, \beta)
    &= 
    -\frac{n}{2} \log |\hat{\Omega}^F|
    - \frac{1}{2} \sum_{l=1}^{n} (r_l - \alpha - F_l \beta)^{\top} (\hat{\Omega}^F)^{-1} (r_l - \alpha - F_l \beta)
    + \text{const}.
    \label{eqn:log_likelihood}
\end{align}

Differentiate the log-likelihood \eqref{eqn:log_likelihood} with respect to $\alpha$ and set it to zero:
\begin{align*}
    \frac{\partial}{\partial \alpha} \log L(\alpha, \beta)
    &= 
    -\frac{1}{2} \sum_{l=1}^{n} \frac{\partial}{\partial \alpha} \left[ (r_l - \alpha - F_l \beta)^{\top} (\hat{\Omega}^F)^{-1} (r_l - \alpha - F_l \beta) \right] \nonumber \\
    &= 
    \sum_{l=1}^{n} (\hat{\Omega}^F)^{-1} (r_l - \alpha - F_l \beta) = 0.
\end{align*}
Summing over all observations:
\begin{align*}
    \sum_{l=1}^{n} (\hat{\Omega}^F)^{-1} r_l - n (\hat{\Omega}^F)^{-1} \alpha - \sum_{l=1}^{n} (\hat{\Omega}^F)^{-1} F_l \beta = 0.
\end{align*}
Solving for $\hat{\alpha}$:
\begin{align}
\label{eqn:MLE_alpha}
    \hat{\alpha} &= \bar{r} - \bar{F} \hat{\beta}.
\end{align}

Differentiate the log-likelihood \eqref{eqn:log_likelihood} with respect to $\beta$ and set it to zero:
\begin{align}
\label{eqn:FOC_beta}
    \frac{\partial}{\partial \beta} \log L(\alpha, \beta)
    &= 
    -\frac{1}{2} \sum_{l=1}^{n} \frac{\partial}{\partial \beta} \left[ (r_l - \alpha - F_l \beta)^{\top} (\hat{\Omega}^F)^{-1} (r_l - \alpha - F_l \beta) \right] \nonumber \\
    &= 
    \sum_{l=1}^{n} F_l^{\top} (\hat{\Omega}^F)^{-1} (r_l - \alpha - F_l \beta) = 0.
\end{align}
Using the expression for $\hat{\alpha}$ \eqref{eqn:MLE_alpha}, we have
\begin{align*}
    r_l - \hat{\alpha} - F_l \beta = r_l - (\bar{r} - \bar{F} \hat{\beta}) - F_l \beta
    = (r_l - \bar{r}) - (F_l - \bar{F}) \beta
    = \tilde{r}_l - \tilde{F}_l \beta.
\end{align*}
Substituting back to \eqref{eqn:FOC_beta} with $\alpha = \hat{\alpha}$:
\begin{align*}
    \sum_{l=1}^{n} F_l^{\top} (\hat{\Omega}^F)^{-1} (\tilde{r}_l - \tilde{F}_l \beta) = 0.
\end{align*}
Since $\sum_{l=1}^{n} \tilde{r}_l = 0$ and $\sum_{l=1}^{n} \tilde{F}_l = 0$, we have:
\begin{align*}
    \sum_{l=1}^{n} \tilde{F}_l^{\top} (\hat{\Omega}^F)^{-1} \tilde{r}_l - \sum_{l=1}^{n} \tilde{F}_l^{\top} (\hat{\Omega}^F)^{-1} \tilde{F}_l \beta = 0.
\end{align*}
Solving for $\hat{\beta}$:
\begin{align*}
    \hat{\beta} &= \left( \sum_{l=1}^{n} \tilde{F}_l^{\top} (\hat{\Omega}^F)^{-1} \tilde{F}_l \right)^{-1} \sum_{l=1}^{n} \tilde{F}_l^{\top} (\hat{\Omega}^F)^{-1} \tilde{r}_l.
\end{align*}

Since $\epsilon^F$ is neither uncorrelated nor homoscedastic, remaining $\hat{\Omega}^F$ is essential to obtain unbiased estimates of $\beta$.

This completes the proof.
\end{proof}

\section{Proofs of Main Text}

\subsection{Proof of \texorpdfstring{\cref{lemma:BLB_prediction}}{}}
\label{pf:lemma_2.1_BLB}
We separate the proof into two parts. 
One for \eqref{eqn:BLB_estimation} and another for \eqref{eqn:BLB_prediction}:
\begin{proof}[Proof of \eqref{eqn:BLB_estimation}]
\begin{align}
    & ~  p(\theta |  q, \Omega) 
    \propto L(\theta | q,\Omega)\pi(\theta) \nonumber \\
    = & ~ \exp{ -\frac{1}{2}( q - P\theta )^\sT \Omega^{-1} ( q-P\theta) } \exp{-\frac{1}{2}(\theta- \theta_0)^\sT  \Sigma_0^{-1}(\theta- \theta_0)} \annot{By \eqref{BLB_theta_assump} and \eqref{BLB_qtheta_assump}} \nonumber \\
    = & ~  \exp{-\frac{1}{2}\left(\theta^\sT P^\sT\Omega^{-1} P\theta - 2 q^\sT\Omega^{-1} P\theta +  q^\sT\Omega^{-1} q+\theta^\sT  \Sigma_0^{-1}\theta - 2 \theta_0^\sT  \Sigma_0^{-1}\theta +  \theta_0^\sT  \Sigma_0^{-1} \theta_0 \right) } \nonumber \\
    = & ~ \exp{-\frac{1}{2}\left[\theta^\sT( P^\sT\Omega^{-1} P+  \Sigma_0^{-1})\theta - 2( q^\sT\Omega^{-1} P+ \theta_0^\sT  \Sigma_0^{-1})\theta+ q^\sT\Omega^{-1} q +  \theta_0^\sT  \Sigma_0^{-1} \theta_0\right] }, \label{BLB_exponent}
\end{align}
where the third equation is the result of the symmetric property of $\Sigma_0$ and $\Omega$.

To simplify the above expression, we introduce
\begin{align*}
    G \coloneqq  & ~  \Sigma_0^{-1} + P^\sT\Omega^{-1} P,\\
    D \coloneqq  & ~  \Sigma_0^{-1} \theta_0 + P^\sT\Omega^{-1} q,\\
    A \coloneqq  & ~  \theta_0^\sT  \Sigma_0^{-1} \theta_0 + q^\sT\Omega^{-1} q,
\end{align*}
then, we have:
\begin{align*}
    \theta^\sT G\theta - 2D^\sT\theta +  A
    &= ( G\theta)^\sT G^{-1} G\theta - 2D^\sT G^{-1} G\theta +  A \\
    &= ( G\theta-D)^\sT G^{-1}( G\theta-D) +  A - D^\sT G^{-1}D \\
    &= (\theta- G^{-1}D)^\sT G(\theta- G^{-1}D) +  A - D^\sT G^{-1}D.
\end{align*}
Therefore, \eqref{BLB_exponent} becomes
\begin{align*}
    p(\theta | q, \Omega) &\propto \exp{-\frac{1}{2} (\theta- G^{-1}D)^\sT G(\theta- G^{-1}D)} = N( \theta; G^{-1}D, G^{-1}).
\end{align*}
This completes the proof.
\end{proof}

\begin{proof}[Proof of \eqref{eqn:BLB_prediction}]
    From \eqref{BLB_r_assump}, we have:
    \begin{align}
    \label{eq:E.2}
        p(r | \theta) = N \left(r; \theta, \Sigma \right).
    \end{align}
    From \cref{lemma:Rev_Opt}, we have:
    \begin{align}
    \label{eq:E.3}
        \theta_0 = \delta (\Sigma+\Sigma_0) w_{\rm cap}.
    \end{align}
    From \eqref{eqn:BLB_estimation}, we have:
    \begin{align}
    \label{eq:E.4}
        p(\theta | q) = N \left( \theta; G^{-1} \left( \Sigma_0^{-1} \theta_0 + P^\sT \Omega^{-1} q \right), G^{-1} \right).
    \end{align}    

    Thus,
    \begin{align}
    \tilde{r} 
    \sim p(r | q) =  & ~  
    \int p(\tilde{r} | \theta) p(\theta | q)  d\theta \nonumber \\
    = & ~  \int N\left(\tilde{r}; \theta, \Sigma\right) N\left(\theta; G^{-1} (\Sigma_0^{-1} \theta_0 + P^\sT\Omega^{-1} q), G^{-1}\right)  d\theta \nonumber \annot{By \eqref{eq:E.2} and \eqref{eq:E.4}}\\
    = & ~  N\left( \tilde{r};G^{-1} (\Sigma_0^{-1} \theta_0 + P^\sT\Omega^{-1} q), \Sigma +G^{-1}\right) \nonumber \annot{From  \cref{lemma:integral_product_gaussians}}\\
    = & ~  N\left( \tilde{r}; G^{-1} \left[ \delta (\Sigma_0^{-1}\Sigma+I) w_{\rm cap}  + P^\sT\Omega^{-1} q \right], \Sigma +G^{-1}\right). \annot{By \eqref{eq:E.3}}\nonumber
    \end{align}
    This completes the proof.
\end{proof}
\subsection{Proof of \texorpdfstring{\cref{thm:M-BL_estimation_with_FV}}{}}
\begin{proof}[Proof of Theorem \ref{thm:M-BL_estimation_with_FV}]
\label{pf:thm_3.1}
The posterior $\theta$ given the data is proportional to the product of three Gaussian densities:
\begin{align}
p(\theta | q, \Omega, F, \Omega^F) &\propto \underbrace{p(q | \theta, F, \Omega)}_{\text{Observation Likelihood}} \cdot \underbrace{p(\theta | F, \Omega^F)}_{\text{Features Likelihood}} \cdot \underbrace{\pi(\theta)}_{\text{Prior}},
\label{eqn:MBL_posterior}
\end{align}
where the observation likelihood, features likelihood, and prior distribution are respectively:
\begin{align*}
p(q | \theta, F, \Omega) = & ~  \exp\left(-\frac{1}{2}\left[q - P(\alpha + F\beta + \gamma\theta)\right]^\top \Omega^{-1} \left[q - P(\alpha + F\beta + \gamma\theta)\right]\right), \annot{By \eqref{eqn:MBL_qFtheta}}\\
p(\theta | F, \Omega^F) =  & ~  \exp\left(-\frac{1}{2}\left[\theta - (\alpha^F + F\beta^F)\right]^\top (\Omega^F)^{-1} \left[\theta - (\alpha^F + F\beta^F)\right]\right), \annot{By \eqref{eqn:MBL_thetaF}}\\
\pi(\theta) = & ~  \exp\left(-\frac{1}{2}(\theta - \theta_0)^\top \Sigma_0^{-1} (\theta - \theta_0)\right). \annot{By \eqref{eqn:MBL_theta}}
\end{align*}
Combining all quadratic forms in the exponent (ignoring constants) of $p(\theta | q,\Omega,F,\Omega^F)$, we have
\begin{align*}
 & ~ -\frac{1}{2}\Big[ (\theta - \theta_0)^\top \Sigma_0^{-1} (\theta - \theta_0) + [\theta - (\alpha^F + F\beta^F)]^\top (\Omega^F)^{-1} [\theta - (\alpha^F + F\beta^F)] \\
& ~ \qquad + [q - P\alpha - PF\beta - \gamma P\theta]^\top \Omega^{-1} [q - P\alpha - PF\beta - \gamma P\theta] \Big] \\
=  & ~ -\frac{1}{2}\Big[\theta^\top\Sigma_0^{-1}\theta - 2\theta_0^\top\Sigma_0^{-1}\theta + \theta_0^\top\Sigma_0^{-1}\theta_0 \\
& ~ \qquad + \theta^\top(\Omega^F)^{-1}\theta - 2(\alpha^F + F\beta^F)^\top(\Omega^F)^{-1}\theta + (\alpha^F + F\beta^F)^\top(\Omega^F)^{-1}(\alpha^F + F\beta^F) \\
& ~ \qquad + \gamma^2\theta^\top P^\top\Omega^{-1}P\theta - 2\gamma(q - P\alpha - PF\beta)^\top\Omega^{-1}P\theta \\
& ~ \qquad + (q - P\alpha - PF\beta)^\top\Omega^{-1}(q - P\alpha - PF\beta) \Big] \\
= & ~ -\frac{1}{2}\Big[\theta^\top\underbrace{\left[\Sigma_0^{-1} + (\Omega^F)^{-1} + \gamma^2 P^\top\Omega^{-1}P\right]}_{G^M}\theta \\
& ~ \qquad - 2\theta^\top\underbrace{\left[\Sigma_0^{-1}\theta_0 + (\Omega^F)^{-1}(\alpha^F + F\beta^F) + \gamma P^\top\Omega^{-1}(q - P\alpha - PF\beta)\right]}_{b} + \text{ constant }\Big],
\end{align*}
where the second step groups similar terms and the first step expands the quadratic form.

Completing the square in the form
\begin{align*}
    -\frac{1}{2}\left( \theta - \mu_{\theta|q,\Omega,F,\Omega^F} \right)^\top G^M \left( \theta - \mu_{\theta|q,\Omega,F,\Omega^F} \right),
\end{align*}
the posterior distribution \eqref{eqn:MBL_posterior} becomes:
\begin{align*}
p(\theta | q, \Omega, F, \Omega^F) &= N\left(\theta; \mu_{\theta|q,\Omega,F,\Omega^F}, (G^M)^{-1}\right)
\end{align*}
where
\begin{align*}
G^M \coloneqq  & ~ \Sigma_0^{-1} + (\Omega^F)^{-1} + \gamma^2 P^\top\Omega^{-1}P, \\
\mu_{\theta|q,\Omega,F,\Omega^F} \coloneqq  & ~ (G^M)^{-1}b \\
= & ~  (G^M)^{-1}\left[\Sigma_0^{-1}\theta_0 + (\Omega^F)^{-1}(\alpha^F + F\beta^F) + \gamma P^\top\Omega^{-1}(q - P\alpha - PF\beta)\right]
\end{align*}
This completes the proof.
\end{proof}
\subsection{Proof of \texorpdfstring{\cref{thm:SLP-BL_estimation_with_F}}{}}
\begin{proof}[Proof of \cref{thm:SLP-BL_estimation_with_F}]
\label{pf:thm_3.2}
    From the $\theta\leftrightarrow F$ model (\cref{def:thetaF_linear_model}),  
    \begin{align*}
        L(\theta | F,\Omega^F)
        &= p(F,\Omega^F | \theta) \propto
        \exp\Bigl(-\frac{1}{2} [F\beta^F - (\theta - \alpha^F)]^\sT (\Omega^F)^{-1} [F\beta^F - (\theta - \alpha^F)]\Bigr).
    \end{align*}
    Thus, we have
    \begin{align}
        p(\theta | F, \Omega^F) 
        \propto & ~ L(\theta|F,\Omega^F) \pi(\theta) \nonumber \\
        \propto & ~ \exp\Bigl(-\frac{1}{2} [F\beta^F - (\theta - \alpha^F)]^\sT (\Omega^F)^{-1} [F\beta^F - (\theta - \alpha^F)]\Bigr) \exp\left( -\frac{1}{2} (\theta - \theta_0)^\sT \Sigma_0^{-1} (\theta - \theta_0) \right) \nonumber \annot{By \eqref{eqn:SLPBL_thetaF} and \eqref{eqn:SLPBL_theta}}\\
        \propto & ~
        \exp\left( -\frac{1}{2} \left[ (\theta - \alpha^F - F\beta^F)^\sT (\Omega^F)^{-1} (\theta - \alpha^F - F\beta^F) + (\theta - \theta_0)^\sT \Sigma_0^{-1} (\theta - \theta_0) \right] \right). \nonumber \\
        \propto & ~
        \exp\left( -\frac{1}{2} \left[ \theta^\sT [ (\Omega^F)^{-1} + \Sigma_0^{-1} ] \theta - 2  [ \Sigma_0^{-1} \theta_0 + (\Omega^F)^{-1} (\alpha^F + F\beta^F) ]^\sT\theta + \text{const} \right] \right) \nonumber \\
        \propto & ~
        \exp\left( -\frac{1}{2} \left[ \theta^\sT G^F \theta - 2 (D^F)^\sT \theta  + \text{const} \right] \right), \nonumber \\
        \propto & ~
        \exp\left( -\frac{1}{2} \left[ (\theta - \mu_{\theta | F, \Omega^F})^\sT G^F (\theta - \mu_{\theta | F, \Omega^F}) - \mu_{\theta | F, \Omega^F}^\sT G^F \mu_{\theta | F, \Omega^F}  + \text{const} \right] \right), \label{eqn:E.3_posterior}
    \end{align}
    where $G^F \coloneqq (\Omega^F)^{-1} + \Sigma_0^{-1} $ and $D^F \coloneqq \Sigma_0^{-1} \theta_0 + (\Omega^F)^{-1} (\alpha^F + F\beta^F)$ and
    \begin{align*}
        \mu_{\theta | F, \Omega^F} 
        &= (G^F)^{-1} D^F = (G^F)^{-1} \left[ \Sigma_0^{-1} \theta_0 + (\Omega^F)^{-1} (\alpha^F + F\beta^F) \right].
    \end{align*}
    Substituting back to \eqref{eqn:E.3_posterior}, the posterior distribution is:
    \begin{align}
        p(\theta | F, \Omega^F) &= N\left( \theta; (G^F)^{-1} \left[  \Sigma_0^{-1} \theta_0 + (\Omega^F)^{-1} (\alpha^F + F\beta^F) \right], (G^F)^{-1} \right). \nonumber
    \end{align}
This completes the proof.
\end{proof}
\subsection{Proof of \texorpdfstring{\cref{thm:FIV-BL_estimation_with_F}}{}}
\begin{proof}[Proof of \cref{thm:FIV-BL_estimation_with_F}]
\label{pf:thm_3.3}
The first two relations \eqref{eqn:FIVBL_theta} and \eqref{eqn:FIVBL_qtheta}, from \eqref{eqn:BLB_estimation}, leads to:
\begin{align}
    p(\theta | q,\Omega) = N \left( \theta; G^{-1} \left( \Sigma_0^{-1} \theta_0 + P^\sT \Omega^{-1} q \right), G^{-1} \right), \label{eqn:E.4.1}
\end{align}
where $G \coloneqq \Sigma_{0}^{-1} + P^{\sT}\Omega^{-1}P$.
The third relation \eqref{eqn:FIVBL_qF} leads to 
\begin{align}
    p(q | F,\Omega^F) = N(q; P(\alpha+F{\beta}), P\Omega^F P^\sT ). \label{eqn:E.4.2}
\end{align}

Then, the posterior mean distribution
\begin{align}
    p(\theta | F, \Omega^F) 
    = & ~  \iint p(\theta, q, \Omega | F, \Omega^F) dq  d\Omega, \nonumber \\
    = & ~  \int \left( \int p(\theta,q | \Omega, F, \Omega^F) dq \right) \pi(\Omega)  d\Omega, \nonumber \\
    = & ~  \int \left( \int p(\theta | q, \Omega) p(q | F, \Omega^F) dq \right) \pi(\Omega)  d\Omega, \label{eqn:E.4.3}
\end{align}
where the last step follows the conditional independence between $\theta$ and $(F,\Omega^F)$ given $(q,\Omega)$. Also, it is clear to see that
\begin{align}
    \theta | \Omega, F, \Omega^F \sim \int p(\theta | q, \Omega) p(q | F, \Omega^F) dq. \label{eqn:E.4.4}
\end{align}

Both $p(\theta | q, \Omega)$ \eqref{eqn:E.4.1} and $p(q | F, \Omega^F)$ \eqref{eqn:E.4.2} are Gaussian, so the inside integral of \eqref{eqn:E.4.3} is also Gaussian, with
\begin{align*}
    \int p(\theta | q, \Omega) p(q | F, \Omega^F) dq = N\left(\theta; \mu_{\theta | \Omega, F, \Omega^F},\Sigma_{\theta | \Omega, F, \Omega^F}\right).
\end{align*}
By the law of total expectation,
\begin{align}
\mu_{\theta | \Omega, F, \Omega^F} \nonumber
&= \mathbb{E}[\theta | \Omega, F, \Omega^F] \\
&= \int \mathbb{E}[\theta | q, \Omega]p(q | F, \Omega^F)dq \annot{By \eqref{eqn:E.4.4}} \\
&= \int G^{-1}\left(\Sigma_{0}^{-1}\theta_{0} + P^{\sT}\Omega^{-1}q \right) N\left(q; P(\alpha + F{\beta}), P \Omega^F P^{\sT}\right) dq \annot{By \eqref{eqn:E.4.1} and \eqref{eqn:E.4.2}} \\
&= G^{-1}\left(\Sigma_{0}^{-1}\theta_{0} + P^{\sT}\Omega^{-1}P(\alpha + F{\beta}) \right). \annot{By $\int q \cdot p(q|F,\Omega^F)dq=\mathbb{E}[q|F,\Omega^F]$.}
\end{align}
Using the law of total variance,
\begin{align}
\Sigma_{\theta | \Omega, F, \Omega^F} 
&= \mathrm{Var}[\theta | \Omega, F, \Omega^F] \nonumber \\
&= \mathbb{E}\bigl[\mathrm{Var}[\theta | q, \Omega, F, \Omega^F]\bigr] + \mathrm{Var}\bigl[\mathbb{E}[\theta | q, \Omega, F, \Omega^F]\bigr] \nonumber \annot{By the 
law of total variance}\\
&= \mathbb{E}\bigl[\mathrm{Var}[\theta | q, \Omega]\bigr] + \mathrm{Var}\bigl[\mathbb{E}[\theta | q, \Omega]\bigr] \nonumber \annot{By conditional independence} \\
&= \int \underbrace{\mathrm{Var}[\theta | q, \Omega]}_{G^{-1}} p(q | F, \Omega^F)dq + \mathrm{Var}\!\Bigl[\;
\underbrace{\mathbb{E}[\theta \mid q, \Omega]}_{%
  \mathclap{G^{-1}\bigl[\Sigma_{0}^{-1}\theta_{0} + P^{\sT}\Omega^{-1}q\bigr]}}%
\Bigr] \nonumber \annot{By \eqref{eqn:E.4.1}}\\
&= G^{-1} + G^{-1}P^{\sT}\Omega^{-1} \mathrm{Var}[q] \Omega^{-1}PG^{-1} \annot{By $\int p(q|F,\Omega^F)=1$ and $\mathrm{Var}[Mq] = M\mathrm{Var}[q]M^{-1}$} \nonumber \\
&= G^{-1} + G^{-1}P^{\sT}\Omega^{-1}\bigl(P\Omega^FP^{\sT}\bigr)\Omega^{-1}PG^{-1}. \annot{By \eqref{eqn:E.4.2}} \nonumber
\end{align}
This completes the proof.
\end{proof}
\subsection{Proof of \texorpdfstring{\cref{lemma:conjugate-prior-example}}{}}
\begin{proof}[Proof of \cref{lemma:conjugate-prior-example}]
\label{pf:lemma_3.1_conjugate_prior}
From \cref{lemma:sigma_sufficiency}, we have
\begin{align}
    \theta | \Sigma^\prime, F, \Omega^F \sim N\left(\mu^\prime,\Sigma^\prime \right). \nonumber
\end{align}
From \eqref{eqn:IW_prior}, we have
\begin{align*}
    \Sigma^\prime \sim IW(\Psi^\prime,\nu^\prime)
\end{align*}
We obtain the joint distribution by definition of the Normal-Inverse-Wishart distribution:
\begin{align}
    p(\theta,  \Sigma^\prime| F, \Omega^F )
    & = p(\theta | \Sigma^\prime, F, \Omega^F)\pi(\Sigma^\prime)\nonumber \\
    &= N\left(\mu^\prime,\Sigma^\prime \right) IW(\Psi^\prime, \nu^\prime) \nonumber \\
    &= NIW(\mu^\prime, 1, \Psi^\prime, \nu^\prime). \label{eqn:e.5}
\end{align}
Then the posterior mean distribution becomes
\begin{align}
    p(\theta | F, \Omega^F) 
    &= \int p(\theta, \Sigma^\prime | F, \Omega^F) d\Sigma^\prime, \nonumber \\
    &= \int NIW(\theta, \Sigma^\prime; \mu^\prime, 1, \Psi^\prime, \nu^\prime) d\Sigma^\prime, \nonumber \annot{By \eqref{eqn:e.5}} \\
    &= t_{\nu^\prime}\left(\theta; \mu^\prime, \frac{\Psi^\prime}{\nu^\prime - m + 1}\right), \nonumber
\end{align}
where the last step follows \cref{lemma:Tv_int_NIW}. 
This completes the proof.
\end{proof}

\clearpage
\section{Experimental Details}
\subsection{Sharpe Ratio Maximization}
In our experiment (\cref{sec:exp}), we consider a standardized version of the mean-variance optimization framework (\cref{def:UMVO}), taking Sharpe ratio as its maximization objective:
\begin{definition}[Mean-Variance Optimization on Sharpe ratio]
\label{def:MVOS}
Let $r \in \mathbb{R}^m$ be the returns of the $m$ assets and $\tilde{r}$ be its prediction. The optimization problem, under the constraint of (1) no leverage and (2) long only, is:
\begin{align*}
\max_{w} \left( \frac{w^T \mathbb{E}[\tilde{r}]}{\sqrt{w^T \mathrm{Cov}[\tilde{r}] w}} \right) \text{ s.t. } \sum_{i=1}^{n} w_i = 1 \text{ and } w_i \geq 0.
\end{align*}
\end{definition}

\subsection{Table of Indicators}
\begin{table}[H]
\centering
\begin{tabular}{|p{3cm}|p{6cm}|p{4.9cm}|}
\hline
\textbf{Indicator} & \textbf{Description} & \textbf{Hyperparameters} \\ \hline
{ATR} & Measures market volatility based on price range. & Window length (default 14) \\ \hline
{ADX} & Measures the strength of a trend. & Window length (default 14) \\ \hline
{EMA} & Weighted moving average prioritizing recent prices. & Window length (default 14) \\ \hline
{MACD} & Difference between short- and long-term EMAs, indicates momentum shifts. & Fast EMA window length (12), Slow EMA window length (26), Signal window length (9) \\ \hline
{SMA} & Average of prices over a specified window length, indicating short-term trends. & Window length (default 20) \\ \hline
{RSI} & Momentum oscillator identifying overbought/oversold conditions. & Window length (default 14) \\ \hline
{BB (Upper \& Lower)} & Measures price volatility, expanding during high volatility and contracting during low. & Window length (default 20), Standard deviation multiplier (default 2) \\ \hline
{OBV (normalized)} & Volume indicator combined with price, normalized to range [0, 1]. & None (computed from price and volume) \\ \hline
\end{tabular}
\caption{Common Indicators Used}
\label{table:indicators}
\end{table}

\subsection{Tables of Datasets}

\begin{table}[H]
\centering
\begin{minipage}[t]{0.45\textwidth}
    \centering
    \caption{S\&P Sector ETF Components}
    \label{table:sp_components}
    \begin{adjustbox}{max height=5cm} %
    \begin{tabular}{@{}lll@{}}
    \toprule
    ETF Ticker & Start Date & End Date \\ 
    \midrule
    XLB & 2004-04-13 & 2024-02-22 \\
    XLE & 2004-04-13 & 2024-02-22 \\
    XLF & 2004-04-13 & 2024-02-22 \\
    XLI & 2004-04-13 & 2024-02-22 \\
    XLK & 2004-04-13 & 2024-02-22 \\
    XLP & 2004-04-13 & 2024-02-22 \\
    XLU & 2004-04-13 & 2024-02-22 \\
    XLV & 2004-04-13 & 2024-02-22 \\
    XLY & 2004-04-13 & 2024-02-22 \\
    XLRE & 2015-10-08 & 2024-02-22 \\
    XLC & 2018-06-19 & 2024-02-22 \\
    \bottomrule
    \end{tabular}
    \end{adjustbox}
\end{minipage}
\hfill
\begin{minipage}[t]{0.4\textwidth}
    \centering
    \caption{{DJIA Components}}
    \label{table:dji_components}
    \begin{adjustbox}{max height=10cm} %
    \begin{tabular}{@{}lll@{}}
    \toprule
    Stock Ticker & Start Date & End Date \\ 
    \midrule
    AA & 1994-01-05 & 2013-09-22 \\
    AIG & 2004-04-08 & 2008-09-21 \\
    AAPL & 2015-03-19 & 2024-02-22 \\
    AMGN & 2020-08-31 & 2024-02-22 \\
    AXP & 1994-01-05 & 2024-02-22 \\
    BA & 1994-01-05 & 2024-02-22 \\
    BAC & 2008-02-19 & 2013-09-22 \\
    C & 1999-11-01 & 2009-06-07 \\
    CAT & 1994-01-05 & 2024-02-22 \\
    CSCO & 2009-06-08 & 2024-02-22 \\
    CVX & 2008-02-19 & 2024-02-22 \\
    DD & 1994-01-05 & 2019-04-01 \\
    DIS & 1994-01-05 & 2024-02-22 \\
    FL & 1994-01-05 & 1997-03-17 \\
    GE & 1994-01-05 & 2018-06-25 \\
    GS & 2013-09-23 & 2024-02-22 \\
    GT & 1994-01-05 & 1999-11-01 \\
    HD & 1999-11-01 & 2024-02-22 \\
    HON & 1994-01-05 & 2024-02-22 \\
    HPQ & 1997-03-17 & 2013-09-22 \\
    IBM & 1994-01-05 & 2024-02-22 \\
    INTC & 1999-11-01 & 2024-02-22 \\
    IP & 1994-01-05 & 2004-04-08 \\
    JNJ & 1997-03-17 & 2024-02-22 \\
    JPM & 1994-01-05 & 2024-02-22 \\
    KO & 1994-01-05 & 2024-02-22 \\
    MCD & 1994-01-05 & 2024-02-22 \\
    MMM & 1994-01-05 & 2024-02-22 \\
    MO & 1994-01-05 & 2008-02-18 \\
    MRK & 1994-01-05 & 2024-02-22 \\
    MSFT & 1999-11-01 & 2024-02-22 \\
    NKE & 2013-09-23 & 2024-02-22 \\
    PFE & 2004-04-08 & 2024-02-22 \\
    PG & 1994-01-05 & 2024-02-22 \\
    T & 1994-01-05 & 2015-03-18 \\
    TRV & 1997-03-17 & 2024-02-22 \\
    UNH & 2012-09-24 & 2024-02-22 \\
    VZ & 2004-04-08 & 2024-02-22 \\
    WBA & 2018-06-26 & 2024-02-22 \\
    WMT & 1994-01-05 & 2024-02-22 \\
    XOM & 1994-01-05 & 2024-02-22 \\
    \bottomrule
    \end{tabular}
    \end{adjustbox}
\end{minipage}
\end{table}

\subsection{{Figure of Cumulative Return}}
\begin{figure}[H]
    \centering
    \includegraphics[width=0.95\linewidth]{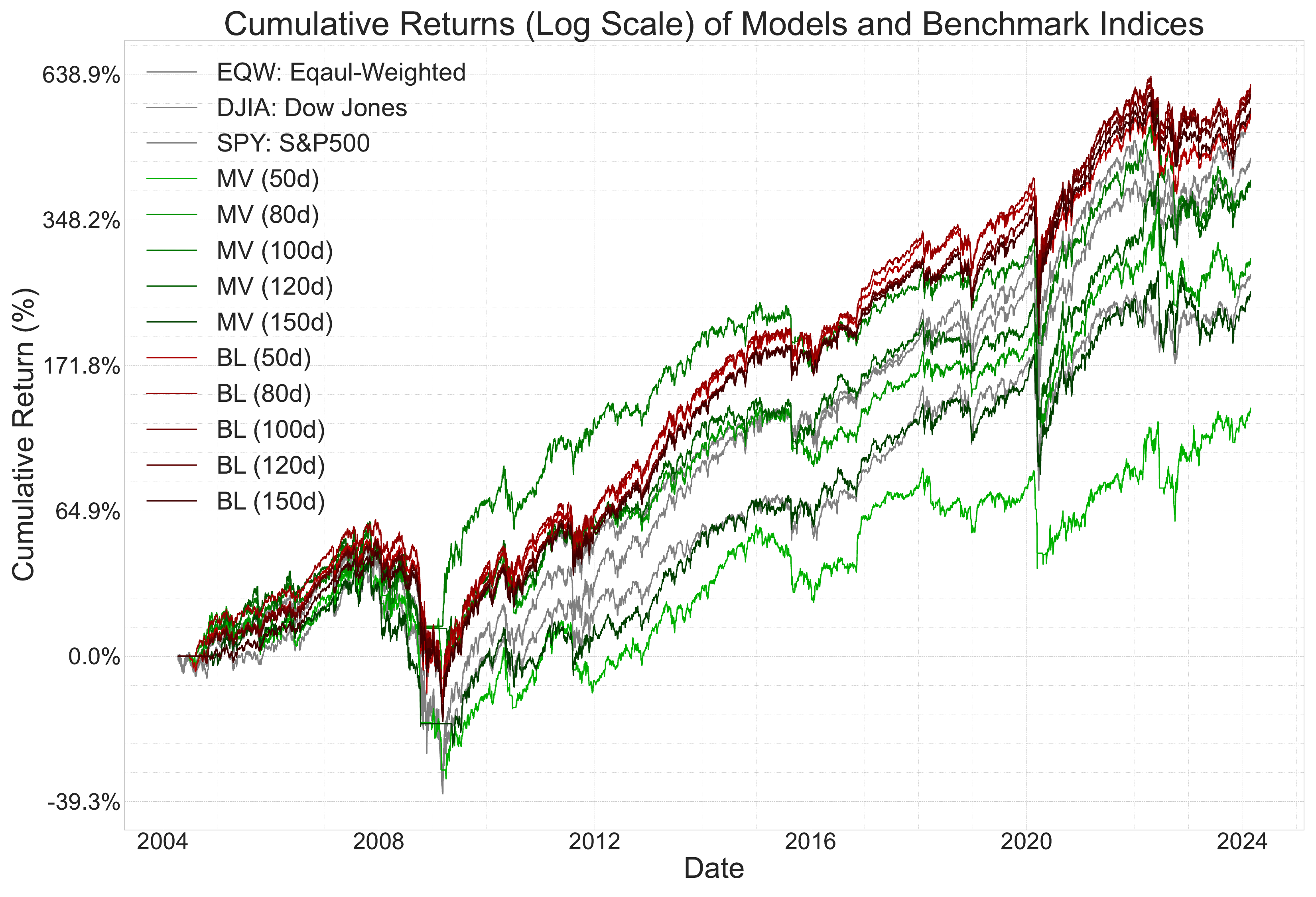}
    \vspace{-1em}
    \caption{\small Cumulative Return on SPDR Sectors ETFs Dataset.}
    \vspace{-1.5em}
    \label{fig:spy_cumulative_returns_plot}
\end{figure}

\begin{figure}[H]
    \centering
    \includegraphics[width=0.95\linewidth]{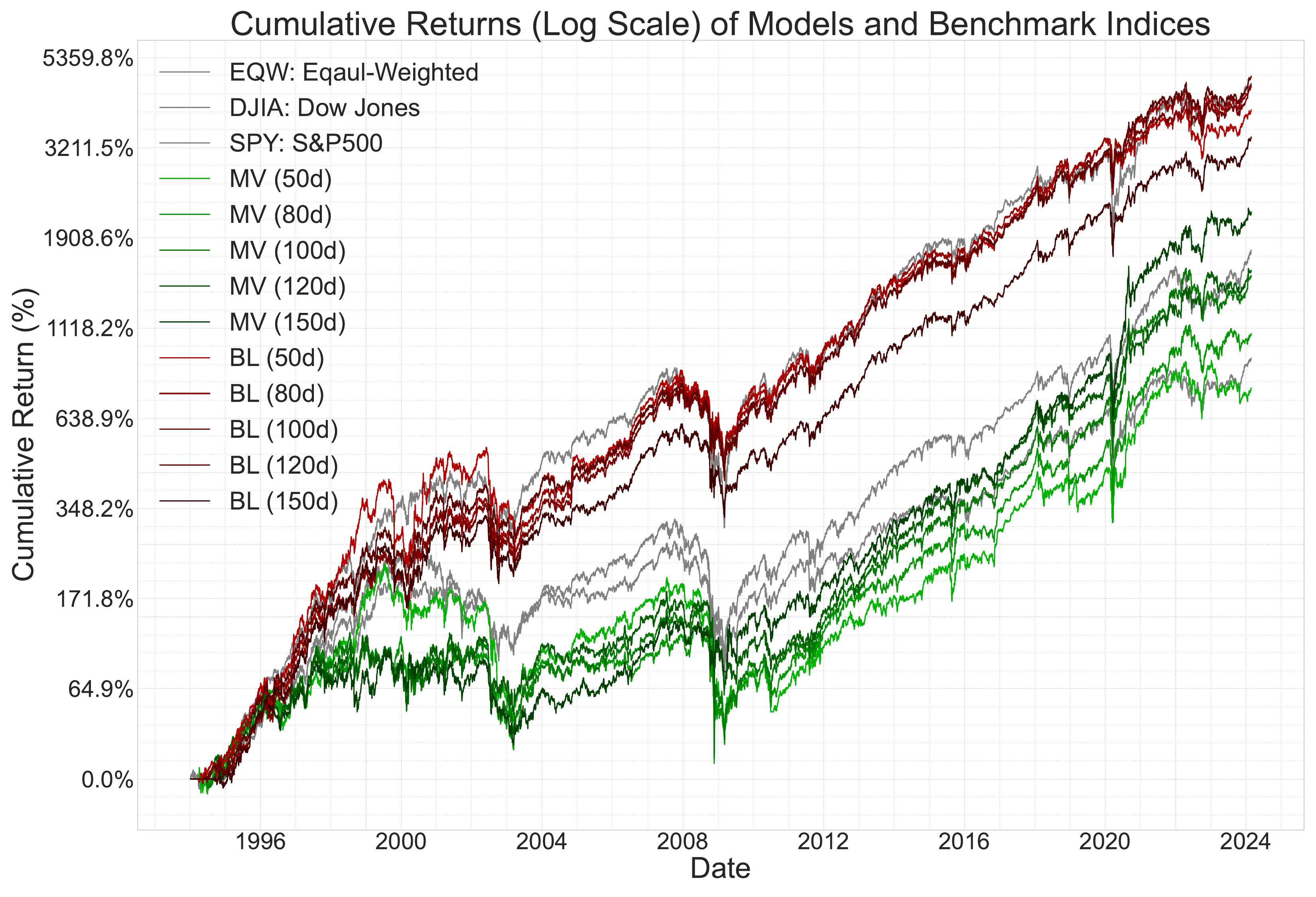}
    \vspace{-1.5em}
    \caption{\small Cumulative Return on Dow Jones Index Dataset.}
    \vspace{-1em}
    \label{fig:djia_cumulative_returns_plot}
\end{figure}

\subsection{{Figures of Asset Allocation}}
\begin{figure}[H]
    \centering
    \includegraphics[width=\linewidth]{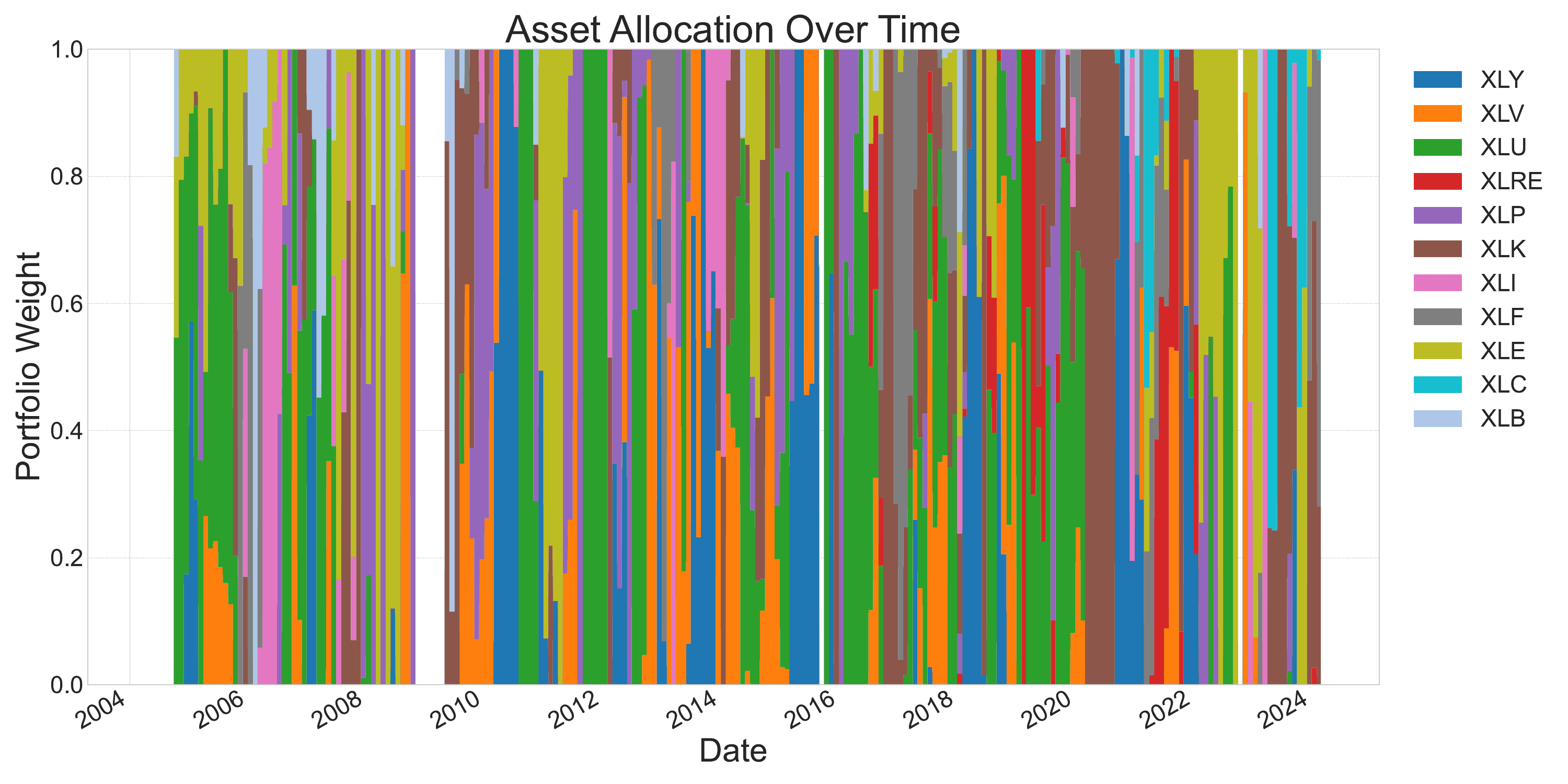}
    \caption{\small Asset Allocation of traditional MV model (rolling window: 100 days) on SPDR Sectors ETFs Dataset.}
    
    \label{fig:mv_spy_asset_allocation_plot}
\end{figure}

\begin{figure}[H]
\vspace{-.5em}
    \centering
    \includegraphics[width=\linewidth]{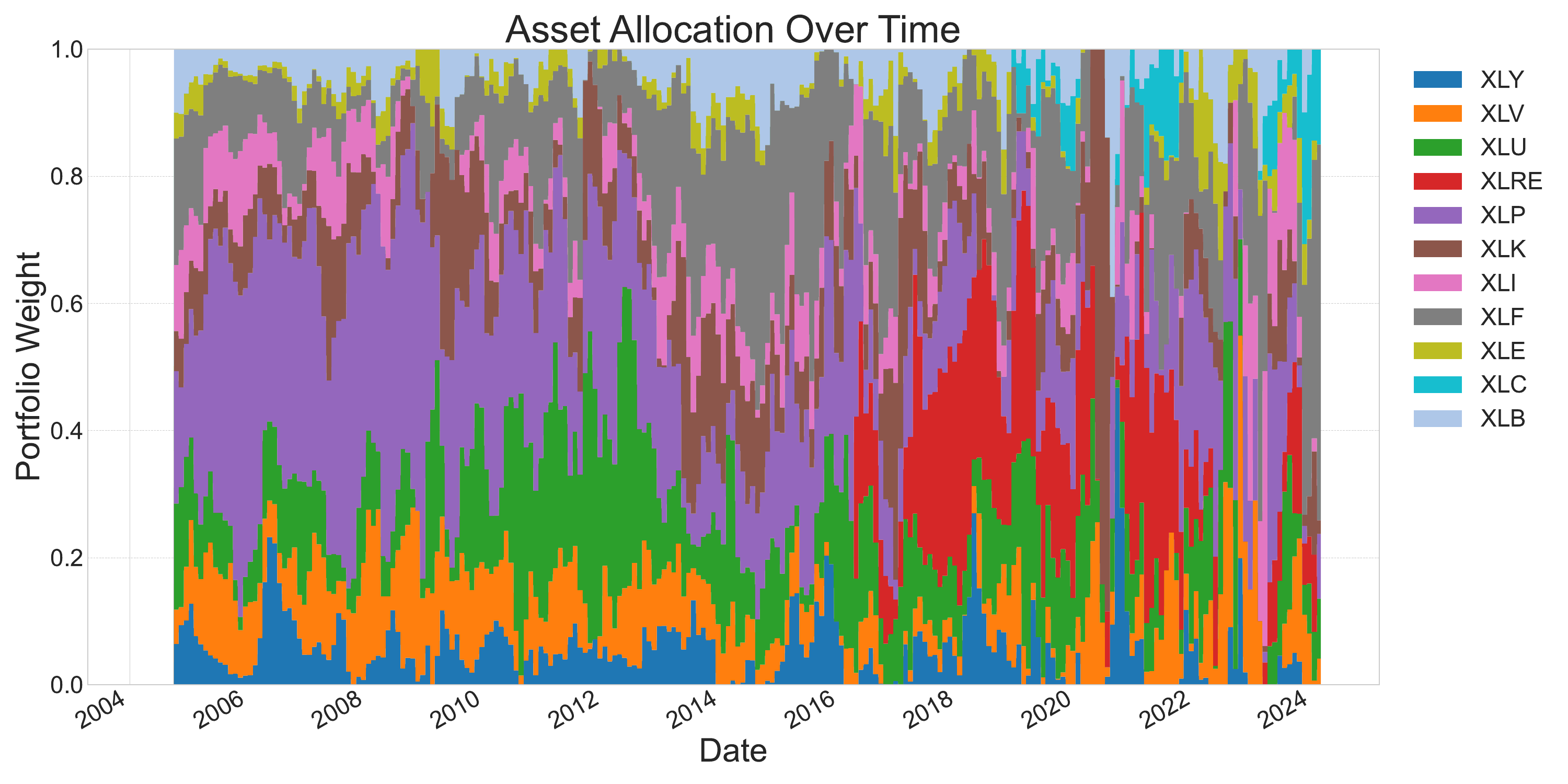}
    \caption{\small Asset Allocation of SLP-BL model (rolling window: 100 days) on SPDR Sectors ETFs Dataset.}
    \label{fig:spy_asset_allocation_plot}
\end{figure}

\subsection{{Turnover Rate Analysis}}
\begin{table}[H]
\centering
\resizebox{0.8\textwidth}{!}{
{\begin{tabular}{@{}lccccc@{}}
\toprule
 & \makecell{Average Turnover Rate (\%) \\ on Dow Jones Index Dataset} & \makecell{Average Turnover Rate (\%) \\ on SPDR Sector ETFs Dataset}\\ 
\midrule
MV (50d) & 65.48 & 61.75 \\
BL (50d) & 34.85\cellcolor{LightCyan} & 24.30\cellcolor{LightCyan} \\
\midrule
MV (80d) & 53.94 & 49.38 \\
BL (80d) & 25.62\cellcolor{LightCyan} & 19.56\cellcolor{LightCyan} \\
\midrule
MV (100d) & 47.79 & 48.73 \\
BL (100d) & 23.41\cellcolor{LightCyan} & 18.63\cellcolor{LightCyan} \\
\midrule
MV (120d) & 44.30 & 41.72 \\
BL (120d) & 20.89\cellcolor{LightCyan} & 18.09\cellcolor{LightCyan} \\
\midrule
MV (150d) & 39.70 & 38.84 \\
BL (150d) & 19.17\cellcolor{LightCyan} & 17.00\cellcolor{LightCyan} \\
\bottomrule
\end{tabular}
}}
\caption{\small Average Turnover Rate (\%) for SLP-BL and Markowitz model on the Dow Jones Index and SPDR Sector ETFs Datasets.}
\label{table:avg_turnover_rate}
\end{table}

\newpage

\begin{figure}[H]
    \centering
    \includegraphics[width=\linewidth]{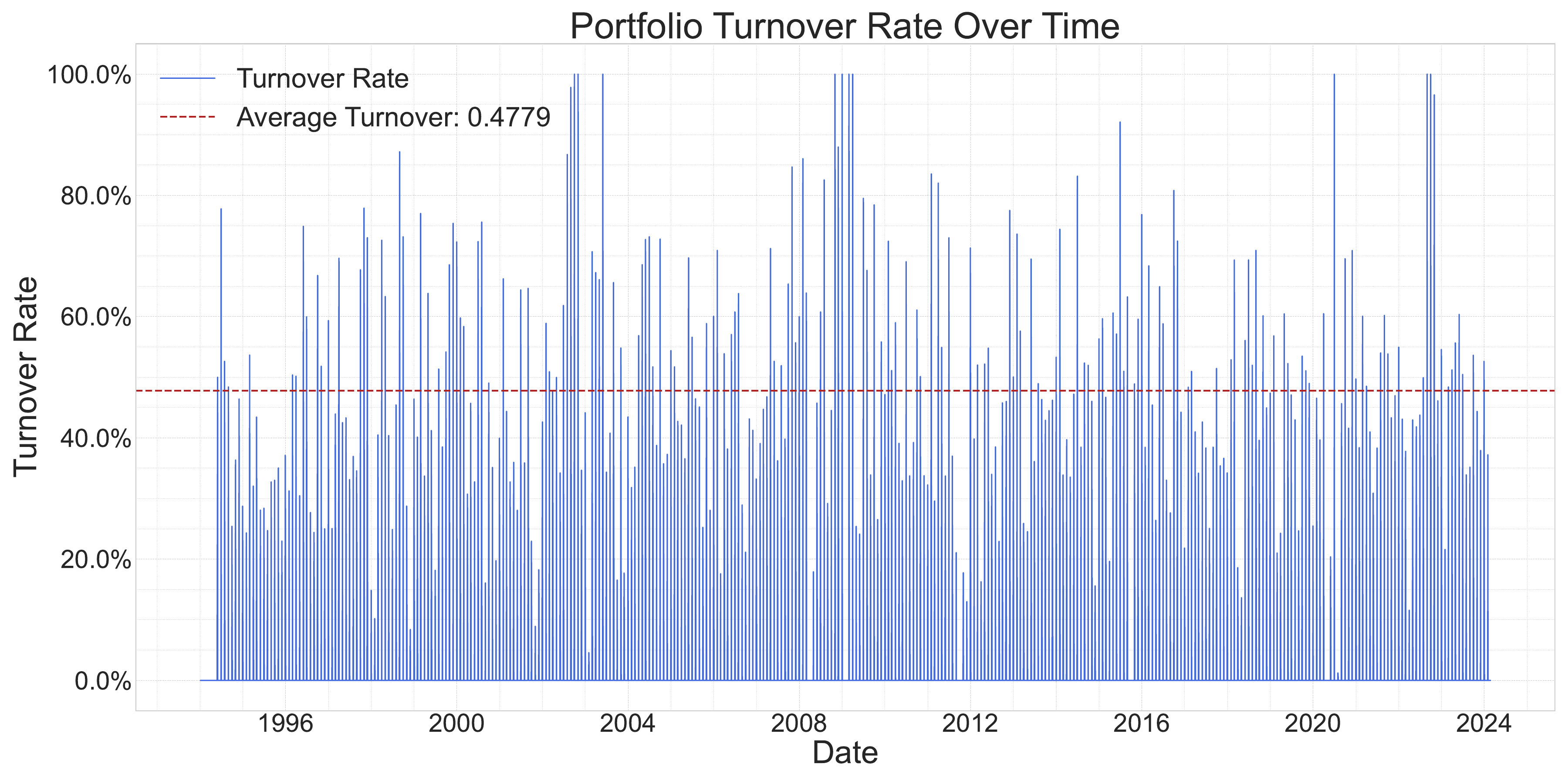}
    \caption{\small Turnover rate of traditional MV model (rolling window: 100 days) on Dow Jones Index Dataset.}
    \label{fig:mv_turnover_rate_plot}
\end{figure}

\begin{figure}[H]
    \centering
    \includegraphics[width=\linewidth]{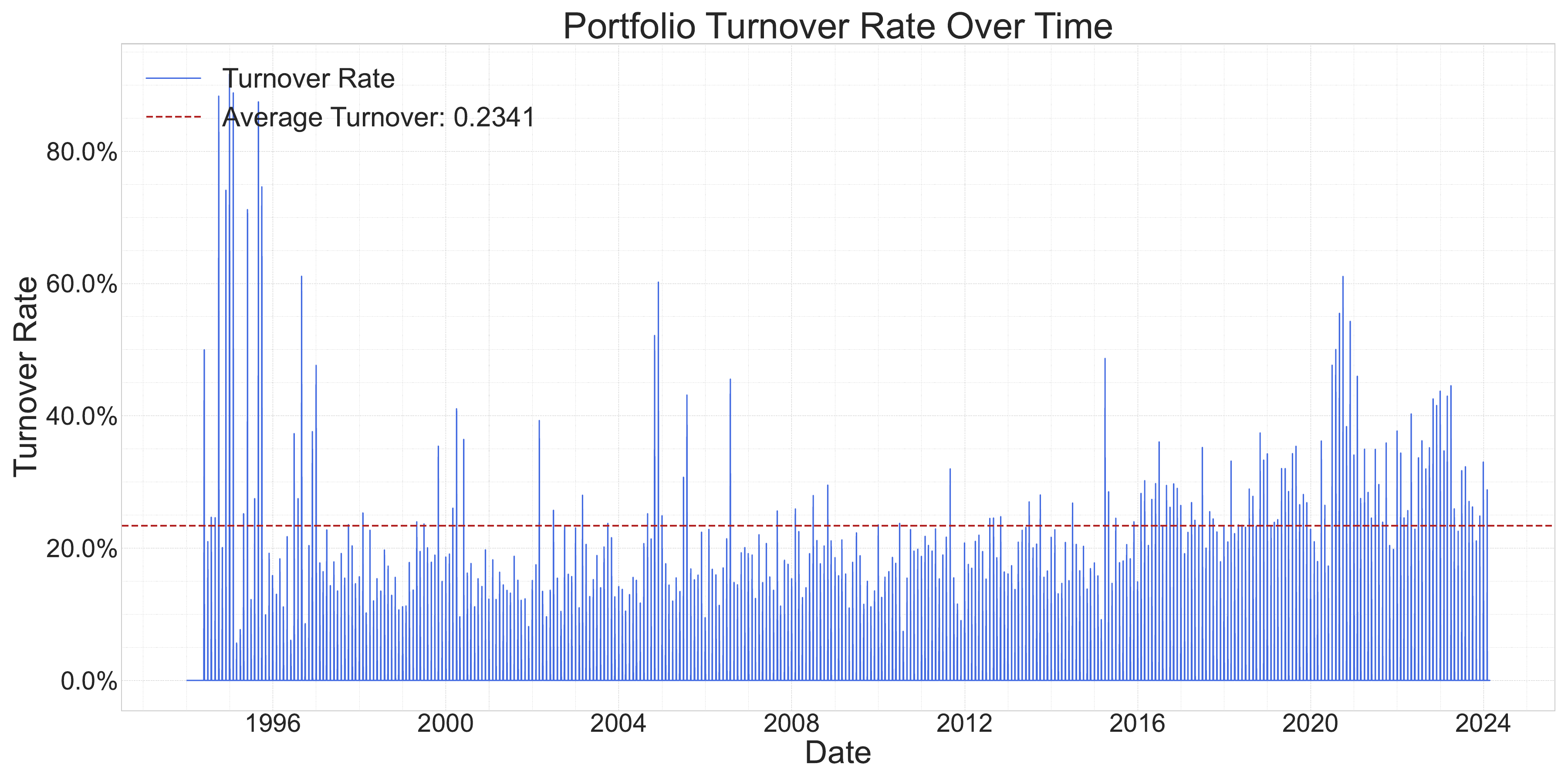}
    \caption{\small Turnover rate of SLP-BL model (rolling window: 100 days) on Dow Jones Index Dataset.}
    \label{fig:bl_turnover_rate_plot}
\end{figure}

\clearpage
\def\arxivfont{\rm}
\bibliographystyle{plainnat}

\bibliography{refs}

\end{document}